\def\@linkcolor{blue!60!black}
   \def\@anchorcolor{blue!60!black}
   \def\@citecolor{blue!60!black}
   \def\@filecolor{blue!60!black}
   \def\@urlcolor{blue!60!black}
   \def\@menucolor{blue!60!black}
   \def\@pagecolor{blue!60!black}
  \edef\x{%
    \edef\noexpand\x{%
      \endgroup
      \noexpand\toks@{%
        \catcode 96=\noexpand\the\catcode`\noexpand\`\relax
        \catcode 61=\noexpand\the\catcode`\noexpand\=\relax
      }%
    }%
    \noexpand\x
  }%
    \pgfplotsset{compat=1.13}
    \let\originalleft\left
    \let\originalright\right
    \renewcommand{\left}{\mathopen{}\mathclose\bgroup\originalleft}
    \renewcommand{\right}{\aftergroup\egroup\originalright}
    \newcounter{thm} 
    \newtheorem{theorem}[thm]{\indent Theorem}
    \newtheorem{assumption}{\indent Assumption}
    \newtheorem{proposition}{\indent Proposition}
    \newtheorem{lemma}{\indent Lemma}
    \newtheorem{remark}{\indent Remark}
    \newtheorem{corollary}{\indent Corollary}
    \newtheorem{definition}{\indent Definition}
    \newtheorem{example}{\indent Example}
    \newtheorem{fact}{\indent Fact}
    \newtheorem{conjecture}{\indent Conjecture}
    \newtheorem{experiment}{\indent Experiment}
    \newcommand{\bd}[0]{\mbox{bd }}
    \newlist{enumA}{enumerate}{1}
    \setlist[enumA,1]{label=(A\arabic*),leftmargin=1cm}
    \newlist{enumC}{enumerate}{1}
    \setlist[enumC,1]{label=(C\arabic*),leftmargin=1cm}
    		\newcommand\xqed[1]{%
      \leavevmode\unskip\penalty9999 \hbox{}\nobreak\hfill
      \quad\hbox{#1}}
    \newcommand\exampletriangle{\xqed{$\triangle$}}
    \newlength\figureheight 
    \newlength\figurewidth
    \DeclareMathAlphabet{\mathcal}{OMS}{cmsy}{m}{n} % Use the standard calligraphy font
    \crefname{equation}{}{}
    \newlist{enumalph}{enumerate}{1}
    \setlist[enumalph]{label=\textit{(\alph*)}}
\begin{document}

% \sptitle{Article Category}

\title{A Closed-Form Control for Safety Under Input Constraints Using a Composition of Control Barrier Functions} 

% \editor{This paper was recommended by Associate Editor F. A. Author.}

\author{PEDRAM RABIEE\affilmark{1}}

\author{JESSE B. HOAGG\affilmark{1}  (Senior Member, IEEE)}

\affil{Department of Mechanical and Aerospace Engineering, University of Kentucky, Lexington, KY 40506 USA} 

\corresp{CORRESPONDING AUTHOR: J. B. Hoagg (e-mail: %\href{mailto:jesse.hoagg@uky.edu}
{jesse.hoagg@uky.edu})}
\authornote{This work is supported in part by the National Science Foundation (1849213,1932105) and Air Force Office of Scientific Research (FA9550-20-1-0028).}

\markboth{
A CLOSED-FORM CONTROL FOR SAFETY UNDER INPUT CONSTRAINTS USING A COMPOSITION OF CONTROL BARRIER FUNCTIONS}{P. RABIEE {\itshape ET AL}.}

\begin{abstract}
We present a closed-form optimal control that satisfies both safety constraints (i.e., state constraints) and input constraints (e.g., actuator limits) using a composition of multiple control barrier functions (CBFs).
This main contribution is obtained through the combination of several ideas.
First, we present a method for constructing a single relaxed control barrier function (R-CBF) from multiple CBFs, which can have different relative degrees.
The construction relies on a log-sum-exponential soft-minimum function and yields an R-CBF whose zero-superlevel set is a subset of the intersection of the zero-superlevel sets of all CBFs used in the composition.
Next, we use the soft-minimum R-CBF to construct a closed-form control that is optimal with respect to a quadratic cost subject to the safety constraints.
Finally, we use the soft-minimum R-CBF to develop a closed-form optimal control that not only guarantees safety but also respects input constraints.
The key elements in developing this novel control include: the introduction of the control dynamics, which allow the input constraints to be transformed into controller-state constraints; the use of the soft-minimum R-CBF to compose multiple safety and input CBFs, which have different relative degrees; and the development of a desired surrogate control (i.e., a desired input to the control dynamics).
We demonstrate these new control approaches in simulation on a nonholonomic ground robot.
\end{abstract}

\begin{IEEEkeywords}
Autonomous systems, constrained control, nonlinear systems and control, optimal control.
\end{IEEEkeywords}

\maketitle

\section{Introduction}\label{sec:introduction}

Control barrier functions (CBFs) are used to determine controls that make a designated safe set forward invariant \cite{ames2016control,xu2015robustness}. 
Thus, CBFs can be used to generate controls that guarantee safety constraints (i.e., state constraints).
CBFs are often integrated into real-time optimization-based control methods (e.g., quadratic programs) as safety filters 
\cite{wabersich2022predictive,xu2017correctness,seiler2021control,breeden2023robust}.
They are also used in conjunction with stability constraints and/or performance objectives \cite{nguyen2016exponential, romdlony2016stabilization}. 
Related barrier functions are used for Lyapunov-like control design and analysis (e.g., \cite{prajna2007framework, panagou2015distributed, tee2009barrier, jin2018adaptive}).
CBF methods have been demonstrated in a variety of applications, including mobile robots \cite{nguyen2015safety, srinivasan2020control, jian2023dynamic, safari2023time}, unmanned aerial vehicles \cite{borrmann2015control, singletary2022onboard}, and autonomous vehicles \cite{ames2016control, seo2022safety, alan2023control}.

One important challenge in CBF methods is to verify a candidate CBF, that is, confirm that the candidate CBF satisfies the conditions to be a CBF \cite{wisniewski2015converse}.
For systems without input constraints (e.g., actuator limits), a candidate CBF can often be verified provided that it satisfies certain structural assumptions (e.g., constant relative degree) \cite{ames2016control}.
In contrast, verifying a candidate CBF under input constraints can be challenging, and this challenge is exacerbated if the safe set is described using multiple candidate CBFs.
It may be possible to use offline sum-of-squares optimization methods to verify a candidate CBF \cite{wang2018,clark2021,isaly2022feasibility,pond2022fast}.  
Alternatively, it may be possible to synthesize a CBF offline by griding the state space \cite{tan2022compatibility}.

An online approach to obtain forward invariance (e.g., state constraint satisfaction) subject to input constraints is to use a prediction of the system trajectory into the future under a backup control. 
For example,~\cite{gurrietScalableSafety2020,chen2020guaranteed} determine a control forward invariant subset of the safe set by using a finite-horizon prediction of the system under a backup control. 
However,~\cite{gurrietScalableSafety2020,chen2020guaranteed} require replacing an original barrier function that describes the safe set with multiple barrier functions---one for different time instants of the prediction horizon.
Thus, the number of barrier functions increases as the prediction horizon increases, which can lead to conservative constraints and result in a set of constraints that are not simultaneously feasible.
These drawbacks are addressed in \cite{rabiee2023softmin,rabiee2023softmax} by using a log-sum-exponential soft-minimum function to construct a single composite barrier function from the multiple barrier functions that arise from using a prediction horizon. 
In addition, \cite{rabiee2023softmax} uses a log-sum-exponential soft-maximum function to allow for multiple backup controls.
The use of multiple backups can enlarge the verified forward-invariant subset of the safe set. 
However, \cite{gurrietScalableSafety2020, chen2020guaranteed, rabiee2023softmin, rabiee2023softmax} all rely on a prediction of the system trajectories into the future.
Another approach to address safety subject to input constraints is presented in \cite{black2023consolidated}, which uses a composition of multiple CBFs, where the composition has adaptable weights. 
However, the feasibility of the update law for the weights is related to the feasibility of the original optimization problem subject to input constraints.
Smooth barrier function compositions (e.g., log-sum-exponential approximation of minimum and maximum) also appear in~\cite{safari2023time,lindemann2018control}, and nonsmooth compositions are in~\cite{glotfelter2017nonsmooth,glotfelter2019hybrid}.

This article presents a new approach to address forward invariance subject to input constraints.
Specifically, we use the soft-minimum function to combine multiple safety constraints (i.e., state constraints) and multiple input constraints (e.g., actuator limits) into a single constraint. 
This composition can yield a relaxed control barrier function (R-CBF) that satisfies the CBF criteria on the boundary of its zero-superlevel set but not necessarily on the interior.
This composite soft-minimum R-CBF is then used in a constrained quadratic optimization to generate a control that is optimal and satisfies both safety and input constraints. 
We present a closed-form control that satisfies the constrained quadratic optimization, thus eliminating the need to solve a quadratic program in real time.
Closed-form solutions to other CBF-based optimizations appear in~\cite{wieland2007constructive,ames2016control,cortez2022compatibility}.
We also note that model predictive control methods can be used to obtain closed-form controls for systems with linear dynamics and polytopic constraints on the state and input~\cite{borrelli2017predictive,bemporad2002model,tondel2003algorithm}. 
However, extending these methods to systems with nonlinear dynamics or general nonlinear constraints typically requires solving a nonlinear optimization problem in real time.
To our knowledge, this article is the first to present a closed-form CBF-based control that satisfies multiple safety constraints as well as multiple input constraints.

This new closed-form optimal control that satisfies both safety constraints and input constraints is obtained through the combination of several ideas.
First, \Cref{section:composite CBF} presents a method for using the soft-minimum function to construct a single R-CBF from multiple CBFs, where each CBF in the composition can have different relative degree. 
The zero-superlevel set of this R-CBF is a subset of the intersection of the zero-superlevel sets of all the CBFs used in the construction.
Next, \Cref{section:control} uses the soft-minimum R-CBF and the introduction of a slack variable to construct a closed-form optimal control that guarantees safety. 
The control is optimal with respect to a quadratic performance function subject to safety constraints (i.e., state constraints).
The method is demonstrated on a simulation of a nonholonomic ground robot subject to position and speed constraints, which do not have the same relative degree.

\Cref{section:input constraints} presents the main contribution of this article, namely, a closed-form optimal control that not only guarantees safety (i.e., state constraints) but also respects input constraints (e.g., actuator limits).
To do this, we introduce control dynamics where the control signal is an algebraic function of the controller state, and the input to the control dynamics (i.e., the surrogate control) is the closed-form solution to a constrained optimization.
The use of control dynamics allows us to express the input constraints as CBFs in the state of the controller.
Notably, these input-constraint CBFs do not have the same relative degree as the safety-constraint CBFs. 
However, this difficult is addressed using the composite soft-minimum R-CBF construction. 
Lastly, we introduce a desired surrogate control (i.e., the desired input to the control dynamics) in order to convert the original optimal control problem into a surrogate optimization, where the optimization variable is the surrogate control.
Other methods using control dynamics and CBFs include \cite{xiao2022control,ames2020integral}; however, neither of these address the general problem of generating an optimal control with multiple input constraints and multiple state constraints, which have potentially different relative degree.
For example, \cite{xiao2022control} is limited to one state constraint and constant upper-and-lower-bound input constraints; and it does not address the optimality of the original control input.
Most importantly, \cite{xiao2022control,ames2020integral} use multiple constraints that are not guaranteed to be simultaneously feasible. 
In contrast, this article presents a closed-form optimal control that addresses multiple state and input constraints with feasibility analysis.
We demonstrate this new control method in simulations of a nonholonomic ground robot subject to position constraints, speed constraints, and input constraints---none of which have the same relative degree. 
Some preliminary results related to this article appear in the preliminary conference article \cite{rabiee2023composition}; however, \cite{rabiee2023composition} does not provide a comprehensive analysis (e.g., no proofs are provided).
In addition, this article goes significantly beyond \cite{rabiee2023composition} by presenting the closed-form optimal-and-safe controls; analyzing these closed-form methods; relaxing assumptions; and providing more, discussion, examples, and simulations.

\section{Notation}

The interior, boundary, and closure of $\SA \subseteq \BBR^n$ are denoted by $\mbox{int }\SA$, $\bd\SA$, $\mbox{cl }\SA$, respectively.
Let $\mbox{conv }\SA$ denote the convex hull of $\SA \subset \BBR^n$.
Let $\BBP^{n}$ denote the set of symmetric positive-definite matrices in $\BBR^{n \times n}$.

Let $\eta:\BBR^n \to \BBR^\ell$ be continuously differentiable. 
Then, $\eta^\prime :\BBR^n \to \BBR^{\ell \times n}$ is defined as $\eta^\prime(x) \triangleq \pderiv{\eta(x)}{x}$. 
The Lie derivatives of $\eta$ along the vector fields of $\psi:\BBR^n \to \BBR^{n \times m}$ is defined as
\begin{equation*}
L_\psi \eta(x) \triangleq \eta^\prime(x) \psi(x).
\end{equation*}
If $m=1$, then for all positive integers $d$, define
\begin{equation*}
    L_\psi^d \eta(x) \triangleq L_\psi L_\psi^{d-1} \eta(x).
\end{equation*} 
Throughout this paper, we assume that all functions are sufficiently smooth such that all derivatives that we write exist and are continuous.

A continuous function $a \colon \BBR \to \BBR$ is an \textit{extended class-$\SK$ function} if it is strictly increasing and $a(0)=0$.

Let $\rho>0$, and consider $\mbox{softmin}_\rho : \BBR^N\to \BBR$ defined by 
\begin{equation}\label{eq:softmin}
\mbox{softmin}_\rho (z_1,\ldots,z_N) \triangleq -\frac{1}{\rho}\log\sum_{i=1}^Ne^{-\rho z_i},
\end{equation}
which is the log-sum-exponential \textit{soft minimum}. 
The next result relates the soft minimum to the minimum. 
The proof is in Appendix~\ref{app:input_constraint_propositions}.

\begin{proposition} \label{prop:softmin_bound}
\rm{
Let $z_1,\ldots, z_N \in \BBR$.
Then,
\begin{align*}
 \min \, \{ z_1,\ldots, z_N \} - \frac{\log N }{\rho} 
 &\le \mbox{softmin}_\rho(z_1,\ldots,z_N) \\
 &\le \min \, \{z_1,\ldots, z_N\}.
\end{align*}
}
\end{proposition}
Proposition~\ref{prop:softmin_bound} demonstrates that $\mbox{softmin}_\rho$ lower bounds the minimum, and converges to the minimum as $\rho \to \infty$. 
Thus, $\mbox{softmin}_\rho$ is a smooth approximation of the minimum. 
Note that if $N>1$, then the soft minimum is strictly less than the minimum.

\section{Problem Formulation} \label{section:problem_formulation}
Consider 
\begin{equation}\label{eq:dynamics}
\dot x(t) = f(x(t))+g(x(t)) u(x(t)),
\end{equation}
where $x(t) \in \BBR^{n}$ is the state, $x(0) = x_0 \in \BBR^n$ is the initial condition, $f:\BBR^n \to \BBR^n$ and $g:\BBR^n\to \BBR^{n\times m}$ are locally Lipschitz continuous on $\BBR^n$, and $u: \BBR^n \to \BBR^m$ is the control, which is locally Lipschitz continuous on $\BBR^n$.

Since $f$, $g$, and $u$ are locally Lipschitz, it follows that for all $x_0 \in \BBR^n$, there exists a maximum value $t_{\rm max}(x_0) \in [0,\infty)$ such that $x(t)$ is the unique solution to~\eqref{eq:dynamics} on $I(x_0) \triangleq [0, t_{\rm max}(x_0))$.

\subsection{Preliminary Definitions and Results}

The following definitions are needed. 
The first definition is similar to~\cite[Definition 4.4]{blanchini2008set}. 
The second definition is a standard definition of a CBF (for example, see~\cite[Definition 5]{ames2016control}), and the third definition is a relaxation of the standard CBF definition.

\begin{definition}\label{def:cont_fwd_inv}
\rm
A set $\SD \subset \BBR^n$ is \textit{control forward invariant} with respect to~\eqref{eq:dynamics} if there exists a locally Lipschitz $u_\rmi \colon \SD \to \BBR^m$ such that for all $x_0\in\SD$, the solution $x$ to \eqref{eq:dynamics} with $u=u_\rmi$ is such that for all $t\in I(x_0)$, $x(t) \in \SD$.
\end{definition}

\begin{definition}\label{def:CBF}
\rm
Let $\eta \colon \BBR^n \to \BBR$ be continuously differentiable, and define $\SD \triangleq \{ x \in \BBR^n \colon \eta(x) \ge 0 \}$.
Then, $\eta$ is a \textit{control barrier function} (CBF) for~\cref{eq:dynamics} on $\SD$ if there exists an extended class-$\SK$ function $a \colon \BBR \to \BBR$ such that for all $x \in \SD$,
    \begin{equation}
        \sup_{u\in \BBR^{m}} L_f \eta(x) + L_g \eta(x)u +a(\eta(x))\ge 0.\label{def:CBF.1}
    \end{equation} 
\end{definition}

\begin{definition}\label{def:RCBF}
\rm
Let $\eta \colon \BBR^n \to \BBR$ be continuously differentiable, and define $\SD \triangleq \{ x \in \BBR^n \colon \eta(x) \ge 0 \}$.
Then, $\eta$ is a \textit{relaxed control barrier function} (R-CBF) for~\cref{eq:dynamics} on $\SD$ if for all $x \in \mbox{bd }\SD$,
    \begin{equation}
        \sup_{u\in \BBR^{m}} L_f \eta(x) + L_g \eta(x)u \ge 0.\label{def:RCBF.1}
    \end{equation} 
\end{definition}

\Cref{def:RCBF} implies that an R-CBF need only satisfy \eqref{def:RCBF.1} on the boundary of its zero-superlevel.
In contrast, \Cref{def:CBF} implies that a CBF must satisfy \eqref{def:CBF.1} on the boundary as well as the interior of its zero-superlevel set.
Note that every CBF is an R-CBF. 
CBF definitions often take the supremum \eqref{def:CBF.1} over a subset of $\BBR^m$ based on control limits. 
Nevertheless, \Cref{def:CBF,def:RCBF} are adequate for this article because we do not require CBFs or R-CBFs where the input variable is constrained to a subset of $\BBR^m$ even though 
Section~\ref{section:input constraints} addresses control constraints.

The following result provides sufficient conditions for $\eta$ to be an R-CBF.
The result also provides sufficient conditions for the zero-superlevel set of $\eta$ to be control forward invariant.

\begin{lemma}\label{lem:rcbf}
\rm 
Let $\eta \colon \BBR^n \to \BBR$ be continuously differentiable, and define $\SD \triangleq \{ x \in \BBR^n \colon \eta(x) \ge 0 \}$. 
Assume that for all $x \in \mbox{bd }\SD$, if $L_f \eta(x) \le 0$, then $L_g \eta(x) \ne 0$. 
Then, the following statements hold:
\begin{enumalph}
\item \label{lem:rcbf.1}
$\eta$ is an R-CBF for~\eqref{eq:dynamics} on $\SD$. 
\item \label{lem:rcbf.2}
If $\eta^\prime$ is locally Lipschitz on $\SD$, then $\SD$ is control forward invariant with respect to~\eqref{eq:dynamics}.
\end{enumalph}
\end{lemma}

\begin{proof}

To prove \ref{lem:rcbf.1}, consider $u_\rmi \colon \SD \to \BBR^m$ defined by 
\begin{equation}\label{eq:u_i}
u_\rmi(x) \triangleq
\begin{cases}
    \frac{-L_f \eta(x) L_g \eta(x)^\rmT}{L_g \eta(x)L_g \eta(x)^\rmT+\eta(x)^2}, & L_f \eta(x) < 0,\\
     0, & L_f \eta(x) \ge 0.\\    
\end{cases}
\end{equation}
and it follows that for all $x \in \mbox{bd } \SD$, 
\begin{equation}\label{eq:eta_dot}
L_f \eta(x) + L_g \eta(x)u_\rmi(x) = 
\begin{cases}
    0, & L_f \eta(x) < 0,\\
     L_f \eta(x), & L_f \eta(x) \ge 0.\\    
\end{cases}
\end{equation}
Thus, \eqref{eq:eta_dot} implies that for all $x \in \mbox{bd } \SD$, $L_f \eta(x) + L_g \eta(x)u_\rmi(x) \ge 0$, and it follows from \Cref{def:RCBF} that $\eta$ is an R-CBF.

To prove \ref{lem:rcbf.2}, since $L_g \eta(x) \ne 0$ for all $x \in \{ a \in \mbox{bd } \SD \colon L_f \eta(a) \le 0 \}$, it follows from \eqref{eq:u_i} that $u_\rmi$ is continuous on $\SD$. 
Since, in addition, $f$, $g$, and $\eta^\prime$ are locally Lipschitz on $\SD$, it follows from \eqref{eq:u_i} that $u_\rmi$ is locally Lipschitz on $\SD$. 
Next, consider~\eqref{eq:dynamics} with $u = u_\rmi$ and $x_0 \in \SD$. 
Since for all $x \in \mbox{bd } \SD$, $L_f \eta(x) + L_g \eta(x)u_\rmi(x) \ge 0$, it follows from Nagumo's Theorem~\cite[Theorem 4.7]{blanchini2008set} that for all $t \in I(x_0)$, $x(t) \in \SD$.
Thus, \Cref{def:cont_fwd_inv} implies that $\SD$ is control forward invariant.
\end{proof}

\subsection{Control Objective}

Let  $h_1,h_2,\ldots,h_\ell \colon \BBR^n \to \BBR$ be continuously differentiable, and for all $j \in \{ 1,2,\ldots,\ell\}$, define 
\begin{equation}
    \SC_{j,0} \triangleq \{ x \in \BBR^n \colon h_j(x)\geq 0 \}. 
\end{equation}
The \textit{safe set} is
\begin{equation}\label{eq:S_s}
\SSS_\rms \triangleq {\bigcap_{j=1}^\ell} \SC_{j,0}.
\end{equation}
Unless otherwise stated, all statements in this paper that involve the subscript $j$ are for all $j \in \{ 1,2,\ldots,\ell\}$.
We make the following assumption:
\begin{enumA}

\item There exists a positive integer $d_j$ such that for all $x \in \BBR^n$ and all $i \in \{ 0,1,\ldots,d_j-2\}$, $L_g L_f^ih_j(x) = 0$; and for all $x \in \SSS_\rms$, $L_g L_f^{d_j-1}h_j(x) \neq 0$. \label{cond:hocbf.a}

\end{enumA}

Assumption~\ref{cond:hocbf.a} implies $h_j$ has well-defined relative degree $d_j$ with respect to~\eqref{eq:dynamics} on $\SSS_\rms$; however, relative degrees $d_1,\ldots,d_\ell$ need not be equal.
Assumption~\ref{cond:hocbf.a} also implies that $h_j$ is a relative-degree-$d_j$ CBF. 
However, we do not assume knowledge of a CBF for the safe set $\SSS_\rms$.
Section~\ref{section:composite CBF} presents a method for constructing a single composite R-CBF from $h_1,\ldots,h_\ell$, which can have different relative degrees.

Consider the cost function $J \colon \BBR^n \times \BBR^m\to \BBR$ defined by
\begin{equation}\label{eq:quad_cost}
    J(x, u) \triangleq \frac{1}{2}u^\rmT Q(x) u + c(x)^\rmT u,
\end{equation}
where $Q:\BBR^n \to \BBP^{m}$ and $c:\BBR^n\to \BBR^m$ are locally Lipschitz continuous on $\BBR^n$.

The control objective is to design a full-state feedback control $u:\BBR^n \to \BBR^m$ such that for all $t \in I(x_0)$, $J(x(t), u(x(t))$ is minimized subject to the safety constraint that $x(t) \in \SSS_\rms$.
Section~\ref{section:control} presents a closed-form control that satisfies these objectives. 
Then, Section~\ref{section:input constraints} presents a closed-form control that satisfies these objectives and satisfies control input constraints.

A special case of this optimal control problem is the minimum-intervention problem, where we consider a desired control $u_\rmd : \BBR^n \to \BBR^m$ that is designed to satisfy performance requirements but may not account for safety. 
In this case, the objective is to design a feedback control such that the minimum-intervention cost $\| u - u_\rmd(x(t)) \|_2^2$ is minimized subject to the safety constraint and potentially subject to control input constraints.  
The minimum-intervention cost $\| u - u_\rmd(x) \|_2^2$ is minimized by $u_\rmd(x)$, which is equal to the minimizer of \eqref{eq:quad_cost} with $Q(x) = I_m$ and $c(x) = -u_\rmd(x)$. 
Thus, the minimum-intervention problem is addressed by letting $Q(x) = I_m$ and $c(x) = -u_\rmd(x)$.

Although the cost \eqref{eq:quad_cost} is quadratic in $u$, nonquadratic cost functions can often be approximated locally by quadratic forms. 
This quadratic approximation technique is commonly used in differential dynamic programming (e.g., \cite{tassa2014control}). 
Thus, the method in this paper can be implemented effectively in some cases where the cost is not quadratic in $u$.

\section{Composite Soft-Minimum R-CBF}
\label{section:composite CBF}

This section presents a method for constructing a single composite R-CBF from multiple CBFs (i.e., $h_1,\ldots,h_\ell$), which can have different relative degrees.

Let $b_{j,0}(x) \triangleq h_j(x)$. 
For $i\in\{0, 1\ldots,d_j-2\}$, let $\alpha_{j,i} \colon \BBR \to \BBR$ be a locally Lipschitz extended class-$\SK$ function, and consider $b_{j, i+1}:\BBR^n \to \BBR$ defined by
\begin{equation}\label{eq:b_def}
    b_{j,i+1}(x) \triangleq L_f b_{j,i}(x) + \alpha_{j,i}(b_{j,i}(x)).
\end{equation}
For $i \in \{1\ldots,d_j-1\}$, define
\begin{equation}\label{eq:C_j_i_def}
    \SC_{j,i} \triangleq \{x\in \BBR^n: b_{j,i}(x) \ge 0\}.
\end{equation}
Next, define
\begin{gather}  \label{eq:C_j_def}
    \SC_j \triangleq 
        \begin{cases}
        \SC_{j,0}, & d_j = 1,\\
    \bigcap_{i=0}^{d_j - 2} \SC_{j,i}, & d_j > 1,
    \end{cases}
\end{gather}
and 
\begin{equation}  
    \SC \triangleq \bigcap_{j=1}^{\ell} \SC_j. \label{eq:C_def}
\end{equation}
Note that $\SC \subseteq \SSS_\rms$.
In addition, note that if $d_1,\ldots,d_\ell \in \{1,2\}$, then $\SC = \SSS_\rms$.

The next result is from \cite[Proposition 1]{tan2021high} and provides a sufficient condition such that $\SC_j$ is forward invariant.

\begin{lemma}\label{lemma:fwd_inv}
\rm
Consider~\eqref{eq:dynamics}, where \ref{cond:hocbf.a} is  satisfied.
Let $j \in\{1,\ldots,\ell\}$.
Assume $x_0 \in \SC_{j}$, and assume for all $t\in I(x_0)$, $b_{j, d_j-1}(x(t))\ge 0$. 
Then, for all $t \in I(x_0)$, $x(t) \in \SC_{j}$.
\end{lemma}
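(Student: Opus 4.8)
The plan is to reduce the statement to a cascade of scalar differential inequalities along the closed-loop solution and then close the argument with a comparison lemma. The case $d_j = 1$ is immediate: here $\SC_j = \SC_{j,0}$ and $b_{j,d_j-1} = b_{j,0} = h_j$, so the hypothesis $b_{j,d_j-1}(x(t)) \ge 0$ is exactly the conclusion $x(t) \in \SC_{j,0} = \SC_j$. Henceforth assume $d_j \ge 2$.

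The key structural step is to establish that, along the solution $x$ to \eqref{eq:dynamics}, for each $i \in \{0,\ldots,d_j-2\}$,
\begin{equation*}
\dot b_{j,i}(x(t)) = b_{j,i+1}(x(t)) - \alpha_{j,i}\bigl(b_{j,i}(x(t))\bigr).
\end{equation*}
To get this, I would first show by induction on $i$ that $b_{j,i}$ depends on the state only through the Lie derivatives $h_j, L_f h_j, \ldots, L_f^i h_j$: indeed $b_{j,0} = h_j$, and if $b_{j,i}$ is a smooth function of $(h_j,\ldots,L_f^i h_j)$, then by \eqref{eq:b_def} so is $b_{j,i+1} = L_f b_{j,i} + \alpha_{j,i}(b_{j,i})$, with the differentiation $L_f$ raising the top order by exactly one. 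The chain rule then expresses $L_g b_{j,i}$ as a linear combination of $L_g h_j, L_g L_f h_j, \ldots, L_g L_f^i h_j$, each of which vanishes when $i \le d_j - 2$ by \ref{cond:hocbf.a}, so $L_g b_{j,i} \equiv 0$ for every such $i$. Because $\dot b_{j,i} = L_f b_{j,i} + (L_g b_{j,i})\,u$ along \eqref{eq:dynamics} and $L_f b_{j,i} = b_{j,i+1} - \alpha_{j,i}(b_{j,i})$ by \eqref{eq:b_def}, the displayed cascade relation follows.

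With the cascade in hand, I would run a downward induction on $i$, from $i = d_j - 2$ down to $i = 0$, proving $b_{j,i}(x(t)) \ge 0$ for all $t \in I(x_0)$. For the base case $i = d_j - 2$, the hypothesis gives $b_{j,d_j-1}(x(t)) \ge 0$, so the cascade relation yields $\dot b_{j,d_j-2}(x(t)) \ge -\alpha_{j,d_j-2}(b_{j,d_j-2}(x(t)))$; since $x_0 \in \SC_j$ gives $b_{j,d_j-2}(x_0) \ge 0$ and $\alpha_{j,d_j-2}$ is a locally Lipschitz extended class-$\SK$ function (so $\alpha_{j,d_j-2}(0) = 0$ and the comparison system $\dot w = -\alpha_{j,d_j-2}(w)$ keeps $w \ge 0$), the comparison lemma gives $b_{j,d_j-2}(x(t)) \ge 0$. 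The inductive step is identical: once $b_{j,i+1}(x(t)) \ge 0$ is known, the same comparison argument applied to the cascade relation with $b_{j,i}(x_0) \ge 0$ yields $b_{j,i}(x(t)) \ge 0$. Reaching $i = 0$ shows $x(t) \in \SC_{j,i}$ for every $i \in \{0,\ldots,d_j-2\}$, i.e., $x(t) \in \SC_j$.

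I expect the main obstacle to be the bookkeeping that makes the cascade relation valid along the trajectory: \ref{cond:hocbf.a} guarantees the vanishing of $L_g L_f^k h_j$ only on $\SSS_\rms$, whereas this lemma controls only the single chain $\SC_j$, which is in general not contained in $\SSS_\rms$, so some care is needed to justify $L_g b_{j,i} \equiv 0$ along $x$ (e.g., restricting to the maximal subinterval on which $x$ remains in the region where \ref{cond:hocbf.a} applies, or reading \ref{cond:hocbf.a} as holding on a neighborhood of $\SC$). The only other delicate point is the non-strict comparison step---ruling out that $b_{j,i}(x(t))$ crosses from $0$ into negative values---which follows from $\alpha_{j,i}(0) = 0$ together with local Lipschitzness, hence uniqueness, of the comparison ODE at the origin.
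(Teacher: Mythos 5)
Your proof is correct and is essentially the intended one: the paper provides no proof of this lemma at all, importing it verbatim from \cite[Proposition 1]{tan2021high}, and the argument there is precisely your two-step scheme --- first establish $L_g b_{j,i} \equiv 0$ for $i \le d_j - 2$ (via your observation that $b_{j,i}$ depends on the state only through $h_j, L_f h_j, \ldots, L_f^i h_j$), so that $\dot b_{j,i} = b_{j,i+1} - \alpha_{j,i}(b_{j,i})$ holds along the closed loop, and then propagate nonnegativity downward by the comparison lemma from $i = d_j - 1$ to $i = 0$, using $\alpha_{j,i}(0)=0$ and local Lipschitzness to rule out crossings. The domain subtlety you flag (that \ref{cond:hocbf.a} yields the vanishing Lie derivatives only on $\SSS_\rms$, which need not contain $\SC_j$) is genuine, but it is inherited from the paper's formulation rather than introduced by your argument: the cited source, and the paper's own uses of the lemma in Proposition~\ref{prop:cont_fwd_S_C} and Theorem~\ref{thm:smocbf}, implicitly treat the relative-degree conditions as valid wherever the trajectory evolves, which is exactly the reading you propose as a fix.
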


Lemma~\ref{lemma:fwd_inv} implies that if $x_0 \in \SC$ and for all $j \in\{1,\ldots,\ell\}$ and all $t\in I(x_0)$, $b_{j, d_j-1}(x(t))\ge 0$, then for all $t\in I(x_0)$, $x(t) \in \SC \subseteq \SSS_\rms$.
This motivates us to consider a candidate R-CBF whose zero-superlevel set approximates the intersection of the zero-superlevel sets of $b_{1, d_1-1},\ldots,b_{\ell, d_\ell-1}$. 
Specifically, let $\rho > 0$, and consider the candidate R-CBF $h:\BBR^n \to \BBR$ defined by
\begin{equation}\label{eq:h_def}
    h(x) \triangleq \mbox{softmin}_\rho \Big ( b_{1, d_1-1}(x), b_{2, d_2-1}(x), \ldots,b_{\ell, d_\ell-1}(x) \Big ). 
\end{equation}
The zero-superlevel set of $h$ is
\begin{equation}\label{eq:defS_softmin}
\SH \triangleq \{ x \in \BBR^n \colon h(x) \ge 0 \}.
\end{equation}
The next result is the immediate consequence of Proposition~\ref{prop:softmin_bound} and demonstrates that $\SH$ is a subset of the intersection of the zero-superlevel sets of $b_{1, d_1-1},\ldots, b_{\ell, d_\ell-1}$.

\begin{proposition}\label{prop:S}
$\SH \subseteq \bigcap_{j=1}^{\ell} \SC_{j,d_j-1}$.
%\{x\in\BBR^n \mspace{-2mu} : \mspace{-2mu} b_{1, d_1-1}(x)>0,\ldots, b_{\ell, d_\ell-1}(x)>0\}$.    
\end{proposition}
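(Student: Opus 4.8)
The plan is to show that any point in $\SSS$ lies in every $\SC_{j,d_j-1}$, which by the definition in~\eqref{eq:C_j_i_def} amounts to showing that $x \in \SSS$ forces $b_{j,d_j-1}(x) \ge 0$ for each $j \in \{1,\ldots,\ell\}$. So I would start by fixing an arbitrary $x \in \SSS$, so that by~\eqref{eq:defS_softmin} we have $h(x) \ge 0$, and then fix an arbitrary index $j$.

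The key step is to invoke the upper bound in Fact~\ref{fact:softmin_bound} applied to the arguments $z_i = b_{i,d_i-1}(x)$. That bound gives $h(x) = \mbox{softmin}_\rho(b_{1,d_1-1}(x),\ldots,b_{\ell,d_\ell-1}(x)) < \min\{b_{1,d_1-1}(x),\ldots,b_{\ell,d_\ell-1}(x)\}$. Chaining this with $h(x) \ge 0$ yields $0 \le h(x) < \min\{b_{1,d_1-1}(x),\ldots,b_{\ell,d_\ell-1}(x)\} \le b_{j,d_j-1}(x)$, where the last inequality is just that the minimum is at most any individual term. Hence $b_{j,d_j-1}(x) > 0 \ge 0$, so $x \in \SC_{j,d_j-1}$. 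Since $j$ was arbitrary, $x \in \bigcap_{j=1}^{\ell}\SC_{j,d_j-1}$, and since $x$ was arbitrary, $\SSS \subseteq \bigcap_{j=1}^{\ell}\SC_{j,d_j-1}$.

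There is essentially no obstacle here; the result is the ``immediate consequence'' the text advertises, and the only subtlety worth flagging is the strictness of the inclusion. The proposition claims a strict subset ($\subset$), and the chain above indeed delivers the strict inequality $h(x) < b_{j,d_j-1}(x)$, which shows $\SSS$ omits the boundary configurations where some $b_{j,d_j-1}(x)=0$ while $h(x)=0$ cannot occur; in fact the strict upper bound in Fact~\ref{fact:softmin_bound} guarantees that the soft minimum never attains the true minimum, so no point with $h(x) \ge 0$ can have any $b_{j,d_j-1}(x)=0$. I would therefore present the argument in one short display establishing the inequality chain and conclude directly, taking care to note that the strictness of the set inclusion is inherited from the strict inequality $\mbox{softmin}_\rho < \min$ rather than requiring any separate argument.
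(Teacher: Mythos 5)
Your containment argument is correct and is exactly the paper's route: the paper supplies no separate proof, presenting the proposition as an immediate consequence of Fact~\ref{fact:softmin_bound}, and your chain $0 \le h(x) < \min\{b_{1,d_1-1}(x),\ldots,b_{\ell,d_\ell-1}(x)\} \le b_{j,d_j-1}(x)$, together with \eqref{eq:defS_softmin} and \eqref{eq:C_j_i_def}, is that consequence written out.

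The problem is your closing claim that the strictness of the inclusion ``is inherited from the strict inequality $\mbox{softmin}_\rho < \min$ rather than requiring any separate argument.'' That is a non sequitur. The pointwise strict inequality shows that $\SSS$ \emph{excludes} every point at which some $b_{j,d_j-1}$ vanishes; it does not show that any such point, or any other point of $\bigcap_{j=1}^{\ell}\SC_{j,d_j-1}$ lying outside $\SSS$, actually exists, and properness of an inclusion requires exhibiting such a witness. Nothing in your argument rules out the degenerate situation in which $\min_j b_{j,d_j-1}(x) > (\log \ell)/\rho$ at every point of the intersection; in that case the \emph{lower} bound in Fact~\ref{fact:softmin_bound} forces $h(x) \ge \min_j b_{j,d_j-1}(x) - (\log\ell)/\rho > 0$ on the whole intersection, so the two sets coincide even though the strict inequality $\mbox{softmin}_\rho < \min$ holds everywhere. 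A correct way to obtain properness is to assume (or verify) that there exists a point $x$ with $\min_j b_{j,d_j-1}(x) = 0$: that point lies in the intersection, while your own chain gives $h(x) < 0$, so it is not in $\SSS$. In short, if the paper's $\subset$ is read as mere containment---which its prose (``is a subset of'') suggests---your proof is complete once the strictness remark is deleted; if $\subset$ is read as proper inclusion, the properness needs a separate witness argument of the kind above.
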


Proposition~\ref{prop:softmin_bound} also implies that $\SH$ approximates the intersection of the zero-superlevel sets of $b_{1, d_1-1},\ldots, b_{\ell, d_\ell-1}$ in the sense that as $\rho \to \infty$, $\SH \to \bigcap_{j=1}^{\ell} \SC_{j,d_j-1}$. 
In other words, Proposition~\ref{prop:softmin_bound} shows that for sufficiently large $\rho>0$, $h$ is a smooth approximation of $\min \, \{ b_{1, d_1-1}, \ldots,b_{\ell, d_\ell-1}\}$. 
If $\rho >0$ is small, then $h$ is a conservative approximation of $\min \, \{ b_{1, d_1-1}, \ldots,b_{\ell, d_\ell-1}\}$. 
However, if $\rho >0$ is large, then $\| h^\prime(x) \|_2$ is large at points where $\min \, \{ b_{1, d_1-1}, \ldots,b_{\ell, d_\ell-1}\}$ is not differentiable.
Thus, selecting $\rho$ is a trade-off between the conservativeness of $h$ and the size of $\| h^\prime(x) \|_2$.

\begin{figure*}[ht!]
\includegraphics[width=\textwidth,clip=true,trim= 0.0in 0.0in 0in 0.0in] {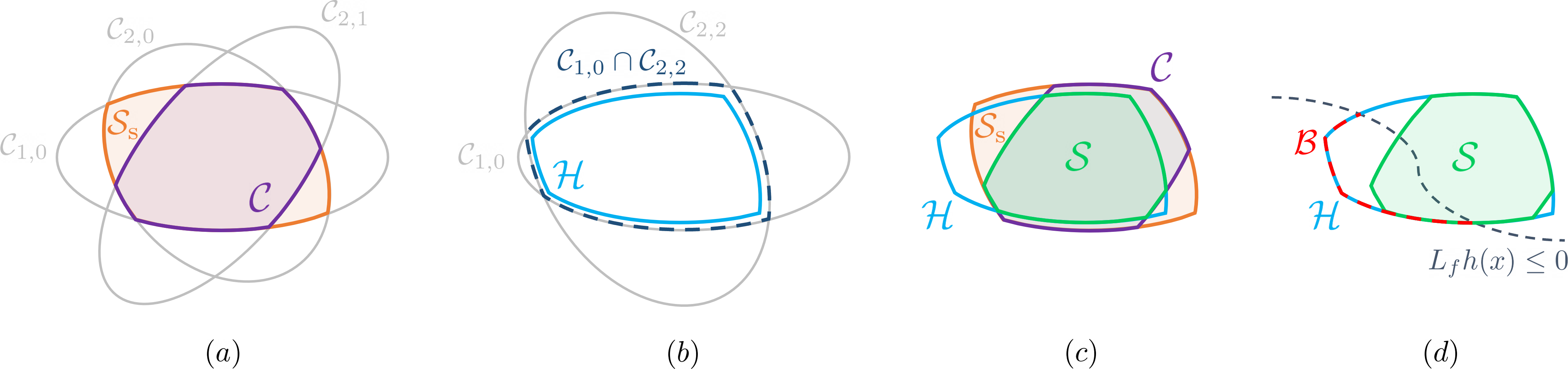}
\caption{Illustration of the relationships between $\SSS_\rms$, $\SC$, $\SH$, and $\SSS$ for $\ell = 2$ with $d_1 = 1$ and $d_2 = 3$. 
(a) shows $\SSS_\rms = \SC_{1,0} \cap \SC_{2,0}$ and $\SC = \SC_1 \cap \SC_2 = \SC_{1,0}\cap \SC_{2,0} \cap \SC_{2,1}$. 
(b) shows $\SC_{1,0} \cap \SC_{2,2}$ and $\SH\subset \SC_{1,0} \cap \SC_{2,2}$. 
(c) shows $\SSS_\rms$, $\SC$, $\SH$, and $\SSS \triangleq \SC \cap \SH$. 
(d) shows $\SH$, $\SSS$ and $\SB \triangleq \{ x \in \bd \SH \colon L_fh(x) \le 0 \}$. 
} \label{fig:set_diag}
\end{figure*}

Note that $\SH$ is not generally a subset of $\SSS_\rms$ or $\SC$. 
In the special case where $d_1=\cdots=d_\ell = 1$, it follows that $\SH \subseteq \SC = \SSS_\rms$, and as $\rho \to \infty$, $\SH \to \SSS_\rms$.

Next, we define 
\begin{equation}\label{eq:S}
    \SSS \triangleq \SH \cap \SC,
\end{equation}
and since $\SC \subseteq \SSS_\rms$, it follows that $\SSS \subseteq \SSS_\rms$.

To illustrate the relationship between $\SSS_\rms$, $\SC$, $\SH$, and $\SSS$, we consider the case where $\ell =2$, $d_1=1$, and $d_2 = 3$. 
In this case, $b_{1,0} = h_1$, $b_{2,0} = h_2$, and $b_{2,1}$ and $b_{2,2}$ are defined by~\cref{eq:b_def}.
First, it follows from~\cref{eq:S_s} that $\SSS_\rms = \SC_{1,0}\cap \SC_{2,0}$. 
Next, it follows from~\cref{eq:C_j_i_def,eq:C_j_def,eq:C_def} that $\SC = \SC_1 \cap \SC_2 = \SC_{1,0} \cap \SC_{2,0} \cap \SC_{2,1}$.
Thus, $\SC \subset \SSS_\rms$ as shown in \Cref{fig:set_diag}(\textit{a}).
For this example, \cref{eq:h_def} implies that $h(x) = \mbox{softmin}_\rho(b_{1,0}(x), b_{2,2}(x))$, and it follows from \Cref{prop:S} that $\SH \subset \SC_{1,0} \cap \SC_{2,2}$ as shown in \Cref{fig:set_diag}(\textit{b}).
Finally, as shown in \Cref{fig:set_diag}(\textit{c}), $\SH$ is not necessarily a subset of $\SSS_\rms$, but $\SSS = \SH \cap \SC$ is a subset of $\SSS_\rms$.

Next, define
\begin{align}
    \SB &\triangleq \{ x \in \bd \SH \colon L_fh(x) \le 0 \}\nn\\
    &=\{ x \in \BBR^n \colon L_fh(x) \le 0 \mbox{ and } h(x) =0 \}, \label{eq:SB}
\end{align}
which is shown in \Cref{fig:set_diag}(\textit{d}).
\Cref{lem:rcbf} implies that if $L_gh$ is nonzero on $\SB$, then $h$ is an R-CBF.
\Cref{lem:rcbf} also implies that if $L_gh$ is nonzero on $\SB$ and $h^\prime$ is locally Lipschitz, then $\SH$ is control forward invariant.
The next result shows that in this case not only is $\SH$ control forward invariant but $\SSS$ is also control forward invariant.
This fact is significant because $\SSS$ is a subset of the safe set $\SSS_\rms$, whereas $\SH$ is not necessarily a subset of $\SSS_\rms$.

\begin{proposition}\label{prop:cont_fwd_S_C}
\rm{
Consider~\eqref{eq:dynamics}, where \ref{cond:hocbf.a} is satisfied. 
Assume that $h^\prime$ is locally Lipschitz on $\SH$, and for all $x \in \SB$, $L_gh(x) \ne 0$.
Then, $\SSS$ is control forward invariant.
}
\end{proposition}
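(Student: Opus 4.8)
The plan is to split control forward invariance of $\SSS \cap \SC$ into two pieces: first, exhibit a single locally Lipschitz feedback $u_\rmi$ that renders $\SSS$ forward invariant; second, show that this same feedback automatically keeps trajectories in $\SC$, so that it in fact renders $\SSS \cap \SC$ forward invariant. The second piece is where \Cref{prop:S} and \Cref{lemma:fwd_inv} do the heavy lifting, so essentially all of the analytic work lives in the first piece.

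For the first piece, the hypotheses are exactly what is needed for $h$ to be a control barrier function for $\SSS$, as noted just before the statement: at every $x \in \mbox{bd } \SSS$ we either have $L_gh(x) \ne 0$, or else $x \notin \SB$ and hence $L_fh(x) > 0$, so in both cases there is a $u$ with $\dot h(x) = L_fh(x) + L_gh(x)u \ge 0$. Fixing an extended class-$\SK$ function $\alpha$, I would take $u_\rmi$ to be the locally Lipschitz minimizer of the quadratic cost $J(x,\cdot)$ subject to the constraint $L_fh(x) + L_gh(x)u + \alpha(h(x)) \ge 0$, i.e. the closed-form safety filter developed in \Cref{section:control}. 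Local Lipschitz continuity of $u_\rmi$ follows from the local Lipschitzness of $h^\prime$ on $\SSS$ together with $L_gh \ne 0$ on $\SB$, and this guarantees that the closed loop has unique solutions. Along any solution with $x_0 \in \SSS$, the constraint gives $\dot h(x(t)) \ge -\alpha(h(x(t)))$, so the comparison lemma applied to $\dot y = -\alpha(y)$ with $y(0) = h(x_0) \ge 0$ yields $h(x(t)) \ge 0$, that is, $x(t) \in \SSS$, for all $t \in I(x_0)$.

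For the second piece, take $x_0 \in \SSS \cap \SC$ and apply $u_\rmi$. By the first piece $x(t) \in \SSS$ for all $t \in I(x_0)$, and \Cref{prop:S} gives $\SSS \subset \bigcap_{j=1}^{\ell}\SC_{j,d_j-1}$, so $b_{j,d_j-1}(x(t)) \ge 0$ for every $j$ and every $t \in I(x_0)$. Since $x_0 \in \SC \subseteq \SC_j$, \Cref{lemma:fwd_inv} then promotes these sign conditions into membership, giving $x(t) \in \SC_j$ for all $t$ and all $j$, hence $x(t) \in \bigcap_{j=1}^{\ell}\SC_j = \SC$. Combining with $x(t) \in \SSS$ gives $x(t) \in \SSS \cap \SC$ for all $t \in I(x_0)$. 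Because $u_\rmi$ is locally Lipschitz and does not depend on $x_0$, its restriction to $\SSS \cap \SC$ furnishes the map required by \Cref{def:cont_fwd_inv}, establishing control forward invariance of $\SSS \cap \SC$.

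The main obstacle is entirely in the first piece: constructing $u_\rmi$ so that it is simultaneously feasible and locally Lipschitz. Feasibility of the CBF constraint is only immediate on $\mbox{bd }\SSS$ from the $\SB$ condition, so one must either choose $\alpha$ compatibly or argue the $\SSS$-invariance directly from the boundary (tangentiality) behavior; and local Lipschitz continuity near points where $L_gh$ is small requires the regularity of $h^\prime$ and some care in the active-constraint analysis. By contrast, the coupling to $\SC$ is essentially automatic once $\SSS$-invariance is in hand. The point worth emphasizing is that $\SSS \not\subseteq \SC$ in general, which is precisely why both earlier results are needed: \Cref{prop:S} controls $b_{1,d_1-1},\ldots,b_{\ell,d_\ell-1}$ along the trajectory, and \Cref{lemma:fwd_inv} converts those sign conditions into membership in each $\SC_j$.
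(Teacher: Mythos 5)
Your second piece coincides exactly with the paper's own argument: once forward invariance of $\SSS$ is in hand, \Cref{prop:S} gives $b_{j,d_j-1}(x(t)) \ge 0$ along the trajectory, and \Cref{lemma:fwd_inv} converts this into $x(t) \in \SC_j$ for each $j$, hence $x(t) \in \SSS \cap \SC$. The gap is in your first piece. You define $u_\rmi$ as the minimizer of $J(x,\cdot)$ subject to the hard constraint $L_fh(x) + L_gh(x)u + \alpha(h(x)) \ge 0$, but under the proposition's hypotheses this program can be infeasible: the assumption $L_gh(x) \ne 0$ holds only on $\SB \subseteq \mbox{bd }\SSS$, so nothing prevents interior points of $\SSS$ at which $L_gh(x) = 0$ and $L_fh(x) + \alpha(h(x)) < 0$, and at such points no $u$ satisfies your constraint, so $u_\rmi$ is undefined there. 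This is not incidental to your argument: the comparison-lemma step needs $\dot h \ge -\alpha(h)$ along the \emph{entire} trajectory, i.e., feasibility throughout $\SSS$, which is exactly what can fail. Note also that the ``closed-form safety filter of \Cref{section:control}'' you appeal to is not this program: \eqref{eq:qp_softmin} carries the relaxation term $\tilde \mu h(x)$, which exists precisely to restore feasibility at interior points (see \Cref{lemma:qp_feas}), and the corresponding term $\gamma^{-1}h(x)^2$ in the denominator \eqref{eq:den} is also what makes the closed-form solution locally Lipschitz (\Cref{thm:lip}); your Lipschitz claim for the unrelaxed program is likewise unsupported near points where $L_gh$ vanishes. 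If you replace your program by the relaxed one, the inequality you obtain is $\dot h \ge -\alpha(h) - \mu(x) h$ rather than $\dot h \ge -\alpha(h)$, so the comparison argument must be adjusted (it survives, since $\mu \ge 0$ on $\SSS$ is continuous and $h=0$ remains an equilibrium of the comparison equation, but this needs to be argued).

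The paper's proof sidesteps all of this by never requiring a differential inequality on all of $\SSS$: it merely posits a locally Lipschitz $u \colon \SSS \to \BBR^m$ satisfying $L_fh(x) + L_gh(x)u(x) \ge 0$ for $x \in \mbox{bd }\SSS$ only---such a $u$ exists by the regularity of $f$, $g$, $h^\prime$ together with $L_gh \ne 0$ on $\SB$ (off $\SB$ the boundary inequality holds for any $u$ since $L_fh > 0$ there)---and then invokes Nagumo's theorem, which requires sub-tangentiality only at boundary points. You in fact name this alternative at the end (``argue the $\SSS$-invariance directly from the boundary (tangentiality) behavior'') but do not carry it out; as written, the construction on which your proof rests is not well defined under the stated hypotheses, so the first piece is a genuine gap.
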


\begin{proof}

Let $x_0 \in \SSS$. 
Since $h^\prime$ is locally Lipschitz on $\SH$, and for all $x \in \SB$, $L_gh(x) \ne 0$, it follows from \Cref{lem:rcbf} that $\SH$ is control forward invariant. 
Since, in addition, $x_0 \in \SSS \subseteq \SH$, it follows from \Cref{def:cont_fwd_inv} that there exists a locally Lipschitz $u_\rmi \colon \SH \to \BBR^m$ such that the solution to \eqref{eq:dynamics} with $u=u_\rmi$ is such that for all $t\in I(x_0)$, $x(t) \in \SH$, which implies $h(x(t)) \ge 0$.
Thus, \eqref{eq:defS_softmin}, Proposition~\ref{prop:S}, and \eqref{eq:C_j_i_def} imply that for all $t \in I(x_0)$, $b_{j, d_j-1}(x(t)) \ge 0$.
Since, in addition, $x_0 \in \SSS \subseteq \SC_j$, it follows from Lemma~\ref{lemma:fwd_inv} that for all $t \in I(x_0)$, $x(t) \in \SC_j$. 
Thus, for all $t \in I(x_0)$, $x(t) \in \SC$, which implies that for all $t \in I(x_0)$, $x(t) \in \SSS = \SH \cap \SC$.
\end{proof}

\begin{remark}\label{rem:cont_fwd_S_C}\rm
Proposition~\ref{prop:cont_fwd_S_C} provides a sufficient condition such that $\SSS$ is control forward invariant.
However, \Cref{fig:set_diag}(\textit{d}) illustrates that it is not necessary that $L_gh(x) \neq 0$ for all $x \in \SB$.
Specifically, it suffices that for all $x \in \SB \cap \SSS$, $L_gh(x) \ne 0$.
\end{remark}

The conditions in Proposition~\ref{prop:cont_fwd_S_C} guarantee that $\SSS \subseteq \SSS_\rms$ is control forward invariant and $h$ is an R-CBF. 
In this case, $h$ is not necessarily a CBF. 
Under additional assumptions (e.g., $\SH$ is compact), $h$ may be a CBF. 
However, even if $h$ is a CBF, it can be difficult to determine an associated extended class-$\SK$ function $a \colon \BBR \to \BBR$ that satisfies \Cref{def:CBF}. 
The next section presents an optimal and safe control using the R-CBF $h$.
This approach does not require that $h$ is CBF, and thus, does not require knowledge of an associated function~$a$.

\section{Closed-Form Optimal and Safe Control} \label{section:control}

This section uses the composite soft-minimum R-CBF \eqref{eq:h_def} to construct a closed-form optimal control that guarantees safety. 
Specifically, we design an optimal control subject to the constraint that $x(t) \in \SSS \subseteq \SSS_\rms$.

Let $\gamma > 0$, and let $\alpha:\BBR \to \BBR$ be a locally Lipschitz nondecreasing function such that $\alpha(0) = 0$. 
For all $x \in \BBR^n$, consider the control given by
\begin{subequations}\label{eq:qp_softmin}
\begin{align}
& \Big ( u(x), \mu(x) \Big ) \triangleq \underset{{\tilde u\in \BBR^m, \, \tilde \mu \in \BBR} }{\mbox{argmin}}  \, 
J(x, \tilde u) + \frac{1}{2}\gamma \tilde \mu^2 \label{eq:qp_softmin.a}\\
& \text{subject to}\nn\\
& L_f h(x) + L_g h(x) \tilde u + \alpha (h(x)) + \tilde \mu h(x) \ge 0. \label{eq:qp_softmin.b}
\end{align}
\end{subequations}
The quadratic program \eqref{eq:qp_softmin} includes the slack variable $\tilde \mu$ because $h$ is an R-CBF but not necessarily a CBF.
A typical CBF-based constraint does not include the slack-variable term $\tilde \mu h(x)$ in \eqref{eq:qp_softmin.b}.
In this case, additional assumptions are needed to ensure that 
\eqref{eq:qp_softmin.b} is feasible. 
For example, it is common to require that $h$ is a CBF and that $\alpha$ is an extended class-$\SK$ function that satisfies \Cref{def:CBF} with $\eta = h$ and $a = \alpha$.  
Since $h$ is only an R-CBF, the slack variable $\tilde \mu$ ensures that for all $x \not \in \SB$, \eqref{eq:qp_softmin.b} is feasible.
Hence, if $L_gh$ is nonzero on $\SB$, then \eqref{eq:qp_softmin.b} is feasible for all $x \in \BBR^n$. 
The next result formalizes this fact.
Specifically, it shows that if for all $x \in \SB$, $L_gh(x) \ne 0$, then the quadratic program~\eqref{eq:qp_softmin} is feasible on $\BBR^n$.

\begin{proposition}\label{lemma:qp_feas}\rm
The following statements hold:
\begin{enumalph}

\item \label{lemma:qp_feas.a}
Let $x \in \BBR^n \setminus \SH$. 
Then, there exists $\tilde u \in \BBR^m$ and $\tilde \mu \in \BBR$ such that~\eqref{eq:qp_softmin.b} is satisfied.

\item \label{lemma:qp_feas.a2}
Let $x \in \SH \setminus \SB$.
Then, there exists $\tilde u \in \BBR^m$ and $\tilde \mu \ge 0$ such that~\eqref{eq:qp_softmin.b} is satisfied.

\item \label{lemma:qp_feas.b}
Let $x \in \SB$. 
If $L_g h(x) \ne 0$, then there exists $\tilde u \in \BBR^m$ and $\tilde \mu \ge 0$ such that~\eqref{eq:qp_softmin.b} is satisfied. 
\end{enumalph}
\end{proposition}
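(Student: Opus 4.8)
The plan is to treat \eqref{eq:qp_softmin.b} as an affine inequality in the decision pair $(\tilde u, \tilde \mu)$ and to exhibit an explicit feasible point in each case, with the choice dictated by the sign of $h(x)$. Rewriting \eqref{eq:qp_softmin.b} as
\begin{equation*}
L_g h(x)\, \tilde u + h(x)\, \tilde \mu \ge -\big( L_f h(x) + \alpha(h(x)) \big),
\end{equation*}
I would first record the structural fact that, since $h$ is continuous and $\SSS = \{ x : h(x) \ge 0\}$, every $x \in \mbox{bd }\SSS$ has $h(x) = 0$; consequently on $\mbox{bd }\SSS$ both $\alpha(h(x)) = \alpha(0) = 0$ and the slack term $h(x)\tilde\mu$ vanish, whereas off the boundary $h(x) \ne 0$ makes $\tilde\mu$ a usable free parameter. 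The rest of the argument is then simply to pick $\tilde u$ and $\tilde\mu$ so that the left-hand side dominates the (now explicit) right-hand side.

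For \ref{lemma:qp_feas.a}, I would set $\tilde u = 0$ and exploit $h(x) < 0$: the product $h(x)\tilde\mu$ grows without bound as $\tilde\mu \to -\infty$, so every $\tilde\mu \le (L_f h(x) + \alpha(h(x)))/(-h(x))$ is feasible, and no sign restriction on $\tilde\mu$ is imposed here. For \ref{lemma:qp_feas.a2}, I would split $\SSS \setminus \SB$ into its two pieces. If $h(x) > 0$, then $\alpha(h(x)) \ge \alpha(0) = 0$ because $\alpha$ is nondecreasing, and taking $\tilde u = 0$ the term $h(x)\tilde\mu$ can be made arbitrarily large with $\tilde\mu \ge 0$, so any sufficiently large nonnegative $\tilde\mu$ works. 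If instead $x \in \mbox{bd }\SSS$ with $x \notin \SB$, then $h(x) = 0$ and the definition \eqref{eq:SB} of $\SB$ forces $L_f h(x) > 0$; then $\tilde u = 0$ and $\tilde\mu = 0$ leave the left-hand side equal to $L_f h(x) > 0$, so the constraint holds.

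For \ref{lemma:qp_feas.b} the slack variable is unavailable, since $h(x) = 0$ makes $h(x)\tilde\mu = 0$; this is the crux. The constraint reduces to $L_f h(x) + L_g h(x)\tilde u \ge 0$ with $L_f h(x) \le 0$, and I would set $\tilde\mu = 0$ and $\tilde u = \lambda\, L_g h(x)^\rmT$ for a scalar $\lambda \ge 0$, so that $L_g h(x)\tilde u = \lambda\, L_g h(x) L_g h(x)^\rmT$. Because $L_g h(x) \ne 0$, the quantity $L_g h(x) L_g h(x)^\rmT$ is strictly positive, so $\lambda \ge -L_f h(x)/\big(L_g h(x) L_g h(x)^\rmT\big) \ge 0$ makes the constraint hold. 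The point worth emphasizing---and the only genuine obstacle---is this dichotomy: $\SB$ is exactly the locus where the slack term vanishes and $L_f h(x) \le 0$, so there feasibility rests entirely on an active input channel $L_g h(x) \ne 0$; everywhere else feasibility is supplied either by the slack term (when $h(x) \ne 0$) or by the strict sign $L_f h(x) > 0$ (on $\mbox{bd }\SSS \setminus \SB$), with no condition on $L_g h(x)$. This is precisely why the hypothesis $L_g h(x) \ne 0$ enters only in \ref{lemma:qp_feas.b}, matching its use in \Cref{prop:cont_fwd_S_C}.
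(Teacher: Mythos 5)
Your proof is correct and takes essentially the same approach as the paper's: exhibit an explicit feasible pair in each case, using the slack term $\tilde \mu h(x)$ when $h(x) \ne 0$, the strict sign $L_f h(x) > 0$ on $(\mbox{bd } \SSS) \setminus \SB$, and the input channel $L_g h(x) \ne 0$ on $\SB$. Your choice $\tilde u = \lambda L_g h(x)^\rmT$ with scalar $\lambda \ge 0$ in the last case is in fact dimensionally cleaner than the paper's $\tilde u = -L_f h(x)/L_g h(x)$, which only parses literally when $m = 1$.
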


\begin{proof}

To prove \ref{lemma:qp_feas.a}, let $x_1 \in \BBR^n \setminus \SH$, which implies that $h(x_1) < 0$.
Thus, $\tilde u = 0$ and $\tilde \mu = - [L_f h(x_1) + \alpha(h(x_1))]/h(x_1)$ satisfy~\eqref{eq:qp_softmin.b}.

To prove \ref{lemma:qp_feas.a2}, let $x_2\in (\bd \SH )\setminus \SB$, which implies that $h(x_2) = 0$ and $L_fh(x_2) > 0$.
Thus, $\tilde u = 0$ and $\tilde \mu = 0$ satisfy~\eqref{eq:qp_softmin.b}.
Next, let $x_3 \in \mbox{int }\SH$, which implies that $h(x_3) > 0$. 
Thus, $\tilde u = 0$ and $\tilde \mu = \max\{- [L_f h(x_3) + \alpha(h(x_3))]/{h(x_3)},0\} \ge 0$ satisfy~\eqref{eq:qp_softmin.b}.

To prove \ref{lemma:qp_feas.b}, let $x_4\in \SB$, which implies that $h(x_4) = 0$. 
Since $L_gh(x_4) \neq 0$, it follows that $\tilde u = -L_fh(x_4)/L_gh(x_4)$ and $\tilde \mu =0$ satisfy~\eqref{eq:qp_softmin.b}. 
\end{proof}

The user-selected parameter $\gamma >0$ influences the behavior of the control $u$ that satisfies \eqref{eq:qp_softmin}.
Specifically, $\gamma$ weights the slack-variable term in the cost $J(x, \tilde u) + \frac{1}{2}\gamma \tilde \mu^2$.
For small $\gamma$ (e.g., as $\gamma \to 0$), the control $u$ that satisfies \eqref{eq:qp_softmin} is aggressive in the sense that it is approximately equal to $-Q^{-1}c$, which is the unconstrained minimizer of $J$, except for $x$ near the boundary of $\SH$.
For large $\gamma$ (e.g., as $\gamma \to \infty$), the control $u$ that satisfies \eqref{eq:qp_softmin} is similar to the control generated by a quadratic program that does not include the slack-variable term in \eqref{eq:qp_softmin.b}.
In this case, the aggressiveness of the control is determined by the nondecreasing function $\alpha$ and the extended class-$\SK$ functions $\alpha_{j,0},\ldots,\alpha_{j,d_j-2}$.
%Thus, the slack variable ensures feasibility, and the weight $\gamma$ influences the aggressiveness of the control.

In order to present a closed-form solution to the quadratic program \eqref{eq:qp_softmin}, consider $\omega:\BBR^n \to \BBR$ defined by
\begin{equation}\label{eq:omega}
    \omega(x) \triangleq L_fh(x) - L_gh(x)Q(x)^{-1} c(x) +\alpha(h(x)),
\end{equation}
and define
\begin{equation}\label{eq:Omega}
    \Omega \triangleq \{x\in\BBR^n: \omega(x) < 0\}.
\end{equation}
The following result provides a closed-form solution for the unique global minimizer $( u(x), \mu(x) )$ of the constrained optimization \eqref{eq:qp_softmin}. 
This result also shows that if $h^\prime$ is locally Lipschitz, then $u$ and $\mu$ are locally Lipschitz.

\begin{theorem}\label{thm:lip}\rm
Assume that for all $x\in\SB$, $L_gh(x) \neq 0$. 
Then, the following hold:
\begin{enumalph}

\item \label{thm:lip.a}
For all $x \in \BBR^n$,
\begin{align}
u(x) &= - Q(x)^{-1} \left(c(x) - L_gh(x)^\rmT \lambda(x)\right) ,\label{eq:u_sol}\\
\mu(x) &= \frac{h(x) \lambda(x)}{\gamma},\label{eq:mu_sol}
\end{align}
where $\lambda \colon \BBR^n \to \BBR$ is defined by
\begin{equation}\label{eq:lambda_sol}
\lambda(x) \triangleq
    \begin{cases}
    \frac{-\omega(x)}{d(x)}, & x\in\Omega,\\
     0, & x\notin\Omega,\\    
    \end{cases}
\end{equation}
and $d \colon \Omega \to \BBR$ is defined by 
\begin{equation}\label{eq:den}
    d(x) \triangleq L_gh(x)Q(x)^{-1} L_gh(x)^\rmT + \gamma^{-1} h(x)^2.
\end{equation}

\item \label{thm:lip.b}
For all $x \in \BBR^n$, $\lambda(x)\ge 0$, and for all $x \in \SH$, $\mu(x) \ge 0$.

\item 
$u$, $\mu$, and $\lambda$ are continuous of $\BBR^n$. \label{thm:lip.c}

\item 
Let $\SD \subseteq \BBR^n$, and assume that $h^\prime$ is locally Lipschitz on $\SD$.
Then, $u$, $\mu$, and $\lambda$ are locally Lipschitz on $\SD$. \label{thm:lip.d}
\end{enumalph}
\end{theorem}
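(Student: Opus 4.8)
The plan is to treat \eqref{eq:qp_softmin} as a strictly convex quadratic program in $(\tilde u,\tilde\mu)$ with a single affine inequality constraint and to solve it through its Karush--Kuhn--Tucker (KKT) conditions. Since $Q(x)\in\BBP^m$ and $\gamma>0$, the cost \eqref{eq:qp_softmin.a} is strictly convex, and Proposition~\ref{lemma:qp_feas} guarantees the feasible set is nonempty; hence a unique global minimizer exists, and because the constraint \eqref{eq:qp_softmin.b} is affine, the KKT conditions are both necessary and sufficient for optimality. Introducing a multiplier $\lambda\ge 0$ for \eqref{eq:qp_softmin.b}, stationarity in $\tilde u$ and in $\tilde\mu$ yields exactly \eqref{eq:u_sol} and \eqref{eq:mu_sol}, so it remains only to determine $\lambda$ from complementary slackness.

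First I would split on whether the constraint is active. When $\lambda=0$ (inactive), substituting $\tilde u=-Q(x)^{-1}c(x)$ and $\tilde\mu=0$ into \eqref{eq:qp_softmin.b} collapses the constraint to $\omega(x)\ge 0$, so this branch is feasible precisely when $x\notin\Omega$. When the constraint is active, setting the left side of \eqref{eq:qp_softmin.b} to zero after substituting \eqref{eq:u_sol}--\eqref{eq:mu_sol} gives $\omega(x)+\lambda\,d(x)=0$, hence $\lambda=-\omega(x)/d(x)$ on $\Omega$, matching \eqref{eq:lambda_sol}. This recovers part~\ref{thm:lip.a}, provided $d(x)>0$ wherever the quotient is used.

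The crux of the argument---and the step I expect to be the main obstacle---is showing $d(x)>0$ on $\mbox{cl }\Omega$, which both makes the closed form well defined and, more importantly, makes $\lambda$ continuous across the switching surface $\{\omega=0\}$. Since $Q(x)^{-1}\in\BBP^m$, \eqref{eq:den} forces $d(x)=0$ only if $L_gh(x)=0$ and $h(x)=0$; but then $\alpha(0)=0$ gives $\omega(x)=L_fh(x)$, so any such point with $\omega(x)\le 0$ satisfies $h(x)=0$ and $L_fh(x)\le 0$, placing it in $\SB$ and contradicting the hypothesis $L_gh(x)\ne 0$ on $\SB$. Thus $\{d=0\}\subseteq\{\omega>0\}$, equivalently $d>0$ on $\{\omega\le 0\}\supseteq\mbox{cl }\Omega$; and since $\{d>0\}$ is open, $d$ is bounded away from zero on a neighborhood of $\mbox{cl }\Omega$. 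This single estimate resolves all the boundary difficulties at once.

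With $d>0$ secured near $\mbox{cl }\Omega$, the remaining parts follow routinely. For part~\ref{thm:lip.b}, $\lambda\ge 0$ is immediate from \eqref{eq:lambda_sol} (on $\Omega$, $\omega<0$ and $d>0$), and $\mu=h\lambda/\gamma\ge 0$ on $\SSS$ because $h\ge 0$ there. For continuity (part~\ref{thm:lip.c}), $\lambda$ is continuous on the open sets $\Omega$ and $\{\omega>0\}$; at a point with $\omega=0$ one has $d>0$, so $-\omega(x)/d(x)\to 0$ as $\omega\to 0^-$, matching the value $0$ on the other side, and then $u,\mu$ inherit continuity from \eqref{eq:u_sol}--\eqref{eq:mu_sol}. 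For the Lipschitz claim (part~\ref{thm:lip.d}), local Lipschitzness of $h^\prime$ together with that of $f$, $g$, $Q$, $Q^{-1}$, $c$, and $\alpha$ makes $\omega$ and $d$ locally Lipschitz on $\SD$; then on the open set $\{d>0\}\supseteq\mbox{cl }\Omega$ one can write $\lambda(x)=\max\{-\omega(x),0\}/d(x)$, a quotient of locally Lipschitz functions whose denominator is locally bounded away from zero, while $\lambda\equiv 0$ on $\{\omega>0\}$. Since $\{d>0\}\cup\{\omega>0\}=\BBR^n$ (any point with $d=0$ has $\omega>0$), covering $\SD$ by these two open sets yields local Lipschitzness of $\lambda$, and hence of $u$ and $\mu$.
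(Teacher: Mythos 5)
Your proposal is correct and follows essentially the same route as the paper's proof: the same key lemma that $d(x)>0$ on $\mbox{cl }\Omega$ (proved by the identical contradiction through $\SB$ and the hypothesis $L_gh(x)\neq 0$), the same Lagrangian/KKT case split on whether the constraint is active, and the same continuity and Lipschitz arguments for parts \ref{thm:lip.b}--\ref{thm:lip.d}. Your one refinement---writing $\lambda(x)=\max\{-\omega(x),0\}/d(x)$ on the open set $\{d>0\}$ and covering $\BBR^n$ by $\{d>0\}\cup\{\omega>0\}$---makes the local Lipschitzness of the piecewise-defined $\lambda$ in part \ref{thm:lip.d} more explicit than the paper's brief assertion, but it is a polish of the same argument rather than a different approach.
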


\begin{proof}

First, we show that for all $x \in \mbox{cl }\Omega$, $d(x) > 0$.
Let $a \in \mbox{cl }\Omega$, and assume for contradiction that $d(a) = 0$. 
Since $\gamma > 0$ and $Q$ is positive definite, it follows from \eqref{eq:den} that $L_g h(a) = 0$ and $h(a)=0$.
Since, in addition, for all $x \in \SB$, $L_gh(x) \neq 0$, it follows from \eqref{eq:SB} that $L_fh(a) > 0$. 
Thus, \eqref{eq:omega} implies $\omega(a) = L_f h(a) > 0$, which implies $a \notin \mbox{cl }\Omega$, which is a contradiction. 
Thus, $d(a) \not = 0$, which implies that $d(a) > 0$. 
Thus, for all $x \in \mbox{cl }\Omega$, $d(x) > 0$.

To prove \ref{thm:lip.a}, define 
\begin{gather*}
\tilde J(x, \tilde u, \tilde \mu) \triangleq J(x, \tilde u) + \frac{1}{2}\gamma \tilde \mu^2,\\
b(x, \tilde u, \tilde \mu) \triangleq L_fh(x) + L_gh(x) \tilde u + \alpha(h(x)) + \tilde \mu h(x),\\
u_*(x) \triangleq -Q(x)^{-1} c(x),
\end{gather*}
and note that $u_*$ is the unique global minimizer of $J$, which implies that $(u_*,0)$ is the unique global minimizer of $\tilde J$.

First, let $x_1 \notin \Omega$, and it follows from \eqref{eq:Omega} that $\omega(x_1) \ge 0$, which combined with \eqref{eq:omega} implies that $(\tilde u,\tilde \mu) = (u_*(x_1),0)$ satisfies~\eqref{eq:qp_softmin.b}.
Since, in addition, $(\tilde u,\tilde \mu) = (u_*(x_1),0)$ is the unique global minimizer of $\tilde J(x_1,\tilde u, \tilde \mu)$, it follows that $(\tilde u,\tilde \mu) = (u_*(x_1),0)$ is the solution to~\eqref{eq:qp_softmin}. 
Finally, \Cref{eq:u_sol,eq:mu_sol,eq:lambda_sol,eq:den} yields $u(x_1) = u_*(x_1)$ and $\mu(x_1) = 0$, which confirms \ref{thm:lip.a} for all $x \notin \Omega$.

Next, let $x_2 \in \Omega$. 
Let $(u_2, \mu_2) \in \BBR^{m} \times \BBR$ denote the the unique global minimizer of $\tilde J(x_2, \tilde u, \tilde \mu)$ subject to $b(x_2, \tilde u, \tilde \mu) \ge 0$. 
Since $x_2 \in \Omega$, it follows from \eqref{eq:omega} that $b(x_2, u_*(x_2), 0) = \omega(x_2) < 0$.
Thus, $b(x_2, u_2, \mu_2) =0$.
Define the Lagrangian
\begin{equation*}
\SL(\tilde u, \tilde \mu, \tilde \lambda) \triangleq \tilde J(x_2, \tilde u , \tilde \mu) - \tilde \lambda b(x_2, \tilde u, \tilde \mu).
\end{equation*}
Let $\lambda_2 \in \BBR$ be such that $(u_2, \mu_2,\lambda_2)$ is a stationary point of $\SL$. 
Evaluating $\pderiv{\SL}{{\tilde u}}$, $\pderiv{\SL}{\tilde \mu}$, and $\pderiv{\SL}{\tilde \lambda}$ at $(u_2, \mu_2,\lambda_2)$; setting equal to zero; and solving for $u_2$, $\mu_2$, and $\lambda_2$ yields 
\begin{gather*}
u_2 = -Q(x_2)^{-1}\left(c(x_2) - L_g h(x_2)^\rmT \lambda_2\right),\\
\mu_2 = \frac{h(x_2)\lambda_2}{\gamma}, \\ 
\lambda_2 = \frac{-\omega(x_2)}{d(x_2)},
\end{gather*}
where $d(x_2) \neq 0$ because $x_2 \in \Omega$. 
Finally,~\Cref{eq:u_sol,eq:mu_sol,eq:lambda_sol,eq:den} yields $u(x_2) = u_2$, $\mu(x_2) = \mu_2$, and $\lambda(x_2) = \lambda_2$, which confirms \ref{thm:lip.a} for all $x \in \Omega$.

To prove \ref{thm:lip.b}, since $d$ is positive on $\mbox{cl } \Omega$, it follows from~\cref{eq:omega,eq:Omega,eq:lambda_sol} that for all $x\in\BBR^n$, $\lambda(x) \ge 0$. 
Since, in addition, $\gamma > 0$ and $h$ is nonnegative on $\SSS$, it follows from \cref{eq:mu_sol} that for all $x\in\SH$, $\mu(x) \ge 0$.

To prove~\ref{thm:lip.c}, let $a \in \bd \Omega$, which implies that $\omega(a) =0$ and $d(a) > 0$. 
Thus, $-\omega(a)/d(a) =0$, and \eqref{eq:lambda_sol} implies that $\lambda$ is continuous on $\bd \Omega$. 
Since, in addition, $f$, $g$, $Q^{-1}$, $c$, $h$, and $h^\prime$ are continuous on $\BBR^n$, it follows from \cref{eq:omega,eq:lambda_sol,eq:den} that $\lambda$ is continuous on $\BBR^n$, which combined with \eqref{eq:u_sol} and \eqref{eq:mu_sol} implies that $u$ and $\mu$ are continuous on $\BBR^n$.

To prove~\ref{thm:lip.d}, note that $f$, $g$, $Q^{-1}$, $c$, $\alpha$ and $h$ are locally Lipschitz on $\BBR^n$. 
Since $h^\prime$ is locally Lipschitz on $\SD$, it follows from \cref{eq:omega,eq:den} that $\omega$ and $d$ are locally Lipschitz on $\SD$. 
Thus, \cref{eq:lambda_sol} implies that $\lambda$ is locally Lipschitz on $\SD$, which combined with \cref{eq:u_sol,eq:mu_sol} implies that $u$ and $\mu$ are locally Lipschitz on $\SD$.
\end{proof}

\begin{figure}[t!]
\center{\includegraphics[width=0.5\textwidth,clip=true,trim= 0.0in 0.0in 0in 0.0in] {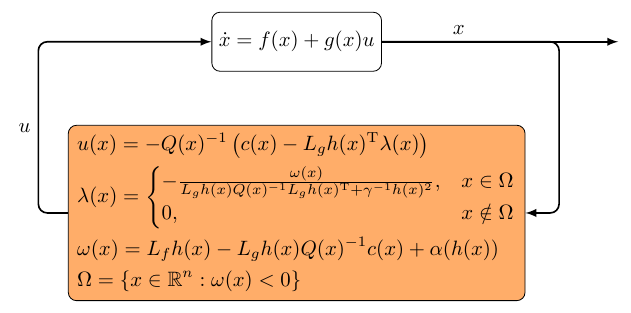}}
\caption{Closed-form optimal and safe control using the composite soft-minimum R-CBF \eqref{eq:h_def}.
Control minimizes cost subject to safety constraint.} 
\label{fig:block_diag1}
\end{figure}

Figure~\ref{fig:block_diag1} is a block diagram of the control~\Cref{eq:b_def,eq:h_def,eq:u_sol,eq:omega,eq:Omega,eq:mu_sol,eq:lambda_sol,eq:den}, which is the closed-form solution to the quadratic program~\eqref{eq:qp_softmin}. 
The control depends on the user-selected parameter $\gamma >0$ and the user-selected functions $\alpha$ and $\alpha_{j,1},\ldots,\alpha_{j,d_j-2}$, which are similar to the extended class-$\SK$ functions used in a typical higher-order CBF-based constraint. 
The parameter $\gamma$ and the function $\alpha$ influence the aggressiveness of the control.
The functions $\alpha_{j,1},\ldots,\alpha_{j,d_j-2}$ influence the aggressiveness of the control as well as the size of $\SC$ and thus, the size of the control forward invariant set $\SSS = \SH \cap \SC$.

The control~\Cref{eq:b_def,eq:h_def,eq:u_sol,eq:omega,eq:Omega,eq:mu_sol,eq:lambda_sol,eq:den} requires computation of $Q(x)^{-1}$.
For many applications, the cost $J$ is known in advance of online implementation. 
In this case, $Q(x)^{-1}$ can be computed offline and in closed form. 
For example, consider the minimum-intervention specialization discussed in \Cref{section:problem_formulation}.
In this case, $Q = Q^{-1} = I_m$. 
If an application requires online computation of $Q(x)^{-1}$, then this typically has a worst-case complexity of $\SO(m^3)$.
For comparison, solving the quadratic program~\eqref{eq:qp_softmin} using an interior-point method with an $\epsilon$-tolerance typically has a complexity of $\SO(m^3 \log(1/\epsilon))$
(e.g., \cite[Chapter 3]{wright1997primal}, \cite[Chapter 6]{ben2001lectures}).
Thus, the closed-form control~\Cref{eq:b_def,eq:h_def,eq:u_sol,eq:omega,eq:Omega,eq:mu_sol,eq:lambda_sol,eq:den} has a computational-complexity advantage over the online solution of the quadratic program.

Note that if $\SH$ is compact, then \ref{thm:lip.c} of \Cref{thm:lip} implies that $u$, $\mu$, and $\lambda$ are bounded on $\SH$.

The next theorem is the main result on safety using this control.

\begin{theorem}\label{thm:smocbf}\rm
Consider~\eqref{eq:dynamics}, where \ref{cond:hocbf.a} is satisfied, and consider $u$ given by~\Cref{eq:b_def,eq:h_def,eq:u_sol,eq:omega,eq:Omega,eq:mu_sol,eq:lambda_sol,eq:den}. 
Assume that $h^\prime$ is locally Lipschitz on $\SH$, and for all $x \in \SB$, $L_gh(x) \ne 0$.
Let $x_0\in \SSS$.
Then, for all $t \in I(x_0)$, $x(t) \in \SSS \subseteq \SSS_\rms$.
\end{theorem}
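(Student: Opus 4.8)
The plan is to verify that the explicit closed-form control from Theorem~\ref{thm:lip} is precisely the locally Lipschitz feedback whose existence is merely asserted in Proposition~\ref{prop:cont_fwd_S_C}, and then to reuse the forward-invariance argument of that proposition. First I would record the properties of $u$ supplied by Theorem~\ref{thm:lip}: under the standing hypotheses (namely $L_gh(x)\neq 0$ on $\SB$ and $h^\prime$ locally Lipschitz on $\SSS$), the pair $(u,\mu)$ is the global minimizer of the quadratic program~\eqref{eq:qp_softmin}, and it is locally Lipschitz on $\SSS$. Since Proposition~\ref{lemma:qp_feas} guarantees that~\eqref{eq:qp_softmin} is feasible at every $x\in\SSS$, the minimizer $(u(x),\mu(x))$ in fact satisfies the constraint~\eqref{eq:qp_softmin.b} for all $x\in\SSS$.

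The key step is to evaluate~\eqref{eq:qp_softmin.b} on $\mbox{bd }\SSS$. There $h(x)=0$, so $\alpha(h(x))=\alpha(0)=0$ and $\mu(x)h(x)=0$, and hence the relaxed CBF inequality collapses to $L_fh(x)+L_gh(x)u(x)\ge 0$. This is exactly the subtangentiality condition for the set $\SSS=\{x:h(x)\ge 0\}$ under the closed-loop vector field $f+gu$. Because $u$ is locally Lipschitz on $\SSS$, the closed-loop solution is unique, so Nagumo's theorem (invoked as in the proof of Proposition~\ref{prop:cont_fwd_S_C}) applies and yields that $\SSS$ is forward invariant under the closed-loop dynamics.

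Finally I would upgrade forward invariance of $\SSS$ to forward invariance of $\SSS\cap\SC$ by copying the second half of the proof of Proposition~\ref{prop:cont_fwd_S_C}. Fix $x_0\in\SSS\cap\SC$. Forward invariance of $\SSS$ gives $h(x(t))\ge 0$ for all $t\in I(x_0)$; Proposition~\ref{prop:S} together with~\eqref{eq:C_j_i_def} then yields $b_{j,d_j-1}(x(t))\ge 0$; and since $x_0\in\SC_j$, Lemma~\ref{lemma:fwd_inv} gives $x(t)\in\SC_j$ for each $j$, so $x(t)\in\SC$. Combining, $x(t)\in\SSS\cap\SC\subset\SSS_\rms$ for all $t\in I(x_0)$.

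The main obstacle is conceptual rather than computational: one must notice that both the slack term $\mu h$ and the class-$\SK$ term $\alpha(h)$ vanish on $\mbox{bd }\SSS$, which is what decouples the relaxed constraint~\eqref{eq:qp_softmin.b} into the clean subtangentiality condition $L_fh+L_gh\,u\ge 0$. The only technical care needed is to ensure that the explicit $u$ (not merely some feasible control) realizes this inequality and is regular enough for Nagumo's theorem---both of which are delivered by the feasibility result in Proposition~\ref{lemma:qp_feas} and the local Lipschitzness established in Theorem~\ref{thm:lip}. I would not expect any genuinely new estimate beyond what Theorem~\ref{thm:lip} and Proposition~\ref{prop:cont_fwd_S_C} already provide; the theorem is essentially the instantiation of Proposition~\ref{prop:cont_fwd_S_C} with the closed-form controller.
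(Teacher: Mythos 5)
Your proof is correct and follows the same skeleton as the paper's: local Lipschitzness of $u$ on $\SSS$ from \ref{thm:lip.d} of Theorem~\ref{thm:lip}, verification of $L_fh(x)+L_gh(x)u(x)\ge 0$ on $\mbox{bd }\SSS$, Nagumo's theorem for forward invariance of $\SSS$, and then the upgrade to $\SSS\cap\SC$ via Proposition~\ref{prop:S} and Lemma~\ref{lemma:fwd_inv}, exactly as in the second half of Proposition~\ref{prop:cont_fwd_S_C}. The one step you handle differently is the boundary inequality. The paper substitutes the closed-form expressions \Cref{eq:u_sol,eq:omega,eq:Omega,eq:mu_sol,eq:lambda_sol,eq:den} into $L_fh(a)+L_gh(a)u(a)$ and computes directly that this quantity equals $0$ for $a\in\Omega$ and $\omega(a)\ge 0$ for $a\notin\Omega$. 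You instead argue variationally: by Proposition~\ref{lemma:qp_feas} the program \eqref{eq:qp_softmin} is feasible at every $x\in\SSS$ (parts \ref{lemma:qp_feas.a2} and \ref{lemma:qp_feas.b}, the latter using the hypothesis $L_gh\ne 0$ on $\SB$), so the minimizer---which by \ref{thm:lip.a} of Theorem~\ref{thm:lip} is the closed-form $(u(x),\mu(x))$---satisfies \eqref{eq:qp_softmin.b}; on $\mbox{bd }\SSS$ the terms $\alpha(h(x))$ and $\mu(x)h(x)$ vanish, leaving the Nagumo condition. Your route avoids redoing the algebra and makes the conceptual mechanism explicit (the slack and class-$\SK$ terms are inert on the boundary), at the cost of invoking the feasibility proposition, which the paper's direct computation does not need. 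Both arguments are complete; the difference is one of bookkeeping, not substance.
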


\begin{proof}

Since $h^\prime$ is locally Lipschitz on $\SSS$, it follows from~\ref{thm:lip.d} of~\Cref{thm:lip} that $u$ is locally Lipschitz on $\SH$. 
Next, let $a \in \bd \SH$, which implies that $h(a) =0$.
Thus,~\Cref{eq:u_sol,eq:omega,eq:Omega,eq:mu_sol,eq:lambda_sol,eq:den} imply that 
\begin{align*}
L_fh(a) + L_gh(a) u(a) &= L_fh(a) - L_gh(a) Q(a)^{-1} c(a)\nn\\
&\qquad + L_gh(a) Q(a)^{-1} L_gh(a)^\rmT \lambda(a)\\
&= \begin{cases}
    0, & a\in\Omega\\
     \omega(a), & a\notin\Omega    
    \end{cases}\\
    & \ge 0.
\end{align*}
Hence, for all $x \in \bd \SH$, $L_fh(x) + L_gh(x) u(x) \ge 0$.
Since, in addition, $x_0 \in \SSS \subseteq \SH$, it follows from~\cite[Theorem 4.7]{blanchini2008set} that for all $t \in I(x_0)$, $x(t) \in \SH$, which implies that for all $t \in I(x_0)$, $h(x(t)) \ge 0$.
Thus, \eqref{eq:defS_softmin}, Proposition~\ref{prop:S}, and \eqref{eq:C_j_i_def} imply that for all $t \in I(x_0)$, $b_{j, d_j-1}(x(t)) \ge 0$.
Since, in addition, $x_0 \in \SSS \subseteq \SC_j$, it follows from Lemma~\ref{lemma:fwd_inv} that for all $t \in I(x_0)$, $x(t) \in \SC_j$.
Thus, for all $t \in I(x_0)$, $x(t) \in \SC$, which implies that for all $t \in I(x_0)$, $x(t) \in \SSS = \SH \cap \SC$.
\end{proof}

The control \Cref{eq:b_def,eq:h_def,eq:u_sol,eq:omega,eq:Omega,eq:mu_sol,eq:lambda_sol,eq:den}
relies on the Lie derivatives $L_fh$ and $L_gh$, which can be expressed as 
\begin{align*}
L_fh(x) &= \sum_{j=1}^\ell \beta_j(x) L_f b_{j, d_j-1}(x), \nn\\
& = \sum_{j=1}^\ell \beta_j(x) \Big [ L_f^{d_j} h_j(x) \nn\\
&\qquad + \sum_{k=0}^{d_j-2} L_f^{k+1} \alpha_{j, d_j-2-k}\left (b_{j,d_j-2-k}(x) \right ) \Big ],\\
L_gh(x) &= \sum_{j=1}^\ell \beta_j(x) L_g b_{j, d_j-1}(x) \nn\\
&= \sum_{j=1}^\ell \beta_j(x) L_g L_f^{d_j-1} h_j(x),
\end{align*}
where 
\begin{equation*}
\beta_j(x) \triangleq \exp \rho (h(x) - b_{j, d_j-1}(x)).    
\end{equation*}
It follows from \Cref{eq:softmin,eq:h_def} that $\sum_{j=1}^\ell \beta_j(x) = 1$, which implies that for each $x \in \BBR^n$, $L_fh(x)$ and $L_gh(x)$ are convex combinations of $L_fb_{1, d_1-1}(x),\ldots,L_fb_{\ell, d_\ell-1}(x)$ and $L_gb_{1, d_1-1}(x),\ldots,L_gb_{\ell, d_\ell-1}(x)$, respectively. 
Thus, if this convex combination of $L_gb_{1, d_1-1}(x),\ldots,L_gb_{\ell, d_\ell-1}(x)$ is not equal to the zero vector, then $L_gh(x) \neq 0$.
The next result follows immediately from this observation and provides a sufficient condition such that $L_gh$ is nonzero on $\SB$.

\begin{proposition}\label{prop:Lgh}
\rm{
Assume~\ref{cond:hocbf.a} is satisfied, and assume for all $x\in \SB $, $0 \not \in \mbox{conv} \{L_g L_f^{d_1-1} h_1(x),\ldots,L_g L_f^{d_\ell-1} h_\ell(x)\}$.
Then, for all $x\in\SB$, $L_gh(x) \ne 0$.
}
\end{proposition}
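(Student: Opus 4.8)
The plan is to read the conclusion directly off the convex-combination representation of $L_g h$ that was already established in the paragraph immediately preceding the proposition. Recall that there it was shown that $L_g h(x) = \sum_{j=1}^\ell \beta_j(x) L_g L_f^{d_j-1} h_j(x)$, where $\beta_j(x) \triangleq \exp \rho(h(x) - b_{j,d_j-1}(x))$. So the first step is simply to fix an arbitrary $x \in \SB$ and write $L_g h(x)$ in this form; no new computation is required, since the identity holds on all of $\BBR^n$ by \eqref{eq:softmin} and \eqref{eq:h_def}.

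The second step is to confirm that the coefficients $\beta_1(x),\ldots,\beta_\ell(x)$ make this a genuine convex combination. Because the exponential is strictly positive, each $\beta_j(x) > 0$; and, as already noted in the text, \eqref{eq:softmin} and \eqref{eq:h_def} give $\sum_{j=1}^\ell \beta_j(x) = 1$. Hence the $1 \times m$ vector $L_g h(x)$ is a convex combination of the vectors $L_g L_f^{d_1-1} h_1(x),\ldots,L_g L_f^{d_\ell-1} h_\ell(x)$, and therefore $L_g h(x) \in \mbox{conv}\{L_g L_f^{d_1-1} h_1(x),\ldots,L_g L_f^{d_\ell-1} h_\ell(x)\}$.

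The final step invokes the standing hypothesis: by assumption, for the chosen $x \in \SB$ we have $0 \notin \mbox{conv}\{L_g L_f^{d_1-1} h_1(x),\ldots,L_g L_f^{d_\ell-1} h_\ell(x)\}$. Since $L_g h(x)$ is an element of that convex hull, it cannot equal $0$, i.e. $L_g h(x) \ne 0$. As $x \in \SB$ was arbitrary, this proves the claim.

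Honestly, there is no substantial obstacle here: once the convex-combination identity for $L_g h$ is in hand, the proposition is almost immediate. The only points requiring care are the two easy verifications that the $\beta_j(x)$ are strictly positive and sum to one, so that $L_g h(x)$ genuinely lies in the convex hull rather than merely its affine span; everything else follows from the definition of non-membership in the convex hull.
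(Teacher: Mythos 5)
Your proof is correct and is exactly the argument the paper intends: the paper offers no separate proof, stating only that the result ``follows from this observation,'' namely the convex-combination identity $L_gh(x)=\sum_{j=1}^\ell \beta_j(x) L_g L_f^{d_j-1}h_j(x)$ with $\beta_j(x)>0$ and $\sum_{j=1}^\ell\beta_j(x)=1$, which you have simply spelled out. No differences worth noting.
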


For certain applications, \Cref{prop:Lgh} can be used to verify that $L_gh$ is nonzero on $\SB$. 
For example, if the corresponding elements of $L_g L_f^{d_1-1} h_1(x),\ldots,L_g L_f^{d_\ell-1} h_\ell(x)$ have the same sign, then \Cref{prop:Lgh} implies that  $L_gh$ is nonzero on $\SB$. 
Although \Cref{prop:Lgh} provides a sufficient condition such that $L_gh$ is nonzero on $\SB$, we note that this condition is not necessary.

Next, we present an example to demonstrate the optimal soft-minimum R-CBF control~\Cref{eq:b_def,eq:h_def,eq:u_sol,eq:omega,eq:Omega,eq:mu_sol,eq:lambda_sol,eq:den}.

\begin{example}\label{example:nonholonomic_no_input_const}\rm
Consider the nonholonomic ground robot modeled by~\eqref{eq:dynamics}, where
\begin{equation*}
    f(x) = \begin{bmatrix}
    v \cos{\theta} \\
    v \sin{\theta} \\
    0 \\
    0
    \end{bmatrix}, 
    \,
    g(x) = \begin{bmatrix}
    0 & 0\\
    0 & 0\\
    1 & 0 \\
    0 & 1
    \end{bmatrix}, 
    \,
    x = \begin{bmatrix}
    q_\rmx\\
    q_\rmy\\
    v\\
    \theta
    \end{bmatrix}, 
    \,
    u = \begin{bmatrix}
    u_1\\
    u_2
    \end{bmatrix}, 
\end{equation*}
and $q \triangleq [ \, q_\rmx \quad q_\rmy \, ]^\rmT$ is the robot's position in an orthogonal coordinate system, $v$ is the speed, and $\theta$ is the direction of the velocity vector (i.e., the angle from $[ \, 1 \quad 0 \, ]^\rmT$ to $[ \, \dot q_\rmx \quad \dot q_\rmy \, ]^\rmT$).

Consider the map shown in~\Cref{fig:unicycle_hocbf_map}, which has 6 obstacles and a wall. 
For $j\in \{1,\ldots,6\}$, the area outside the $j$th obstacle is modeled as the zero-superlevel set of 
\begin{equation}\label{eq:h_obs}
h_j(x) = \left \| \matls a_{\rmx, j} (q_\rmx - b_{\rmx,j})  \\ a_{\rmy, j} (q_\rmy - b_{\rmy, j}) \matrs \right \|_p - c_j,    
\end{equation}
where $b_{\rmx,j}, b_{\rmy,j}, a_{\rmx,j}, a_{\rmy,j}, c_j, p> 0$ specify the location and dimensions of the $j$th obstacle.
Similarly, the area inside the wall is modeled as the zero-superlevel set of
\begin{equation}\label{eq:h_wall}
h_7(x) = c_7 - \left \| \matls a_{\rmx, 7} q_\rmx  \\ a_{\rmy, 7} q_\rmy \matrs \right \|_p,   
\end{equation}
where $a_{\rmx, 7}, a_{\rmy, 7}, c_7 , p> 0$ specify the dimension of the space inside the wall. 
The bounds on speed $v$ are modeled as the zero-superlevel sets of
\begin{equation}
    h_8(x) = 9 - v,\qquad 
    h_9(x) = v + 1.\label{eq:h_speed.b}
\end{equation}
The safe set $\SSS_\rms$ is given by \eqref{eq:S_s} where $\ell = 9$.
The projection of $\SSS_\rms$ onto the $q_\rmx$--$q_\rmy$ plane is shown in~\Cref{fig:unicycle_hocbf_map}. 
Note that for all $x\in\SSS_\rms$, the speed satisfies $v\in[-1,9]$.
We also note that~\ref{cond:hocbf.a} is satisfied with $d_1 = d_2 = \ldots = d_7 = 2$ and $d_8 = d_9 = 1$.

\begin{figure}[t!]
\center{\includegraphics[width=0.49\textwidth,clip=true,trim= 0.in 0.in 0in 0in] {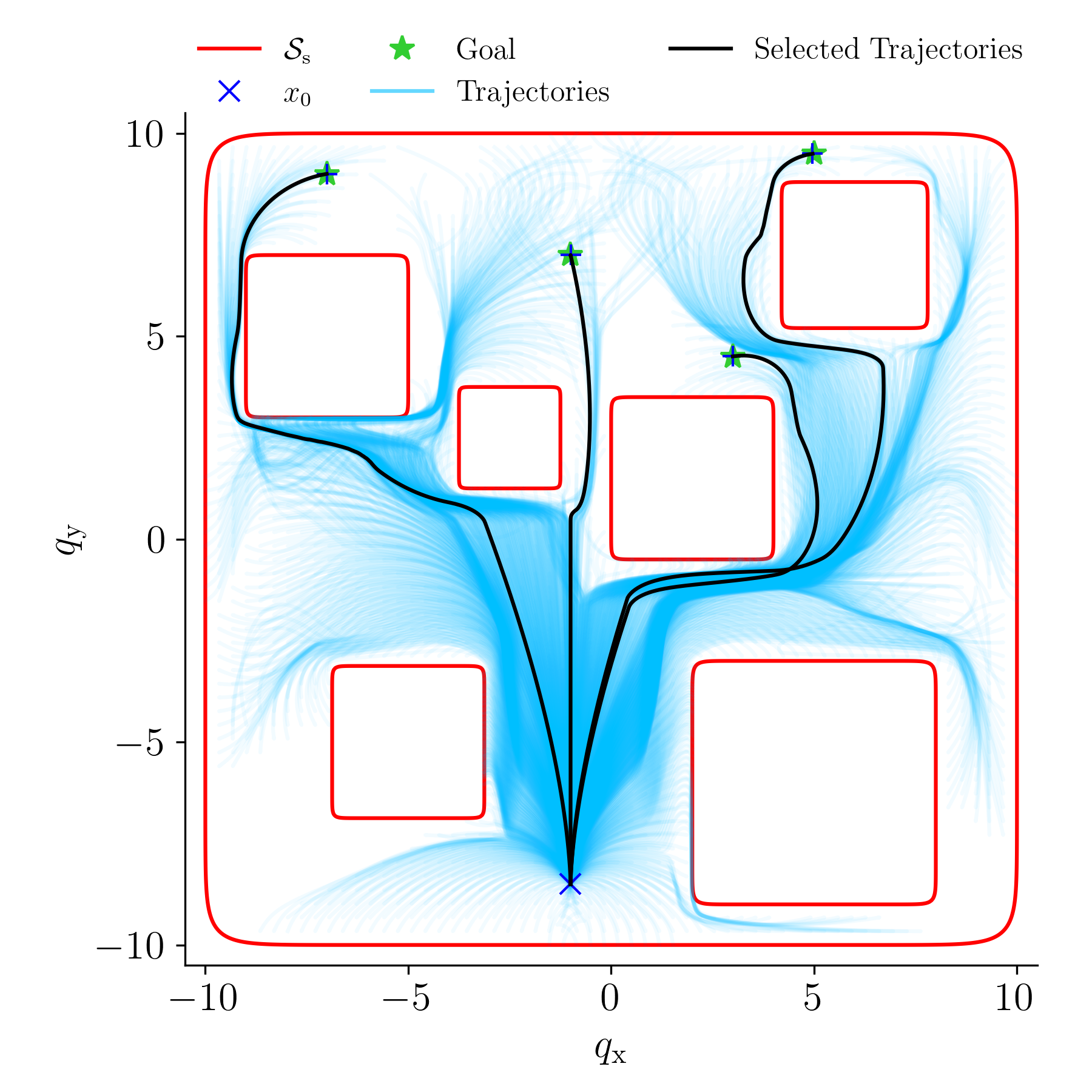}}
\caption{Safe set $\SSS_\rms$, and 2,500 closed-loop trajectories using the control~\Cref{eq:b_def,eq:h_def,eq:u_sol,eq:omega,eq:Omega,eq:mu_sol,eq:lambda_sol,eq:den}.}\label{fig:unicycle_hocbf_map}
\end{figure}

Let $q_{\rmd} = [\, q_{\rmd,\rmx} \quad q_{\rmd,\rmy}\,]^\rmT \in \BBR^2$ be the goal location, that is, the desired location for $q$.
Then, consider the desired control
\begin{equation}\label{eq:u_d_ex}
    u_\rmd(x)\triangleq \begin{bmatrix}
    u_{\rmd_1}(x)\\u_{\rmd_2}(x)
\end{bmatrix},
\end{equation} 
where $u_{\rmd_1},u_{\rmd_2},\psi:\BBR^4 \to \BBR$ are 
\begin{align}
    u_{\rmd_1}(x) &\triangleq - (k_1 + k_3) v + (1+k_1 k_3) \| q-q_\rmd\|_2 \cos \psi(x)\nn \\
    &\qquad + k_1 \left(k_2 \| q-q_\rmd\|_2 + v\right) \sin^2 \psi(x), \label{eq:u_d_1_ex}\\
    u_{\rmd_2}(x) &\triangleq \left(k_2 + \frac{v}{\| q-q_\rmd\|_2}\right) \sin \psi(x), \label{eq:u_d_2_ex}\\
    %r(x) &\triangleq \left \| \matls q_\rmx - q_{\rmd,\rmx}  \\ q_\rmy - q_{\rmd,\rmy} \matrs \right \|_2,\\
    \psi(x) &\triangleq \mbox{atan2}(q_\rmy-q_{\rmd,\rmy},q_\rmx-q_{\rmd,\rmx})-\theta + \pi, \label{eq:psi}
  \end{align} 
and $k_1 = 0.2, k_2= 1$, and $k_3 = 2$. 
Note that the desired control is designed using a process similar to \cite[pp. 30--31]{de2002control}, and it drives $[ \, q_\rmx \quad q_\rmy \, ]^\rmT$ to $q_\rmd$ but does not account for safety.

We consider the cost~\eqref{eq:quad_cost}, where $Q(x)= I_2$ and $c(x) = -u_\rmd(x)$. 
Thus, the minimizer of~\eqref{eq:quad_cost} is equal to the minimizer $u_\rmd(x)$ of the minimum-intervention cost $\|u - u_\rmd(x)\|_2^2$.

We implement the control~\Cref{eq:b_def,eq:h_def,eq:u_sol,eq:omega,eq:Omega,eq:mu_sol,eq:lambda_sol,eq:den} with $\rho = 10$, $\gamma = 10^{24}$, 
$\alpha_{1,0}(h) = \ldots= \alpha_{7,0}(h) = 10h$, and $\alpha(h) = 0.5h$.
In this example, we use large $\gamma$ (i.e., $\gamma=10^{24}$) to enforce the constraint 
\eqref{eq:qp_softmin.b} without relying on the slack variable (i.e., $\mu \approx 0$). 
Smaller $\gamma$ can also be used to obtain more aggressive behavior, where the closed-loop response gets closer to the obstacles or speed limits. 
However, the choice of $\alpha_{j,0}$ in this example is relatively aggressive (i.e., $\alpha_{j,0}^\prime = 10$).
Hence, if smaller $\gamma$ is used, then $\alpha_{j,0}^\prime$ should be decreased to limit the magnitude of the time derivative of the control signals.
For sample-data implementation, the control is updated at $40~\rm{Hz}$.

Figure~\ref{fig:unicycle_hocbf_map} shows 2,500 closed-loop trajectories for $x_0=[\,-1\quad -8.5\quad 0\quad {\pi}/{2}\,]^\rmT$ and randomly distributed goal locations. 
For each case, all safety constraints are satisfied along the closed-loop trajectory. 
Certain goal locations in $\SSS_\rms$ cannot be reached, which is expected because the desired control \Cref{eq:u_d_ex,eq:u_d_1_ex,eq:u_d_2_ex,eq:psi} does not account for any safety constraints. 
In other words, \Cref{eq:u_d_ex,eq:u_d_1_ex,eq:u_d_2_ex,eq:psi} is not designed using knowledge of the obstacles or speed limits. 
For this example, increasing the gains $k_1,k_2,k_3$ in the desired control \Cref{eq:u_d_ex,eq:u_d_1_ex,eq:u_d_2_ex,eq:psi} enlarges the set of reachable goal locations.
Alternatively, designing a desired control that uses information about the obstacles can also enlarge the set of reachable goal locations.

Figure~\ref{fig:unicycle_hocbf_map} also highlights the closed-loop trajectories for 4 selected goal locations: $q_{\rmd}=[\,3\quad 4.5\,]^\rmT$, $q_{\rmd}=[\,-7\quad 9\,]^\rmT$, $q_{\rmd}=[\,5\quad 9.5\,]^\rmT$, and $q_{\rmd}=[\,-1\quad 7\,]^\rmT$. 
%The robot position converges to the goal location while satisfying the safety constraints.
Figures~\ref{fig:unicycle_hocbf_states} and~\ref{fig:unicycle_hocbf_h} show the trajectories of the relevant signals for the case where $q_{\rmd}=[\,3 \quad 4.5\,]^\rmT$.
% Figures~\ref{fig:unicycle_hocbf_states} shows that the robot position converges to the goal location and that the control is equal to the desired control except for $t \in [0.3, 0.9] \cup [1.5, 2.9] \cup [5, 5.6]$.
Figure~\ref{fig:unicycle_hocbf_h} shows that $h$, $\min b_{j,i}$, and $\min h_{j}$ are positive for all time, which implies that the trajectory remains in $\SSS \subseteq \SSS_\rms$.
The control $u$ deviates significantly from $u_\rmd$ at $t = 1.4$ s.
This occurs, in large part, because the desired control \Cref{eq:u_d_ex,eq:u_d_1_ex,eq:u_d_2_ex,eq:psi} does not account for the obstacles or speed limits. 
This effect can be mitigated by using a less aggressive choice for $\alpha_{j,0}$.

\begin{figure}[t!]
\center{\includegraphics[width=0.49\textwidth,clip=true,trim= 0in 0in 0in 0.in] {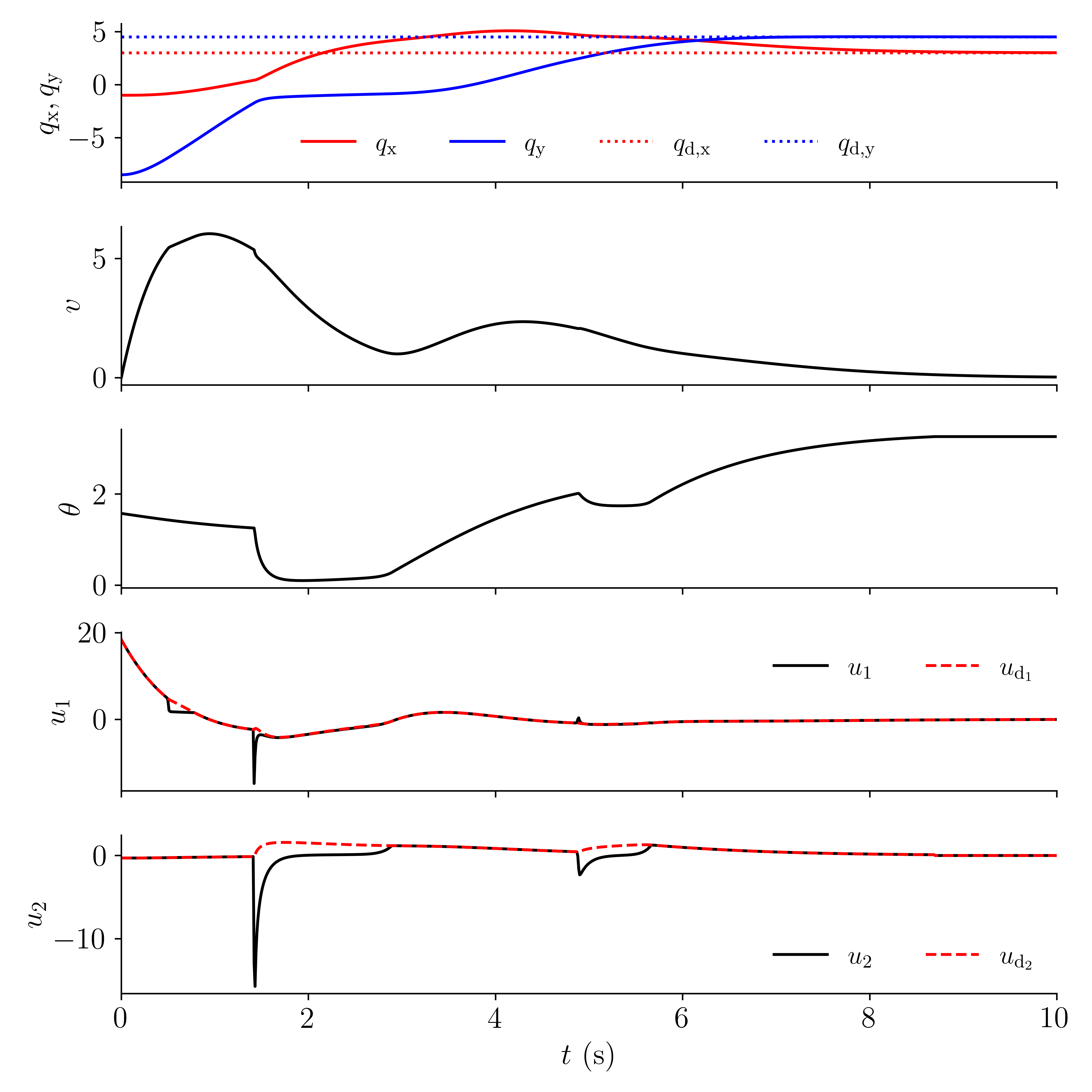}}
\caption{$q_\rmx$, $q_\rmy$, $v$, $\theta$, $u$ and $u_\rmd$ for $q_{\rmd}=[\,3 \quad 4.5\,]^\rmT$.}\label{fig:unicycle_hocbf_states}
\end{figure}

\begin{figure}[t!]
\center{\includegraphics[width=0.49\textwidth,clip=true,trim= 0.in 0in 0in 0in] {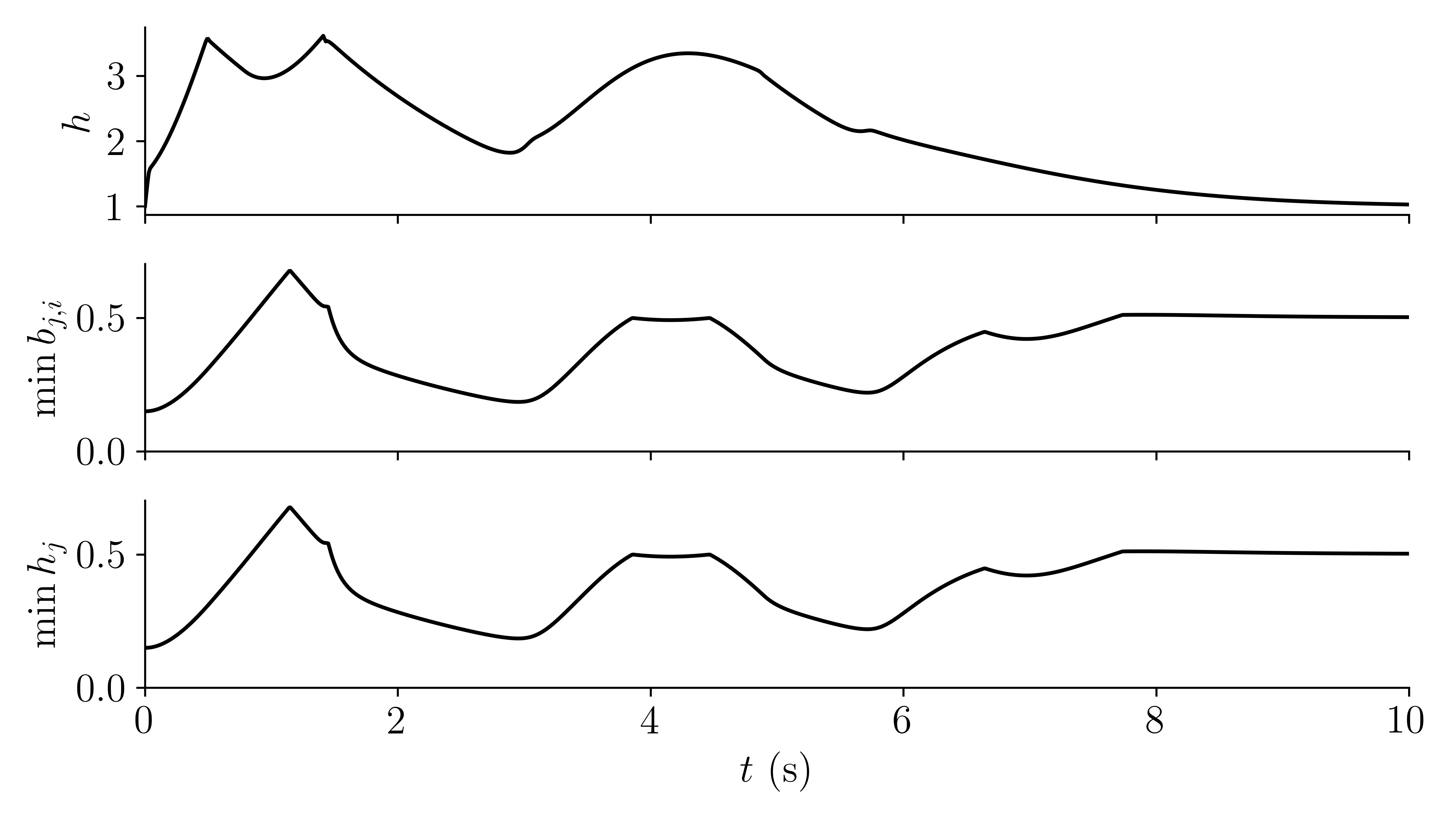}}
\caption{$h$, $\min b_{j,i}$, and $\min h_j$ for $q_{\rmd}=[\,3 \quad 4.5\,]^\rmT$.} \label{fig:unicycle_hocbf_h}
\end{figure}

For comparison, we present simulation results with an alternative control approach that uses multiple higher-order CBFs (HOCBFs).
Specifically, the control is generated from a quadratic program that has multiple constraints---one for each HOCBF (e.g., see~\cite{xiao2021high}).
In this example, $h_1, h_2, \ldots, h_9$ are each used to form individual CBF-based constraints for a quadratic program that uses the minimum-intervention cost.
All parameters are the same as in the above soft-minimum R-CBF implementation. 
Out of the 2,500 random goal locations, the multiple-HOCBF approach resulted in an infeasible quadratic program along 103 of the solution trajectories.
Some of these feasibility violations are minor and did not result in safety violations; however, this illustrates that feasibility is not generally guaranteed using multiple HOCBF-based constraints.
\Cref{ex:compare} in the next section demonstrates that this issue with multiple-HOCBF methods is exacerbated when there are also input constraints.

\Cref{fig:cf_qp_comparison} shows 4 solution trajectories using the soft-minimum R-CBF method and using the multiple-HOCBF method. 
Two of the trajectories shown are selected from the 103 where the multiple-HOCBF method is infeasible.
For these 2 cases, the trajectory ends at the time instant when the optimization is infeasible.

\begin{figure}[t!]
\center{\includegraphics[width=0.49\textwidth,clip=true,trim= 0.in 0.in 0in 0in] {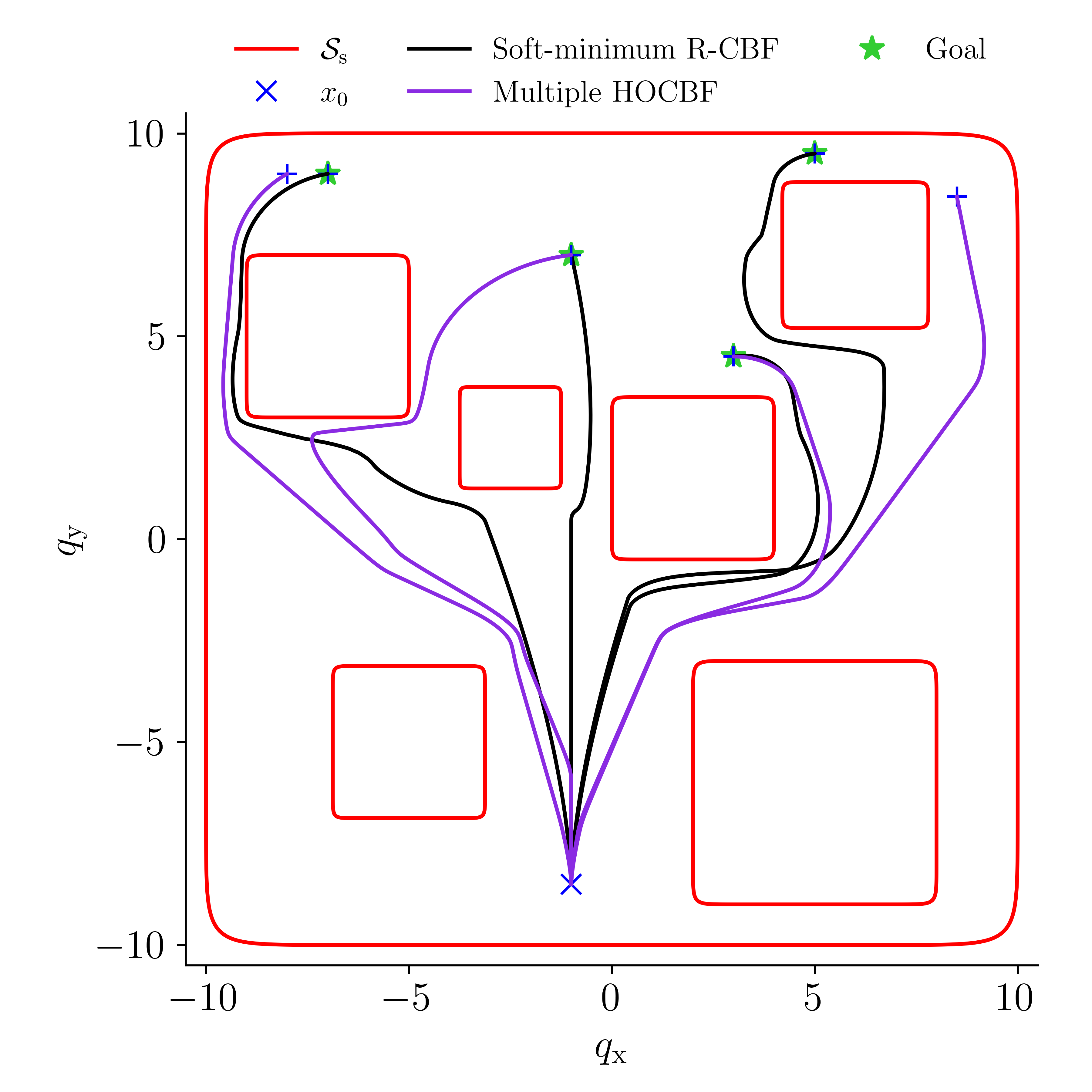}}
\caption{Closed-loop trajectories for 4 goal locations using the soft-minimum R-CBF method and the multiple-HOCBF method.}
\label{fig:cf_qp_comparison}
\end{figure}

Simulations are implemented in Python on a laptop computer with an Intel Core i9-9880H CPU\footnote{\href{https://github.com/pedramrabiee/hocbf\_composition}{https://github.com/pedramrabiee/hocbf\_composition}}. For the multiple-HOCBF approach, the quadratic program is solved using the OptNet package~\cite{amos2017optnet}. 
The average computation time for the soft-minimum R-CBF method is 2~ms per iteration, while it is 30~ms for the multiple-HOCBF approach, which illustrates the computational benefit of the soft-minimum R-CBF method.
In fact, the computational time for the soft-minimum R-CBF method is negligible if $L_fh$ and $Lgh$ are calculated analytically offline rather than in real time.
\exampletriangle
\end{example}

\section{Safety with Input Constraints}
\label{section:input constraints}

This section presents a closed-form optimal control that not only guarantees safety but also respects specified input constraints (e.g., actuator limits).
We reconsider the system \eqref{eq:dynamics}, safe set \eqref{eq:S_s}, and cost \eqref{eq:quad_cost}.
Next, let $\phi_1,\ldots,\phi_{\ell_{u}}: \BBR^m \to \BBR$ be continuously differentiable, and define the set of admissible controls
\begin{equation}\label{eq:U_def}
    \SU \triangleq \{ u\in \BBR^m: \phi_1(u) \ge 0, \ldots,  \phi_{\ell_{ u}}( u) \ge 0\} \subseteq \BBR^m.
\end{equation}
% which we assume is bounded and not empty.
%
We assume that for all $u \in \SU$ and all $\kappa \in \{1, 2, \ldots, \ell_{u}\}$, $\phi^\prime_\kappa(u) \neq 0$.
Unless otherwise stated, all statements in this section that involve the subscript $\kappa$ are for all $\kappa \in \{ 1,2,\ldots,\ell_u\}$.

The control objective is to design a full-state feedback control such that for all $t \in I(x_0)$, $ J(x(t), u(t))$ is minimized subject to the safety constraint that $x(t) \in \SSS_\rms$ and the input constraint that $u(t) \in \SU$.

Before presenting an approach to this general problem, the next example considers a special case where there is one safety constraint that is relative degree one and one input constraint.
The example illustrates the 3 key elements used to address safety with input constraints: (1) control dynamics to transform the input constraint into a controller-state constraint; (2) soft-minimum R-CBF to compose the safety constraint and controller-state constraint; and (3) a desired surrogate control (i.e., desired input to the control dynamics).

\begin{example}
\label{ex:input constraints simplified}
\rm

We consider a scenario, where there is one safety constraint (i.e., $\ell = 1$) that is relative degree one (i.e., $d_1 = 1$) and one input constraint (i.e., $\ell_u =1$). 
Thus, the safety constraint is described by the zero-superlevel set of $h_1$ and the input constraint is described by the zero-superlevel set of $\phi_1$.

First, consider a control $u$ generated by the first-order linear time-invariant (LTI) control dynamics
\begin{equation}
    \dot u(t) = -a u(t) + a \hat u(x(t),u(t)),\label{eq:dynamics_control.a.low_pass}
\end{equation}
where $a > 0$, $u(0) \in \BBR^m$ is the initial condition, and $\hat u: \BBR^n \times \BBR^m \to \BBR^m$ is the \textit{surrogate control} (i.e., input to the control dynamics). 
The cascade of~\cref{eq:dynamics,eq:dynamics_control.a.low_pass} is
\begin{equation}\label{eq:dynamics_aug.1.low_pass}
    \dot{\hat x} = \hat f(\hat x) + \hat g(\hat x) \hat u,
\end{equation}
where
\begin{gather}\label{eq:dynamics_aug.2.low_pass}
\hat x \triangleq  \begin{bmatrix}
            x \\ u
        \end{bmatrix}, \quad 
       \hat f( \hat x) \triangleq \begin{bmatrix}
            f(x) + g(x) u\\
            -a u
        \end{bmatrix},\quad
        \hat g(\hat x) \triangleq \begin{bmatrix}
            0\\
            a
        \end{bmatrix}. 
\end{gather}
Define 
\begin{equation*}
\hat \SSS_\rms \triangleq \SSS_\rms \times \SU,
\end{equation*}
which is the set of system-and-controller states $\hat x$ such that the safety constraint is satisfied (i.e., $h_1(x) \ge 0$) and the input constraint is satisfied (i.e., $\phi_1(u) \ge 0$).
Define 
\begin{equation*}
    \hat h_1(\hat x) \triangleq h_1(x), \qquad \hat h_2(\hat x) \triangleq \phi_1(u),
\end{equation*}
and for $j\in \{1,2\}$, the zero-superlevel sets are $$\hat \SC_{j,0} \triangleq \{ \hat x \in \BBR^{n+m} \colon \hat h_j(\hat x)\geq 0 \}.$$
Thus, $\hat S_\rms = \hat \SC \triangleq \hat \SC_{1,0} \cap \hat \SC_{2,0}$, which demonstrates that the control dynamics \eqref{eq:dynamics_control.a.low_pass} transforms the input constraint into a controller-state constraint
However, $\hat h_1$ and $\hat h_2$ have different relative degrees with respect to the cascade~\cref{eq:dynamics_aug.1.low_pass,eq:dynamics_aug.2.low_pass}; specifically, the relative degrees are $\hat d_1 = 2$ and $\hat d_2 = 1$.
Thus, the introducing control dynamics also increases the relative degree of the original safety constraint. 
Nevertheless, the soft-minimum R-CBF construction can be used to construct a single R-CBF from $\hat h_1$ and $\hat h_{2}$.

To use the soft-minimum R-CBF to compose $\hat h_1$ (safety constraint) and $\hat h_2$ (controller-state constraint), we follow the steps in \Cref{section:composite CBF} but where each symbol is replaced by that symbol with a hat (i.e., $\hat \cdot$).
Specifically, we define
\begin{gather*}
\hat b_{1,0}(\hat x) \triangleq \hat h_1(\hat x), \qquad \hat b_{2,0}(\hat x) \triangleq \hat h_2(\hat x),\\
\hat b_{1,1}(\hat x) \triangleq L_{\hat f} \hat h_1(\hat x) + \hat \alpha_{1,0}(\hat h_1(\hat x))
\end{gather*}
where $\hat \alpha_{1,0} \colon \BBR \to \BBR$ is a locally Lipschitz extended class-$\SK$ function. 
Then, the composite soft-minimum R-CBF is 
\begin{equation*}
    \hat h(\hat x) \triangleq \mbox{softmin}_\rho \Big ( \hat b_{1, 1}(\hat x), \hat b_{2, 0}(\hat x)\Big ),
\end{equation*}
and its zero-superlevel set is 
\begin{equation*}
\hat \SH \triangleq \{ \hat x \in \BBR^{n+m} \colon \hat h( \hat x) \ge 0 \}.
\end{equation*}
Next, \Cref{lem:rcbf} implies that $\hat h$ is an R-CBF if for all $x \in \hat \SB \triangleq \{ x \in \bd \hat \SH \colon L_{\hat f} \hat h(\hat x) \le 0 \}$, $L_{\hat g} \hat h(\hat x) \ne 0$. 
Thus, we aim to use $\hat h$ to construct a closed-form optimal surrogate control $\hat u$ such that $\hat x(t) \in \hat \SSS \triangleq \hat \SH \cap \hat \SC$, which implies that the safety and input constraints are satisfied. 
However, we cannot directly apply the process in \Cref{section:control} because the cost $J$ given by~\eqref{eq:quad_cost} is a function of $u$ rather than $\hat u$.

To address this issue, we define the \textit{desired surrogate control}
\begin{align}\label{eq:u_d_hat_def.low_pass}
    \hat u_\rmd(\hat x) &\triangleq u + \frac{1}{a} \Bigg( u_\rmd^\prime(x) \Big( f(x) + g(x) u \Big) + \sigma_0 \Big(u_\rmd(x) - u  \Big) \Bigg),
\end{align}
where $\sigma_0>0$ and $u_\rmd(x) \triangleq - Q(x)^{-1} c(x)$, which is the unconstrained minimizer of~\eqref{eq:quad_cost}.
We note that $\hat u_\rmd$ is a feedback linearizing input for the cascade~\cref{eq:dynamics_aug.1.low_pass,eq:dynamics_aug.2.low_pass}; it makes the error between the control $u$ and the desired control $u_\rmd$ satisfy asymptotically stable LTI dynamics.
More specifically, if $\hat u= \hat u_\rmd$, then substituting \eqref{eq:u_d_hat_def.low_pass} into \eqref{eq:dynamics_control.a.low_pass} yields
\begin{equation*}
   \frac{\rmd}{\rmd t} \left [ u(t) - u_\rmd(x(t)) \right ] + \sigma_0 \left [ u(t) - u_\rmd(x(t)) \right ] = 0.
\end{equation*} 
Hence, the desired surrogate control \eqref{eq:u_d_hat_def.low_pass} yields $u$ that converges exponentially to $u_\rmd$, which is the minimizer of $J$. 
Next, we consider the \textit{surrogate cost function}
\begin{equation*}
    \hat J(\hat x,\hat u) \triangleq \frac{1}{2} \| \hat u - \hat u_\rmd(\hat x) \|_2^2,
\end{equation*}
and we can obtain an optimal surrogate control $\hat u$ that satisfies safety and input constraints by solving the quadratic program \eqref{eq:qp_softmin}, where $h$, $L_f h$, $L_g h$, and $J$ are replaced by $\hat h$, $L_{\hat f} \hat h$, $L_{\hat g} \hat h$, and $\hat J$. 
This results in a closed-form for $\hat u$ that is given by~\Cref{eq:u_sol,eq:omega,eq:Omega,eq:mu_sol,eq:lambda_sol,eq:den}, where $Q=I_m$, $c=-\hat u_\rmd$, and $h$, $L_f h$, and $L_g h$ are replaced by $\hat h$, $L_{\hat f} \hat h$, and $L_{\hat g} \hat h$. 
\exampletriangle
\end{example}

The remainder of this section formalizes and generalizes the approach demonstrated in \Cref{ex:input constraints simplified}.
We generalize to an arbitrary number of safety constraints with potentially different relative degrees, an arbitrary number of input constraints, and a general class of nonlinear control dynamics.
Moreover, we analyze the properties of this approach.

\subsection{Control Dynamics to Transform Input Constraints into Controller-State Constraints}

To address safety with input constraints, consider a control $u$ that satisfies
\begin{align}
    \dot x_\rmc(t) &= f_\rmc(x_\rmc(t)) + g_\rmc(x_\rmc(t)) \hat u(x(t),x_\rmc(t)),\label{eq:dynamics_control.a}\\
    u(t) &= h_\rmc(x_\rmc(t)),\label{eq:dynamics_control.b}
\end{align}
where $x_\rmc(t) \in \BBR^{n_\rmc}$ is the controller state; $x_\rmc(0) = x_{\rmc0} \in \BBR^{n_\rmc}$ is the initial condition; $f_\rmc:\BBR^{n_\rmc}\to\BBR^{n_\rmc}$, $g_\rmc:\BBR^{n_\rmc}\to\BBR^{{n_\rmc}\times m}$, and $h_\rmc:\BBR^{n_\rmc}\to\BBR^m$ are locally Lipschitz on $\BBR^{n_\rmc}$; and $\hat u: \BBR^{n+n_\rmc} \to \BBR^m$ is the \textit{surrogate control} (i.e., input to the control dynamics), which is given by the closed-form solution to a quadratic program presented later in this section.

Define 
\begin{equation}\label{eq:S_c_def}
\SSS_{\rmc} \triangleq \{x_\rmc \in \BBR^{n_\rmc}:h_\rmc(x_\rmc) \in \SU\},
\end{equation}
which is the set of controller states such that the control is in $\SU$. 
Thus, the control dynamics~\cref{eq:dynamics_control.a,eq:dynamics_control.b} transforms the input constraint (i.e., $u(t) \in \SU$) into a constraint on the controller state (i.e., $x_\rmc(t) \in \SSS_\rmc$).

Next, the cascade of~\cref{eq:dynamics,eq:dynamics_control.a,eq:dynamics_control.b} is given by 
\begin{equation}\label{eq:dynamics_aug.1}
    \dot{\hat x} = \hat f(\hat x) + \hat g(\hat x) \hat u,
\end{equation}
where
\begin{gather}\label{eq:dynamics_aug.2}
    \qquad \hat x \triangleq  \begin{bmatrix}
            x \\ x_\rmc
        \end{bmatrix}, \qquad 
       \hat f( \hat x) \triangleq \begin{bmatrix}
            f(x) + g(x)h_\rmc(x_\rmc)\\
            f_\rmc(x_\rmc)
        \end{bmatrix},\\
        \hat g(\hat x) \triangleq \begin{bmatrix}
            0\\
            g_\rmc(x_\rmc)
        \end{bmatrix}, \qquad \hat x_0 \triangleq  \begin{bmatrix}
            x_0 \\ x_{\rmc 0}
        \end{bmatrix}, \label{eq:dynamics_aug.3}
\end{gather}
and $\hat n\triangleq n + n_\rmc$.
Define 
\begin{equation}\label{eq:hat Ss}
\hat \SSS_\rms \triangleq \SSS_\rms \times \SSS_{\rmc},
\end{equation}
which is the set of cascade states $\hat x$ such that the safety constraint (i.e., $x(t) \in \SSS_\rms$) and the input constraint (i.e., $u(t) \in \SU$) are satisfied. 
The next result summarizes this property. 
The proof is in Appendix~\ref{app:input_constraint_propositions}.

\begin{proposition}\label{prop:cascade_constraints}\rm
Assume that for all $t \in I(\hat x_0)$, $\hat x(t) \in \hat \SSS_\rms$.
Then, for all $t \in I(\hat x_0)$, $x(t) \in \SSS_\rms$ and $u(t) \in \SU$.
\end{proposition}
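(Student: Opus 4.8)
The plan is to prove this by directly unwinding the definitions; no invariance or dynamics argument is required, since the claim holds pointwise in $t$ from the product structure of $\hat\SSS_\rms$ together with the controller output equation. In effect this proposition is the bookkeeping step that certifies that the control-dynamics reformulation \eqref{eq:dynamics_control.a}--\eqref{eq:dynamics_control.b} correctly re-encodes the input constraint $u \in \SU$ as the controller-state constraint $x_\rmc \in \SSS_\rmc$.

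First I would fix an arbitrary $t \ge 0$ and invoke the hypothesis $\hat x(t) \in \hat\SSS_\rms$. By the definition \eqref{eq:hat Ss}, $\hat\SSS_\rms = \SSS_\rms \times \SSS_\rmc$, and by the partition of the cascade state in \eqref{eq:dynamics_aug.2}, the first block of $\hat x(t)$ is $x(t)$ and the second block is $x_\rmc(t)$. Membership in the Cartesian product therefore splits into the two componentwise conclusions $x(t) \in \SSS_\rms$ and $x_\rmc(t) \in \SSS_\rmc$, established separately. The first of these is already the safety claim.

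For the input claim I would route the second conclusion through the controller output map. Applying the definition \eqref{eq:S_c_def} of $\SSS_\rmc$ to $x_\rmc(t) \in \SSS_\rmc$ gives $h_\rmc(x_\rmc(t)) \in \SU$, and substituting the output equation \eqref{eq:dynamics_control.b}, namely $u(t) = h_\rmc(x_\rmc(t))$, yields $u(t) \in \SU$. Since $t \ge 0$ was arbitrary, both conclusions hold for all $t \ge 0$.

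I expect essentially no obstacle here: the result is a definitional consequence of \eqref{eq:S_c_def}, \eqref{eq:hat Ss}, and the output equation \eqref{eq:dynamics_control.b}. The only point demanding care is aligning the block ordering of $\hat x$ in \eqref{eq:dynamics_aug.2} with the product ordering $\SSS_\rms \times \SSS_\rmc$ in \eqref{eq:hat Ss}, so that the first block is extracted for the safety claim while the second block is passed through $h_\rmc$ for the input claim.
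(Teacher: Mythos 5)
Your proposal is correct and follows essentially the same argument as the paper's proof: fix an arbitrary time, split membership in the product $\hat\SSS_\rms = \SSS_\rms \times \SSS_\rmc$ into the two componentwise facts, and pass the controller-state constraint through \eqref{eq:S_c_def} and the output equation \eqref{eq:dynamics_control.b} to conclude $u(t) \in \SU$. No gaps; the paper's proof is just a terser version of the same definitional unwinding.
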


The functions $f_\rmc$, $g_\rmc$ and $h_\rmc$ are selected such that the following conditions hold:
\begin{enumC}

    \item \label{cond:hocbf_x_c.a}
    There exists a positive integer $d_\rmc$ such that for all $x_\rmc \in \BBR^{n_{\rmc}}$ and all $i \in \{0,1,\ldots,d_\rmc-2\}$, $L_{g_\rmc} L_{f_\rmc}^i h_\rmc(x_\rmc) = 0$; and for all $x_\rmc \in \BBR^{n_{\rmc}}$, $L_{g_\rmc} L_{f_\rmc}^{d_\rmc-1}h_\rmc(x_\rmc)$ is nonsingular.

    \item \label{cond:hocbf_x_c.d}
    There exists a positive integer $\zeta$ such that for  all $x_\rmc \in \BBR^{n_{\rmc}}$ and all $i \in \{0,1,\ldots,\zeta-2\}$, $L_{g_\rmc} L_{f_\rmc}^i \phi_\kappa(h_\rmc(x_\rmc)) = 0$; and for  all $x_\rmc \in \SSS_\rmc$, $L_{g_\rmc} L_{f_\rmc}^{\zeta-1}\phi_\kappa(h_\rmc(x_\rmc)) \neq 0$.

\end{enumC}

The following example provides one construction for $f_\rmc$, $g_\rmc$, and $h_\rmc$ such that~\ref{cond:hocbf_x_c.a} and \ref{cond:hocbf_x_c.d} are satisfied. 
In particular, the example demonstrates that~\ref{cond:hocbf_x_c.a} and \ref{cond:hocbf_x_c.d} are satisfied by any LTI construction of~\cref{eq:dynamics_control.a,eq:dynamics_control.b} where the first Markov parameter is nonsingular.

\begin{example}\label{ex:control dynamics}
\rm{
Let 
\begin{equation*}
f_\rmc(x_\rmc) = A_\rmc x_\rmc, \quad g_\rmc(x_\rmc) = B_\rmc, \quad h_\rmc(x_\rmc) = C_\rmc x_\rmc, 
\end{equation*}
where $A_\rmc \in \BBR^{n_\rmc \times n_\rmc}$, $B_\rmc \in \BBR^{n_\rmc \times m}$, $C_\rmc \in \BBR^{m\times n_\rmc}$, and $C_\rmc B_\rmc$ is nonsingular. 
Note that $L_{g_\rmc}h_\rmc(x_\rmc) = C_\rmc B_\rmc$, which implies that \ref{cond:hocbf_x_c.a} is satisfied with $d_\rmc = 1$. 
Next, note that $L_{g_\rmc}\phi_\kappa(h_\rmc(x_\rmc)) = \phi_\kappa^\prime(h_\rmc(x_\rmc)) C_\rmc B_\rmc$.
Since $C_\rmc B_\rmc$ is nonsingular and for all $x_\rmc \in \SSS_\rmc$, 
$\phi_\kappa^\prime(h_\rmc(x_\rmc))\neq 0$, it follows that \ref{cond:hocbf_x_c.d} is satisfied with $\zeta= 1$. 
Thus, any LTI controller with $C_\rmc B_\rmc$ nonsingular satisfies \ref{cond:hocbf_x_c.a} and \ref{cond:hocbf_x_c.d}.
In fact, any LTI controller where the first nonzero Markov parameter is nonsingular satisfies \ref{cond:hocbf_x_c.a} and \ref{cond:hocbf_x_c.d}.

The control dynamics \eqref{eq:dynamics_control.a.low_pass} used in \Cref{ex:input constraints simplified} is a special case where $A_\rmc = -a I_{n_\rmc}$, $B_\rmc = a I_{n_\rmc}$, $C_\rmc = I_{n_\rmc}$, which results in low-pass control dynamics.}
\exampletriangle
\end{example}

Next, we examine the relative degree of each safety and input constraint function with respect to the cascade~\cref{eq:dynamics_aug.1,eq:dynamics_aug.2,eq:dynamics_aug.3}.
Unless otherwise stated, all statements in this section that involve the subscript $\hat \jmath$ are for all $\hat \jmath \in \{1, 2, \ldots, \hat \ell\}$, where $\hat \ell \triangleq \ell + \ell_{ u}$. 
Let $\hat h_{\hat \jmath} \colon \BBR^{\hat n} \to \BBR$ be defined by 
\begin{equation}\label{eq:hat h}
   \hat  h_{\hat \jmath}( \hat x) \triangleq \begin{cases}
         h_{\hat \jmath}( x), & \mbox{if } {\hat \jmath}\in\{1,2,\ldots, \ell \},\\
        \phi_{{\hat \jmath} - \ell}(h_\rmc(x_\rmc)), & \mbox{if } {\hat \jmath}\in\{\ell+1,\ell+2,\ldots, \ell+\ell_u\},
    \end{cases}
\end{equation}
and define
\begin{equation}\label{eq:hat d}
    \hat d_{\hat \jmath} \triangleq \begin{cases}
        d_{\hat \jmath} + d_\rmc, & \mbox{if } {\hat \jmath}\in\{1,2,\ldots, \ell\},\\
        \zeta, & \mbox{if } {\hat \jmath}\in\{\ell+1,\ell+2,\ldots, \ell+\ell_u\}.
    \end{cases}
\end{equation}

The following result demonstrates that
 $\hat h_{\hat \jmath}$ has well-defined relative degree $\hat d_{\hat \jmath}$ with respect to the cascade~\cref{eq:dynamics_aug.1,eq:dynamics_aug.2,eq:dynamics_aug.3} on $\hat \SSS_\rms$; however, relative degrees $\hat d_1,\ldots,\hat d_{\hat \ell}$ are not all equal.
 The proof is in Appendix~\ref{app:input_constraint_propositions}.

\begin{proposition}\label{prop:cascade}\rm
Consider~\eqref{eq:dynamics}, where~\ref{cond:hocbf.a} is satisfied, and consider~\cref{eq:dynamics_control.a,eq:dynamics_control.b}, where~\ref{cond:hocbf_x_c.a} and \ref{cond:hocbf_x_c.d} are satisfied.
Then, for all $\hat x \in \BBR^{\hat n}$, all $\hat \jmath \in \{ 1,2,\ldots,\hat \ell \}$, and all $i \in \{ 0,1,\ldots, \hat d_{\hat \jmath}-2 \}$, $L_{\hat g} L_{\hat f}^i \hat h_{\hat \jmath}(\hat x) = 0$; and for all $\hat x \in \hat \SSS_\rms$ and all $\hat \jmath \in \{ 1,2,\ldots,\hat \ell \}$, $L_{\hat g}L_{\hat f}^{\hat d_{\hat \jmath}-1} \hat h_{\hat \jmath}(\hat x) \neq 0$. 
\end{proposition}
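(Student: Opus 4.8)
The plan is to verify the two relative-degree claims separately for the two blocks of $\hat h_{\hat\jmath}$ in \eqref{eq:hat h}, exploiting the cascade structure of \eqref{eq:dynamics_aug.1}--\eqref{eq:dynamics_aug.3}: the true input $\hat u$ enters only the $x_\rmc$-dynamics, the intermediate signal $u=h_\rmc(x_\rmc)$ then drives the $x$-dynamics, and both $\hat g$ and each $\hat h_{\hat\jmath}$ have block-triangular dependence on $(x,x_\rmc)$. The single identity I would use repeatedly is that, because the $x$-block of $\hat g$ is zero (see \eqref{eq:dynamics_aug.3}), for every $\psi$ one has $L_{\hat g}\psi(\hat x)=\frac{\partial \psi}{\partial x_\rmc}(\hat x)\,g_\rmc(x_\rmc)$; that is, $L_{\hat g}$ sees only the $x_\rmc$-dependence of its argument.

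For the input-constraint block $\hat\jmath\in\{\ell+1,\ldots,\ell+\ell_{u}\}$, the function $\hat h_{\hat\jmath}(\hat x)=\phi_{\hat\jmath-\ell}(h_\rmc(x_\rmc))$ depends on $x_\rmc$ only. Because the $x_\rmc$-block of $\hat f$ is exactly $f_\rmc$, a one-line induction gives $L_{\hat f}^k \hat h_{\hat\jmath}(\hat x)=L_{f_\rmc}^k \phi_{\hat\jmath-\ell}(h_\rmc(x_\rmc))$, and hence $L_{\hat g} L_{\hat f}^k \hat h_{\hat\jmath}(\hat x)=L_{g_\rmc} L_{f_\rmc}^k \phi_{\hat\jmath-\ell}(h_\rmc(x_\rmc))$ for all $k$. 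Evaluating on $\hat\SSS_\rms$, so that $x_\rmc\in\SSS_\rmc$, and invoking \ref{cond:hocbf_x_c.d} immediately yields vanishing for $k\le\zeta-2$ and nonvanishing at $k=\zeta-1=\hat d_{\hat\jmath}-1$.

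The safety block $\hat\jmath\in\{1,\ldots,\ell\}$ is where the work lies, since here $\hat h_{\hat\jmath}(\hat x)=h_{\hat\jmath}(x)$ reaches the input channel only after passing through both subsystems, so the relative degree must accumulate to $\hat d_{\hat\jmath}=d_{\hat\jmath}+d_\rmc$. Writing $F(x,x_\rmc)\triangleq f(x)+g(x)h_\rmc(x_\rmc)$ for the $x$-block of $\hat f$, differentiation along $\hat f$ of a scalar depending on $x$ only obeys $L_{\hat f}\eta=L_f\eta+(L_g\eta)\,h_\rmc(x_\rmc)$, so the coupling into $x_\rmc$ is weighted by $L_g\eta$. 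I would prove by induction on $k$ the structural statement that, on $\hat\SSS_\rms$, $L_{\hat f}^k h_{\hat\jmath}$ depends on $x_\rmc$ only through $h_\rmc, L_{f_\rmc}h_\rmc,\ldots,L_{f_\rmc}^{\,k-d_{\hat\jmath}}h_\rmc$, with no $x_\rmc$-dependence for $k\le d_{\hat\jmath}-1$, and with top-order $x_\rmc$-term equal to $\big(L_gL_f^{d_{\hat\jmath}-1}h_{\hat\jmath}\big)(x)\,L_{f_\rmc}^{\,k-d_{\hat\jmath}}h_\rmc(x_\rmc)$. The base case $k\le d_{\hat\jmath}-1$ is exactly \ref{cond:hocbf.a}: the weights $L_gL_f^{i}h_{\hat\jmath}$ vanish on $\SSS_\rms$ for $i\le d_{\hat\jmath}-2$, so no $x_\rmc$ enters and $L_{\hat f}^k h_{\hat\jmath}=L_f^k h_{\hat\jmath}$; the first coupling appears at $k=d_{\hat\jmath}$ through the nonzero weight $L_gL_f^{d_{\hat\jmath}-1}h_{\hat\jmath}$. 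In the inductive step, the $\frac{\partial}{\partial x_\rmc}\cdot f_\rmc$ part of $L_{\hat f}$ raises the top $L_{f_\rmc}$-order by exactly one, while the $\frac{\partial}{\partial x}\cdot F$ part either leaves the $x_\rmc$-order unchanged or multiplies by $h_\rmc$ (order zero), so no term outruns the claimed top order.

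Given this structure, the conclusion follows from the $L_{\hat g}$-identity: $L_{\hat g}L_{\hat f}^k h_{\hat\jmath}$ is a combination of $L_{g_\rmc}h_\rmc,\ldots,L_{g_\rmc}L_{f_\rmc}^{\,k-d_{\hat\jmath}}h_\rmc$. For $k\le d_{\hat\jmath}+d_\rmc-2=\hat d_{\hat\jmath}-2$ the highest such factor is $L_{g_\rmc}L_{f_\rmc}^{\,k-d_{\hat\jmath}}h_\rmc$ with exponent $k-d_{\hat\jmath}\le d_\rmc-2$, which vanishes on $\SSS_\rmc$ by \ref{cond:hocbf_x_c.a}; hence $L_{\hat g}L_{\hat f}^k h_{\hat\jmath}=0$ on $\hat\SSS_\rms$. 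At $k=\hat d_{\hat\jmath}-1$ only the top-order term survives, giving $L_{\hat g}L_{\hat f}^{\hat d_{\hat\jmath}-1}h_{\hat\jmath}=\big(L_gL_f^{d_{\hat\jmath}-1}h_{\hat\jmath}\big)(x)\,L_{g_\rmc}L_{f_\rmc}^{d_\rmc-1}h_\rmc(x_\rmc)$, which is nonzero on $\hat\SSS_\rms$ because $L_gL_f^{d_{\hat\jmath}-1}h_{\hat\jmath}$ is a nonzero row vector by \ref{cond:hocbf.a} and $L_{g_\rmc}L_{f_\rmc}^{d_\rmc-1}h_\rmc$ is nonsingular by \ref{cond:hocbf_x_c.a}. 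I expect the main obstacle to be the bookkeeping in this safety block: correctly tracking which Lie derivatives of $h_\rmc$ appear as $L_{\hat f}$ is iterated and checking that every term other than the single top-order one is annihilated on $\hat\SSS_\rms$ by one of the two relative-degree hypotheses, so that the cross terms weighted by the intermediate, $\SSS_\rms$-vanishing factors $L_gL_f^{i}h_{\hat\jmath}$ (and products such as $h_\rmc\cdot L_{f_\rmc}^{r}h_\rmc$) never reach order $L_{g_\rmc}L_{f_\rmc}^{d_\rmc-1}h_\rmc$ prematurely.
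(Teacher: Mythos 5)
Your proposal is correct and follows essentially the same route as the paper: the paper also splits \eqref{eq:hat h} into the two blocks, disposes of the input-constraint block by the same direct computation ($L_{\hat f}^i\hat h_b = L_{f_\rmc}^i\phi_{b-\ell}(h_\rmc(x_\rmc))$ plus \ref{cond:hocbf_x_c.d}), and handles the safety block by exactly the structural induction you describe, which the paper packages as Appendix~\ref{app:appendix_I} (Lemmas~\ref{lem:Lgh_upto_d1}--\ref{lem:app_func_x1} and Theorem~\ref{thm:appendix}) and which culminates in the same factorization $L_{\hat g}L_{\hat f}^{\hat d_{\hat\jmath}-1}\hat h_{\hat\jmath}(\hat x)=[L_gL_f^{d_{\hat\jmath}-1}h_{\hat\jmath}(x)]\,L_{g_\rmc}L_{f_\rmc}^{d_\rmc-1}h_\rmc(x_\rmc)$ with the same nonvanishing argument (nonzero row vector times nonsingular matrix). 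The only difference is organizational: the paper states the cascade induction as a standalone general theorem (reused later in Proposition~\ref{proposition:error}), whereas you inline it for this specific cascade.
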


\Cref{prop:cascade} illustrates that the control dynamics~\cref{eq:dynamics_control.a,eq:dynamics_control.b} increase the relative degree of each system-state constraint (i.e., safety constraint) by $d_\rmc$.
In general, it is desirable to design~\cref{eq:dynamics_control.a,eq:dynamics_control.b} such that $d_\rmc$ is low, which tends to limit lag. 
\Cref{ex:control dynamics} demonstrates that it is always possible to design~\cref{eq:dynamics_control.a,eq:dynamics_control.b} such that $d_\rmc=1$.

\subsection{Composite Soft-Minimum R-CBF}

\Cref{prop:cascade} implies that the cascade~\cref{eq:dynamics_aug.1,eq:dynamics_aug.2,eq:dynamics_aug.3} satisfy~\ref{cond:hocbf.a}, where $f$, $g$, $x$, $\SSS_\rmc$, $\ell$, $j$, $h_j$, and $d_j$ are replaced by $\hat f$, $\hat g$, $\hat x$, $\hat \SSS_\rmc$, $\hat \ell$, $\hat \jmath$, $\hat h_{\hat \jmath}$, and $\hat d_{\hat \jmath}$.
Thus, the composite soft-minimum R-CBF construction in Section~\ref{section:composite CBF} can be applied to the cascade~\cref{eq:dynamics_aug.1,eq:dynamics_aug.2,eq:dynamics_aug.3} to construct a single R-CBF from $\hat h_1,\ldots,\hat h_{\hat \ell}$, which do not all have the same relative degree. 
Note that $\hat h_1,\ldots,\hat h_{\hat \ell}$ describe the set $\hat \SSS_\rms = \SSS_\rms \times \SSS_\rmc$ that combines the safe set $\SSS_\rms$ and the set $\SSS_\rmc$ of controller states such that the control is in $\SU$. 
Thus, the soft-minimum R-CBF can address safety constraints and input constraints.

We construct this soft-minimum R-CBF by following 
\cref{eq:b_def,eq:C_j_i_def,eq:C_j_def,eq:C_def,eq:h_def,eq:defS_softmin,eq:S,eq:SB} in \Cref{section:composite CBF} but where each symbol is replaced by that symbol with a hat (i.e., $\hat \cdot$).
Specifically, let $\hat b_{{\hat \jmath},0}(\hat x) \triangleq \hat h_{\hat \jmath}(\hat x)$. 
For $i\in\{0, 1\ldots,\hat d_{\hat \jmath}-2\}$, let $\hat \alpha_{{\hat \jmath},i} \colon \BBR \to \BBR$ be a locally Lipschitz extended class-$\SK$ function, and consider $\hat b_{{\hat \jmath}, i+1}:\BBR^{\hat n} \to \BBR$ defined by
\begin{equation}\label{eq:b_def.2}
    \hat b_{{\hat \jmath},i+1}(\hat x) \triangleq L_{\hat f} \hat b_{{\hat \jmath},i}(\hat x) + \hat \alpha_{{\hat \jmath},i}(\hat b_{{\hat \jmath},i}(\hat x)).
\end{equation}
For $i \in \{0\ldots,\hat d_{\hat \jmath}-1\}$, define
\begin{equation*}
    \hat \SC_{{\hat \jmath},i} \triangleq \{\hat x\in \BBR^{\hat n} \colon \hat b_{{\hat \jmath},i}(\hat x) \ge 0\}.
\end{equation*}
Next, define
\begin{gather*}  
    \hat \SC_{\hat \jmath} \triangleq 
        \begin{cases}
        \hat \SC_{{\hat \jmath},0}, & \hat d_{\hat \jmath} = 1,\\
    \bigcap_{i=0}^{\hat d_{\hat \jmath} - 2} \hat \SC_{{\hat \jmath},i}, & \hat d_{\hat \jmath} > 1,
    \end{cases}
\end{gather*}
and 
\begin{equation*}  
    \hat \SC \triangleq \bigcap_{{\hat \jmath}=1}^{\hat \ell} \hat \SC_{\hat \jmath}.
\end{equation*}
Let $\rho > 0$, and consider $\hat h:\BBR^{\hat n} \to \BBR$ defined by
\begin{equation}\label{eq:h_def.2}
    \hat h(\hat x) \triangleq \mbox{softmin}_\rho \Big ( \hat b_{1, \hat d_1-1}(\hat x), \hat b_{2, \hat d_2-1}(\hat x), \ldots,\hat b_{\hat \ell, \hat d_{\hat \ell}-1}(\hat x) \Big ).
\end{equation}
Furthermore, define
\begin{gather*}
\hat \SH \triangleq \{ \hat x \in \BBR^{\hat n} \colon \hat h( \hat x) \ge 0 \},\qquad \hat \SSS\triangleq \hat \SH \cap \hat \SC,\\
\hat \SB \triangleq \{ \hat x \in \bd \hat \SH \colon L_{\hat f} \hat h(\hat x) \le 0 \}. 
\end{gather*}
\Cref{lem:rcbf} implies that if $L_{\hat g}\hat h$ is nonzero on $\hat \SB$, then $\hat h$ is an R-CBF.
The next result is a corollary of \Cref{prop:cont_fwd_S_C}, which is obtained by applying \Cref{prop:cont_fwd_S_C} to the cascade~\cref{eq:dynamics_aug.1,eq:dynamics_aug.2,eq:dynamics_aug.3}.

\begin{corollary} \label{prop:cont_fwd_S_C.2} 
\rm{
Consider~\eqref{eq:dynamics}, where~\ref{cond:hocbf.a} is satisfied, and consider~\cref{eq:dynamics_control.a,eq:dynamics_control.b}, where~\ref{cond:hocbf_x_c.a} and \ref{cond:hocbf_x_c.d} are satisfied.
Assume that $\hat h^\prime$ is locally Lipschitz on $\hat \SH$, and for all $\hat x \in \hat \SB$, $L_{\hat g} \hat h(\hat x) \ne 0$.
Then, $\hat \SSS$ is control forward invariant.
}
\end{corollary}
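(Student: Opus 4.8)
The plan is to prove this entirely as a corollary, by transplanting the composite soft-minimum CBF machinery of \Cref{section:composite CBF} onto the cascade~\cref{eq:dynamics_aug.1,eq:dynamics_aug.2,eq:dynamics_aug.3} and then invoking \Cref{prop:cont_fwd_S_C} verbatim. The first step is to observe that the cascade~\eqref{eq:dynamics_aug.1} is itself an instance of the control-affine form~\eqref{eq:dynamics}: since $f$, $g$, $h_\rmc$, $f_\rmc$, and $g_\rmc$ are locally Lipschitz, the maps $\hat f$ and $\hat g$ in~\cref{eq:dynamics_aug.2,eq:dynamics_aug.3} are locally Lipschitz, and $\hat u$ plays the role of the control. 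Hence \Cref{def:cont_fwd_inv} and every construction of \Cref{section:composite CBF} apply to~\eqref{eq:dynamics_aug.1} under the substitution $(f,g,x,\SSS_\rms,\ell,j,h_j,d_j)\mapsto(\hat f,\hat g,\hat x,\hat \SSS_\rms,\hat \ell,\hat \jmath,\hat h_{\hat \jmath},\hat d_{\hat \jmath})$.

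Next I would verify the relative-degree hypothesis~\ref{cond:hocbf.a} for the cascade under this substitution, which is exactly the conclusion of \Cref{prop:cascade}: under~\ref{cond:hocbf.a} for~\eqref{eq:dynamics} and~\ref{cond:hocbf_x_c.a}--\ref{cond:hocbf_x_c.d} for the control dynamics, each $\hat h_{\hat \jmath}$ has well-defined relative degree $\hat d_{\hat \jmath}$ on $\hat \SSS_\rms$. I would then note that the objects $\hat b_{\hat \jmath,i}$, $\hat \SC_{\hat \jmath}$, $\hat \SC$, $\hat h$, $\hat \SSS$, and $\hat \SB$ defined in this section are term-for-term the constructions~\cref{eq:b_def,eq:C_j_i_def,eq:C_j_def,eq:C_def,eq:h_def,eq:defS_softmin,eq:SB} evaluated on the cascade, so that the sets $\SC$, $\SSS$, $\SB$ and the function $h$ appearing in \Cref{prop:cont_fwd_S_C} correspond respectively to $\hat \SC$, $\hat \SSS$, $\hat \SB$, and $\hat h$.

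With this correspondence established, the two standing assumptions of the corollary---that $\hat h^\prime$ is locally Lipschitz on $\hat \SSS$ and that $L_{\hat g}\hat h(\hat x)\neq 0$ for all $\hat x\in\hat \SB$---are precisely the hypotheses of \Cref{prop:cont_fwd_S_C} under the substitution. Applying \Cref{prop:cont_fwd_S_C} to the cascade~\eqref{eq:dynamics_aug.1} then yields that $\hat \SSS \cap \hat \SC$ is control forward invariant, which is the claim.

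The main obstacle is purely bookkeeping: confirming that this section's definitions are identical to the \Cref{section:composite CBF} definitions after the substitution, and that the relative-degree condition transfers. Both are settled by \Cref{prop:cascade}, so there is no new analytic content---the forward-invariance argument via Nagumo's theorem and \Cref{lemma:fwd_inv} is carried out inside \Cref{prop:cont_fwd_S_C} and inherited unchanged by the cascade.
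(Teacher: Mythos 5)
Your proposal is correct and is exactly the paper's argument: the paper likewise treats this as an immediate corollary obtained by applying \Cref{prop:cont_fwd_S_C} to the cascade~\cref{eq:dynamics_aug.1,eq:dynamics_aug.2,eq:dynamics_aug.3}, with \Cref{prop:cascade} supplying the transferred relative-degree condition~\ref{cond:hocbf.a} so that all of the Section~\ref{section:composite CBF} constructions carry over under the hatted substitution. No gaps; the bookkeeping you describe is all that is needed.
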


\Cref{prop:cont_fwd_S_C.2} provides conditions under which $\hat \SSS \subseteq \hat \SSS_\rms$ is control forward invariant.
In this case, $\hat h$ is an R-CBF that can be used to generate a control such that for all $t \in I(\hat x_0)$, $\hat x(t) \in \hat \SSS \subseteq \hat \SSS_\rms$, which implies that $x(t) \in \SSS_\rms$ and $u(t) \in \SU$.

\subsection{Desired Surrogate Control}\label{section:surrogate}

Although $\hat h$ is an R-CBF, we cannot directly apply a quadratic program similar to~\eqref{eq:qp_softmin} because the cost $J$ given by~\eqref{eq:quad_cost} is a function of $u$ rather than the surrogate control $\hat u$.
Consider the \textit{desired control} $u_\rmd:\BBR^{\hat n} \to \BBR^m$ defined by
\begin{align}\label{eq:u_d_def}
u_\rmd(\hat x) \triangleq - Q(x)^{-1} c(x),
\end{align}
which is the minimizer of~\eqref{eq:quad_cost}.
Next, consider the \textit{desired surrogate control} $\hat u_\rmd: \BBR^{\hat n} \to \BBR^m$ defined by
\begin{align}\label{eq:u_d_hat_def}
    \hat u_\rmd(\hat x) \triangleq &\left(L_{g_\rmc} L_{f_\rmc}^{d_\rmc - 1} h_\rmc(x_\rmc)\right)^{-1} \Bigg( L_{\hat f}^{d_\rmc} u_\rmd(\hat x)-L_{f_\rmc}^{d_\rmc} h_\rmc(x_\rmc)\nn\\
    &\qquad+\sum_{i = 0}^{d_\rmc-1} \sigma_i \Big(L_{\hat f}^{i} u_\rmd(\hat x)-L_{f_\rmc}^i h_\rmc(x_\rmc)\Big)\Bigg),
\end{align}
where $\sigma_0, \sigma_1, \ldots, \sigma_{d_\rmc -1}>0$ are selected such that 
\begin{equation*}
\sigma(s) \triangleq s^{d_\rmc}+ \sigma_{d_\rmc -1} s^{d_\rmc -1} + \sigma_{d_\rmc -2} s^{d_\rmc - 2}+ \cdots + \sigma_1 s + \sigma_0
\end{equation*}
has all its roots in the open left-hand complex plane.

The desired surrogate control \eqref{eq:u_d_hat_def} is the generalization of \eqref{eq:u_d_hat_def.low_pass} to any $f_\rmc$, $g_\rmc$, $h_\rmc$, and $d_\rmc$ that satisfy~\ref{cond:hocbf_x_c.a} and \ref{cond:hocbf_x_c.d}.
In particular, \eqref{eq:u_d_hat_def} is a feedback linearizing input for the cascade~\cref{eq:dynamics_aug.1,eq:dynamics_aug.2,eq:dynamics_aug.3}; it makes the error between $u$ and $u_\rmd$ satisfy asymptotically stable LTI dynamics.
The next result formalizes this fact. 
The proof is in Appendix~\ref{app:input_constraint_propositions}.

\begin{proposition}\label{proposition:error}
\rm{
Consider~\eqref{eq:dynamics}, where~\ref{cond:hocbf.a} is satisfied, and consider~\cref{eq:dynamics_control.a,eq:dynamics_control.b}, where~\ref{cond:hocbf_x_c.a} is satisfied and $\hat u(\hat x)= \hat u_\rmd(\hat x)$. 
Then, the following statements hold:
\begin{enumalph}

\item \label{proposition:error.a}
The error $u - u_\rmd(\hat x)$ satisfies
\begin{align} 
& \frac{\rmd^{d_\rmc}}{\rmd t^{d_\rmc}}  \left [ u(t) - u_\rmd(\hat x(t)) \right ] \nn\\ 
&\qquad + \sum_{i=0}^{d_\rmc-1}  \sigma_{i} \frac{\rmd^i}{\rmd t^i} \left [ u(t) - u_\rmd(\hat x(t)) \right ] = 0.\label{eq:sum_ue}
\end{align} 

\item \label{proposition:error.b}
For all $\hat x_0 \in \BBR^{\hat n}$, $\lim_{t \to \infty} \left [ u(t) - u_\rmd(\hat x(t)) \right ] =0$ exponentially.

\item \label{proposition:error.c}
Assume that $Q$ is bounded. 
Then, for all $\hat x_0 \in \BBR^{\hat n}$,
    \begin{equation*}
    \lim_{t \to \infty} \Big [ J(x(t), u(t)) - J(x(t), u_\rmd(\hat x(t)))\Big ] = 0.
    \end{equation*}

\item \label{proposition:error.d}
Let $x_0\in \BBR^{n}$, and assume $x_{\rmc 0}\in \BBR^{n_\rmc}$ is such that for $i \in \{0,1,\ldots, d_\rmc -1\}$, $L_{f_\rmc}^i h_\rmc(x_{\rmc 0}) = L_{\hat f}^i u_\rmd(\hat x_0)$.
Then, $u(t) \equiv u_\rmd(\hat x(t))$.

\end{enumalph}
}
\end{proposition}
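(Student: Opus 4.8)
The plan is to prove \ref{proposition:error.a} first, as the remaining three statements follow from it by standard linear-systems reasoning. The crux is to show that, along closed-loop trajectories, the Lie-derivative expressions appearing in $\hat u_\rmd$ in \eqref{eq:u_d_hat_def} are exactly genuine time derivatives of $u$ and of $u_\rmd(\hat x)$. To this end I would first establish two differentiation identities. Since $u_\rmd(\hat x) = -Q(x)^{-1}c(x)$ depends on $\hat x$ only through $x$ and the top block of $\hat g$ is zero (see \eqref{eq:dynamics_aug.3}), one checks that $L_{\hat g} u_\rmd(\hat x) = 0$. Iterating, each application of $L_{\hat f}$ to $u_\rmd$ introduces dependence on one further term of the chain $h_\rmc(x_\rmc), L_{f_\rmc}h_\rmc(x_\rmc), \ldots$, and differentiating again in $x_\rmc$ produces a factor $L_{g_\rmc}L_{f_\rmc}^k h_\rmc$. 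By \ref{cond:hocbf_x_c.a} these factors vanish for $k \le d_\rmc - 2$, which yields $L_{\hat g}L_{\hat f}^i u_\rmd(\hat x) = 0$ for $i \in \{0,\ldots,d_\rmc - 1\}$ and hence $\tfrac{\rmd^i}{\rmd t^i}u_\rmd(\hat x(t)) = L_{\hat f}^i u_\rmd(\hat x(t))$ for $i \in \{0,\ldots,d_\rmc\}$. Analogously, \ref{cond:hocbf_x_c.a} gives $\tfrac{\rmd^i}{\rmd t^i}u(t) = L_{f_\rmc}^i h_\rmc(x_\rmc(t))$ for $i \in \{0,\ldots,d_\rmc-1\}$, while $\tfrac{\rmd^{d_\rmc}}{\rmd t^{d_\rmc}}u(t) = L_{f_\rmc}^{d_\rmc}h_\rmc(x_\rmc) + L_{g_\rmc}L_{f_\rmc}^{d_\rmc-1}h_\rmc(x_\rmc)\,\hat u$.

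Next I would substitute $\hat u = \hat u_\rmd(\hat x)$ from \eqref{eq:u_d_hat_def} into this last expression. The nonsingular factor $L_{g_\rmc}L_{f_\rmc}^{d_\rmc-1}h_\rmc$ cancels, and because $\sigma_{d_\rmc} = 1$ the $i=d_\rmc$ term cancels the leftover $L_{f_\rmc}^{d_\rmc}h_\rmc$. Replacing each surviving Lie derivative by the corresponding time derivative from the identities above and collecting terms, I expect to arrive at $\sum_{i=0}^{d_\rmc}\sigma_i \tfrac{\rmd^i}{\rmd t^i}u(t) = \sum_{i=0}^{d_\rmc}\sigma_i \tfrac{\rmd^i}{\rmd t^i}u_\rmd(\hat x(t))$, which is precisely \eqref{eq:sum_ue} after moving both sums to one side. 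This settles \ref{proposition:error.a}.

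The remaining parts are then quick. For \ref{proposition:error.b}, \eqref{eq:sum_ue} states that each component of the error $e \triangleq u - u_\rmd(\hat x)$ obeys the homogeneous linear ODE whose characteristic polynomial is $\sigma(s)$; since $\sigma$ is Hurwitz by construction, $e(t) \to 0$ exponentially. For \ref{proposition:error.c}, I would expand $J(x,u) - J(x,u_\rmd(\hat x))$ with $u = u_\rmd(\hat x) + e$ and use $u_\rmd = -Q^{-1}c$ (so $u_\rmd^\rmT Q = -c^\rmT$) to cancel the linear terms, leaving $J(x,u) - J(x,u_\rmd(\hat x)) = \tfrac12 e^\rmT Q(x)\,e$; boundedness of $Q$ together with \ref{proposition:error.b} then forces this to zero. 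For \ref{proposition:error.d}, the stated initial-condition matching $L_{f_\rmc}^i h_\rmc(x_{\rmc 0}) = L_{\hat f}^i u_\rmd(\hat x_0)$ makes $e(0) = \dot e(0) = \cdots = e^{(d_\rmc-1)}(0) = 0$, so uniqueness of solutions to the $d_\rmc$-th order ODE forces $e \equiv 0$.

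The hard part will be the first paragraph: carefully tracking, via the relative-degree hypothesis \ref{cond:hocbf_x_c.a}, exactly which time derivatives of $u$ and of $u_\rmd(\hat x)$ remain free of the auxiliary input $\hat u$, so that the Lie derivatives in \eqref{eq:u_d_hat_def} may legitimately be identified with honest time derivatives along trajectories. The bookkeeping of the induction establishing $L_{\hat g}L_{\hat f}^i u_\rmd = 0$ for $i \le d_\rmc - 1$ is where the care is required; once that identification is in hand, everything that follows is routine algebra and elementary linear ODE theory.
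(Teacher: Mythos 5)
Your proposal is correct and follows essentially the same route as the paper's proof: you derive \eqref{eq:sum_ue} by substituting $\hat u = \hat u_\rmd$ into the $d_\rmc$th time derivative of $u$ and identifying Lie derivatives with time derivatives via $L_{\hat g}L_{\hat f}^i u_\rmd(\hat x) = 0$ for $i \le d_\rmc - 1$, and then obtain parts \ref{proposition:error.b}--\ref{proposition:error.d} from the Hurwitz property of $\sigma$, the cancellation of cross terms in $J$ using $u_\rmd = -Q^{-1}c$, and ODE uniqueness, exactly as the paper does. The only difference is organizational: the inductive bookkeeping you sketch for $L_{\hat g}L_{\hat f}^i u_\rmd = 0$ (tracking dependence on the chain $h_\rmc(x_\rmc), L_{f_\rmc}h_\rmc(x_\rmc),\ldots$) is precisely what the paper packages as the general cascade results of Appendix~\ref{app:appendix_I}, invoked as \Cref{lem:app_func_x1} with $\hat\nu = u_\rmd$.
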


\begin{figure*}[ht]
\center{\includegraphics[width=\textwidth,clip=true,trim= 0.0in 0.0in 0.0in 0.0in] {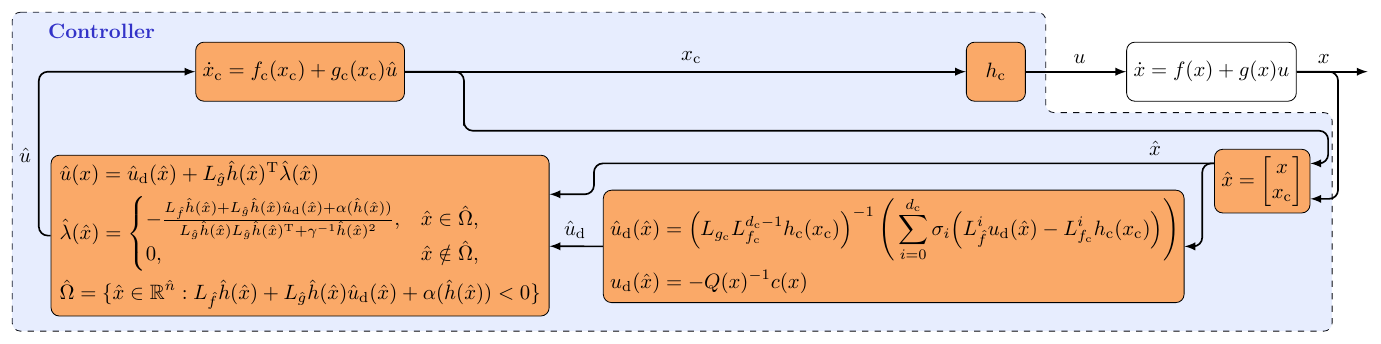}}
\caption{Closed-form optimal and safe control with input constraints. 
Control uses the composite soft-minimum R-CBF $\hat h$ to guarantee safety and input constraints are satisfied. 
Control minimizes cost subject to safety and input constraints.
}\label{fig:block_diag}
\end{figure*}

\Cref{proposition:error} implies that the control dynamics~\cref{eq:dynamics_control.a,eq:dynamics_control.b} with $\hat u=\hat u_\rmd$ yields a control $u$ that converges exponentially to the desired control $u_\rmd$, which is the minimizer of $J$. 
Thus, we introduce the surrogate cost  function
$\hat J \colon \BBR^{\hat n} \times \BBR^m\to \BBR$ defined by
\begin{equation}\label{eq:surrogate cost}
    \hat J(\hat x,\hat u) \triangleq \frac{1}{2} \| \hat u - \hat u_\rmd(\hat x) \|_2^2.
\end{equation}
Note that if $\hat u$ equals the minimizer $\hat u_\rmd$ of $\hat J$, then $u$ converges exponentially to the minimizer $u_\rmd$ of $J$.
In the next subsection, we design a control $\hat u$ with the goal that for all $t \in I(\hat x_0)$, $\hat J(\hat x(t), \hat u( \hat x(t))$ is minimized subject to the constraint that $\hat x(t) \in \hat \SSS \subseteq \hat \SSS_\rms$, which implies that the safety constraints and the input constraints are satisfied.

\subsection{Closed-Form Optimal and Safe Control with Input Constraints}

Let $\gamma > 0$, and let $\alpha:\BBR \to \BBR$ be a locally Lipschitz nondecreasing function such that $\alpha(0) = 0$. 
For all $\hat x \in \BBR^{\hat n}$, consider the control given by
\begin{subequations}\label{eq:qp_softmin.2}
\begin{align}
& \Big ( \hat u(\hat x), \hat \mu(\hat x) \Big ) \triangleq \underset{{\tilde u\in \BBR^m, \, \tilde \mu \in \BBR} }{\mbox{argmin}}  \, 
\hat J(\hat x, \tilde u) + \frac{1}{2}\gamma \tilde \mu^2 \label{eq:qp_softmin.a.2}\\
& \text{subject to}\nn\\
& L_{\hat f} \hat h(\hat x) + L_{\hat g} \hat h(\hat x) \tilde u + \alpha (\hat h(\hat x)) + \tilde \mu \hat h(\hat x) \ge 0. \label{eq:qp_softmin.b.2}
\end{align}
\end{subequations}
\Cref{lemma:qp_feas} applied to the cascade~\cref{eq:dynamics_aug.1,eq:dynamics_aug.2,eq:dynamics_aug.3} implies that if $L_{\hat g}\hat h$ is nonzero on $\hat \SB$, then the quadratic program \eqref{eq:qp_softmin.2} is feasible on $\BBR^{\hat n}$.
The following result provides a closed-form solution for the control $\hat u$ that satisfies \eqref{eq:qp_softmin.2}. 
This result is a corollary of \Cref{thm:lip}, which is obtained by applying \Cref{thm:lip} to the constrained optimization \eqref{eq:qp_softmin.2} and cascade~\cref{eq:dynamics_aug.1,eq:dynamics_aug.2,eq:dynamics_aug.3}.

\begin{corollary}{\rm
Assume that for all $\hat x \in \hat \SB$, $L_{\hat g} \hat h(\hat x) \ne 0$. 
Then,
\begin{equation}\label{eq:u_sol_ic}
    \hat u(\hat x) = \hat u_\rmd(\hat x) + L_{\hat g} \hat h(\hat x)^\rmT \hat \lambda(\hat x)
\end{equation}
where
\begin{equation}\label{eq:lambda_sol_ic}
    \hat \lambda(\hat x) =    
    \begin{cases}
     -\dfrac{L_{\hat f} \hat h(\hat x) + L_{\hat g} \hat h(\hat x) \hat u_\rmd(\hat x) + \alpha(\hat h(\hat x))}{L_{\hat g} \hat h(\hat x) L_{\hat g} \hat h(\hat x)^\rmT + \gamma^{-1} \hat h(\hat x)^2}, & \hat x\in \hat \Omega,\\
    0, & \hat x\notin \hat \Omega,
    \end{cases}
\end{equation}
and 
\begin{equation}\label{eq:Omega.2}
    \hat \Omega = \{\hat x\in\BBR^{\hat n}: L_{\hat f} \hat h(\hat x) + L_{\hat g} \hat h(\hat x)\hat u_\rmd(\hat x) +\alpha(\hat h(\hat x)) < 0\}.
\end{equation}
}\end{corollary}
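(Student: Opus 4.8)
The plan is to obtain this corollary as a direct specialization of \Cref{thm:lip} applied to the cascade \cref{eq:dynamics_aug.1,eq:dynamics_aug.2,eq:dynamics_aug.3} together with the composite soft-minimum CBF $\hat h$ in place of \eqref{eq:dynamics} and $h$. First I would check that the hypotheses of \Cref{thm:lip} transfer. By \Cref{prop:cascade}, the CBFs $\hat h_1,\ldots,\hat h_{\hat\ell}$ satisfy~\ref{cond:hocbf.a} for the cascade with $f,g,x,h_j,d_j$ replaced by $\hat f,\hat g,\hat x,\hat h_{\hat\jmath},\hat d_{\hat\jmath}$, so the construction of \Cref{section:composite CBF} applies and $\hat h$, $\hat\SSS$, $\hat\SB$ play the roles of $h$, $\SSS$, $\SB$. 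The only standing assumption needed for part~\ref{thm:lip.a} of \Cref{thm:lip} is that $L_gh(x)\ne 0$ on $\SB$, which here becomes exactly the corollary's hypothesis that $L_{\hat g}\hat h(\hat x)\ne 0$ for all $\hat x\in\hat\SB$.

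Second, I would recast the cost so that \eqref{eq:qp_softmin.2} becomes an instance of \eqref{eq:qp_softmin}. The surrogate cost \eqref{eq:surrogate cost} satisfies $\tfrac12\|\hat u-\hat u_\rmd(\hat x)\|_2^2 = \tfrac12\hat u^\rmT\hat u - \hat u_\rmd(\hat x)^\rmT\hat u + \tfrac12\|\hat u_\rmd(\hat x)\|_2^2$, where the last term is independent of $\hat u$ and hence does not affect the argmin. Therefore minimizing the surrogate cost is equivalent to minimizing the quadratic form $J$ of \eqref{eq:quad_cost} under the identifications $Q(\hat x)=I_m$ and $c(\hat x)=-\hat u_\rmd(\hat x)$. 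Since $I_m\in\BBP^m$ and, because $\hat u_\rmd$ in \eqref{eq:u_d_hat_def} is assembled from locally Lipschitz data, $c=-\hat u_\rmd$ is locally Lipschitz, the regularity requirements on $Q$ and $c$ in \eqref{eq:quad_cost} hold.

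Third, I would substitute $Q^{-1}=I_m$ and $c=-\hat u_\rmd$ into the closed-form expressions of part~\ref{thm:lip.a} of \Cref{thm:lip}. Equation \eqref{eq:u_sol} gives $\hat u(\hat x)=-I_m\bigl(-\hat u_\rmd(\hat x)-L_{\hat g}\hat h(\hat x)^\rmT\hat\lambda(\hat x)\bigr)=\hat u_\rmd(\hat x)+L_{\hat g}\hat h(\hat x)^\rmT\hat\lambda(\hat x)$, which is \eqref{eq:u_sol_ic}; the quantity $\omega$ in \eqref{eq:omega} becomes $L_{\hat f}\hat h(\hat x)+L_{\hat g}\hat h(\hat x)\hat u_\rmd(\hat x)+\alpha(\hat h(\hat x))$, so the set $\Omega$ in \eqref{eq:Omega} becomes \eqref{eq:Omega.2}; and the denominator $d$ in \eqref{eq:den} becomes $L_{\hat g}\hat h(\hat x)L_{\hat g}\hat h(\hat x)^\rmT+\gamma^{-1}\hat h(\hat x)^2$. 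Inserting these into \eqref{eq:lambda_sol} yields \eqref{eq:lambda_sol_ic}, completing the derivation.

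I expect no serious obstacle, since the result is a specialization rather than a fresh argument. The only points needing care are bookkeeping: confirming that dropping the additive constant $\tfrac12\|\hat u_\rmd(\hat x)\|_2^2$ does not change the minimizer, so that the surrogate QP genuinely reduces to \eqref{eq:qp_softmin}, and confirming the local Lipschitzness of $\hat u_\rmd$ so that $c=-\hat u_\rmd$ meets the hypotheses on \eqref{eq:quad_cost}. If the continuity and local-Lipschitz conclusions analogous to parts~\ref{thm:lip.c}--\ref{thm:lip.d} were also wanted, one would additionally invoke that $\hat h^\prime$ is locally Lipschitz; but the stated corollary asserts only the closed-form identities, for which part~\ref{thm:lip.a} alone suffices.
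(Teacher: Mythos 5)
Your proposal is correct and follows exactly the paper's route: the paper states this corollary is obtained by applying \Cref{thm:lip} to the cascade~\cref{eq:dynamics_aug.1,eq:dynamics_aug.2,eq:dynamics_aug.3} with the constrained optimization \eqref{eq:qp_softmin.2} in place of \eqref{eq:qp_softmin}, which is precisely your specialization with $Q=I_m$ and $c=-\hat u_\rmd$. Your added bookkeeping (dropping the $\hat u$-independent term $\tfrac12\|\hat u_\rmd(\hat x)\|_2^2$, checking regularity of $c$, and invoking \Cref{prop:cascade} so that~\ref{cond:hocbf.a} transfers to the cascade) is exactly what the paper leaves implicit.
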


\Cref{fig:block_diag} illustrates the architecture of the control~\Cref{eq:b_def.2,eq:h_def.2,eq:dynamics_control.a,eq:dynamics_control.b,eq:u_d_hat_def,eq:u_d_def,eq:Omega.2,eq:u_sol_ic,eq:lambda_sol_ic}. 
The following corollary is the main result on using this control for safety with input constraints.
This corollary is obtained by applying \Cref{thm:smocbf} to the cascade~\cref{eq:dynamics_aug.1,eq:dynamics_aug.2,eq:dynamics_aug.3}.

\begin{corollary}\label{cor:input_const}
\rm{
Consider~\eqref{eq:dynamics}, where~\ref{cond:hocbf.a} is satisfied, and consider $u$ given by~\Cref{eq:b_def.2,eq:h_def.2,eq:dynamics_control.a,eq:dynamics_control.b,eq:u_d_hat_def,eq:u_d_def,eq:Omega.2,eq:u_sol_ic,eq:lambda_sol_ic}, where~\ref{cond:hocbf_x_c.a} and \ref{cond:hocbf_x_c.d} are satisfied.
Assume that $\hat h^\prime$ is locally Lipschitz on $\hat \SSS$, and for all $\hat x \in \hat \SB$, $L_{\hat g}\hat h(\hat x) \ne 0$.
Let $\hat x_0\in \hat \SSS$.
Then, for all $t \in I(\hat x_0)$, $\hat x(t) \in \hat \SSS$, $x(t) \in \SSS_\rms$, and $u(t) \in \SU$.
}
\end{corollary}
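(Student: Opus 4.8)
The plan is to prove this as a direct corollary of \Cref{thm:smocbf} applied to the cascade \cref{eq:dynamics_aug.1,eq:dynamics_aug.2,eq:dynamics_aug.3}, exactly as the paragraph preceding the statement advertises. First I would fix the dictionary of substitutions that maps the cascade into the abstract setting of \Cref{section:composite CBF,section:control}: replace $(f,g,x,\SSS_\rms,\ell,j,h_j,d_j)$ by $(\hat f,\hat g,\hat x,\hat\SSS_\rms,\hat\ell,\hat\jmath,\hat h_{\hat\jmath},\hat d_{\hat\jmath})$, so that the objects $b_{j,i},h,\SSS,\SC,\SB$ correspond to $\hat b_{\hat\jmath,i},\hat h,\hat\SSS,\hat\SC,\hat\SB$. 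The key enabling step is to invoke \Cref{prop:cascade}, which shows that each $\hat h_{\hat\jmath}$ has well-defined relative degree $\hat d_{\hat\jmath}$ on $\hat\SSS_\rms$; that is, the cascade together with the CBFs $\hat h_1,\ldots,\hat h_{\hat\ell}$ satisfies the analogue of \ref{cond:hocbf.a}, which is precisely the structural hypothesis \Cref{thm:smocbf} requires. I would also record that $\hat u$ defined by \cref{eq:u_sol_ic,eq:lambda_sol_ic,eq:Omega.2} is the closed-form global minimizer of \eqref{eq:qp_softmin.2} furnished by the corollary of \Cref{thm:lip}, and that since $\hat h^\prime$ is locally Lipschitz on $\hat\SSS$, the local-Lipschitz conclusion of that corollary makes the closed-loop cascade field $\hat f+\hat g\hat u$ locally Lipschitz, so \eqref{eq:dynamics_aug.1} has a unique maximal solution.

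With the dictionary in place and the two standing hypotheses verified ($\hat h^\prime$ locally Lipschitz on $\hat\SSS$, and $L_{\hat g}\hat h(\hat x)\neq 0$ for all $\hat x\in\hat\SB$), \Cref{thm:smocbf} transcribes verbatim to the cascade and yields that, for $\hat x_0\in\hat\SSS\cap\hat\SC$, the maximal solution satisfies $\hat x(t)\in\hat\SSS\cap\hat\SC$ for all $t$ in its interval of existence. I would then project this back to the original variables by chaining the inclusions $\hat\SSS\cap\hat\SC\subseteq\hat\SC\subseteq\hat\SSS_\rms=\SSS_\rms\times\SSS_\rmc$, where the second inclusion is the cascade analogue of $\SC\subseteq\SSS_\rms$ and the final identity is \eqref{eq:hat Ss}. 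Hence $x(t)\in\SSS_\rms$ and $x_\rmc(t)\in\SSS_\rmc$, and by \eqref{eq:dynamics_control.b} and \eqref{eq:S_c_def} the latter gives $u(t)=h_\rmc(x_\rmc(t))\in\SU$. This last translation is exactly the content of the (unlabeled) proposition asserting that $\hat x\in\hat\SSS_\rms$ implies $x\in\SSS_\rms$ and $u\in\SU$, which I would cite here.

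The step I expect to demand the most care is upgrading the conclusion from ``for all $t$ in the maximal interval $I(\hat x_0)$'' --- which is all that \Cref{thm:smocbf} literally delivers --- to the stated ``for all $t\ge 0$''. This is a global-existence claim, not a forward-invariance claim. My intended argument is the standard escape-time dichotomy: because the closed-loop cascade field is locally Lipschitz, a finite $t_{\max}(\hat x_0)<\infty$ would force the solution to leave every compact subset of $\BBR^{\hat n}$, and confinement to the forward-invariant set $\hat\SSS\cap\hat\SC$ rules this out \emph{provided that set is bounded} (as it is, for instance, whenever the safe set and $\SU$ are bounded, which holds in the ground-robot example). Under such a boundedness condition one obtains $t_{\max}(\hat x_0)=\infty$ and hence the conclusion for all $t\ge 0$; I would state this dependence explicitly rather than leave it buried in the ``for all $t\ge 0$'' phrasing, since absent a boundedness hypothesis the clean and fully rigorous statement is that the conclusions hold on $I(\hat x_0)$.
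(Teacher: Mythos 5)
Your proposal is correct and follows exactly the paper's route: the paper offers no separate proof of this corollary, stating only that it is a consequence of applying \Cref{thm:smocbf} to the cascade \cref{eq:dynamics_aug.1,eq:dynamics_aug.2,eq:dynamics_aug.3}, with \Cref{prop:cascade} supplying the relative-degree hypothesis \ref{cond:hocbf.a} for the cascade and the unlabeled proposition following \eqref{eq:hat Ss} providing the translation from $\hat x(t) \in \hat \SSS_\rms$ to $x(t) \in \SSS_\rms$ and $u(t) \in \SU$ --- precisely the dictionary you set up. Your closing caveat is also well taken and is, if anything, more careful than the paper: \Cref{thm:smocbf} literally delivers the invariance conclusion only on the maximal interval $I(\hat x_0)$, and the paper silently upgrades this to ``for all $t \ge 0$'' without any argument excluding finite escape time, so your explicit boundedness (compactness) condition, or a restatement of the conclusion on $I(\hat x_0)$, is the rigorous way to close that step.
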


\subsection{Ground Robot Example Revisited with Input Constraints}

We present examples to demonstrate the control~\Cref{eq:b_def.2,eq:h_def.2,eq:dynamics_control.a,eq:dynamics_control.b,eq:u_d_hat_def,eq:u_d_def,eq:Omega.2,eq:u_sol_ic,eq:lambda_sol_ic}.
The first example compares this control to an approach that uses multiple HOCBFs constraints in combination with input constraints.

\begin{example}\label{ex:compare}
\rm

We consider the nonholonomic ground robot from \Cref{example:nonholonomic_no_input_const}, and the map shown in \Cref{fig:feas_ex_traj}, which has one obstacle.
The area outside the obstacle is modeled by the zero-superlevel set of 
\begin{equation*}
    h_1(x) = 1 - \| [q_\rmx \quad q_\rmy]^\rmT \|_2 / 4.
\end{equation*}
The bounds on speed $v$ are modeled as the the zero-superlevel sets of 
\begin{equation*}
    h_2(x) = v+ 0.5, \qquad h_3(x) = v - 2.
\end{equation*}
We also consider input constraints. 
Specifically, the control must remain in the admissible set $\SU$ given by~\eqref{eq:U_def}, where 
\begin{align*}
    \phi_1(u) &= 2 - u_1, \qquad
    \phi_2(u) = u_1 + 2,\\
    \phi_3(u) &= 1 - u_2,\qquad
    \phi_4(u) = u_2 + 1,
\end{align*}
which implies $\SU = \{ u \in \BBR^2 \colon | u_1 | \le 2 \mbox{ and } |u_2| \le 1 \}$.
The desired control $u_\rmd$ and cost~\eqref{eq:quad_cost} are the same as in~\Cref{example:nonholonomic_no_input_const}.
Thus, $u_\rmd$ is given by \Cref{eq:u_d_ex,eq:u_d_1_ex,eq:u_d_2_ex,eq:psi}; and we consider the cost~\eqref{eq:quad_cost}, where $Q(x)= I_2$ and $c(x) = -u_\rmd(x)$, which is the minimum-intervention problem. 
We compare 2 control approaches.

The first approach is the soft-minimum R-CBF control~\Cref{eq:b_def.2,eq:h_def.2,eq:dynamics_control.a,eq:dynamics_control.b,eq:u_d_hat_def,eq:u_d_def,eq:Omega.2,eq:u_sol_ic,eq:lambda_sol_ic}. 
We use the control dynamics~\cref{eq:dynamics_control.a,eq:dynamics_control.b}, where $f_\rmc$, $g_\rmc$, and $h_\rmc$ are given by \Cref{ex:control dynamics} with $A_\rmc =\matls -1 &  0 \\ 0 & -0.6 \matrs$, $B_\rmc = -A_\rmc$, and $C_\rmc = I_2$.  
Thus, the control dynamics are LTI and low pass.
Conditions~\ref{cond:hocbf_x_c.a} and \ref{cond:hocbf_x_c.d} are satisfied with $d_\rmc = 1$ and $\zeta = 1$, and~\eqref{eq:hat d} implies that $\hat \ell = 7$, $\hat d_1 = 3$, $\hat d_2 = \hat d_3 = 2$, $\hat d_4 = \hat d_5= \hat d_6 = \hat d_7 = 1$. 
We implement the control~\Cref{eq:b_def.2,eq:h_def.2,eq:dynamics_control.a,eq:dynamics_control.b,eq:u_d_hat_def,eq:u_d_def,eq:Omega.2,eq:u_sol_ic,eq:lambda_sol_ic} with $\gamma = 100$, $\hat \alpha_{1,0}(h) =50h$, $\hat \alpha_{1,1}(h) = \hat \alpha_{2,0}(h) = \hat \alpha_{3,0}(h) = \alpha(h) = h$, and $x_{\rmc0}=[\,0\quad0\,]^\rmT$.
All other parameters are the same as in \Cref{example:nonholonomic_no_input_const}.

The second approach generates the control from a quadratic program with multiple HOCBF constraints and explicit input constraints (e.g., see~\cite{xiao2021high}). 
Specifically, the control $u$ and slack variables $\mu_1,\mu_2,\mu_3$ are computed by minimizing the cost 
\begin{equation*}
J_{\rm HOCBF}(u,\mu_1,\mu_2,\mu_3) = 
\| u - u_\rmd(x)\|_2^2 + \frac{1}{2}\sum_{j=1}^3 \gamma_j \mu_j^2,
\end{equation*}
subject $u \in \SU$ and 
\begin{align*}
L_f b_{1,1}(x) + L_g b_{1,1}(x) u + \alpha_1 (b_{1,1}(x)) + \mu_1 b_{1,1}(x) &\ge 0, \\
L_f h_2(x) + L_g h_2(x) u + \alpha_2 (h_2(x)) + \mu_2 h_2(x) &\ge 0, \\
L_f h_3(x) + L_g h_3(x) u + \alpha_3 (h_3(x)) + \mu_3 h_3(x) &\ge 0, 
\end{align*}
where $b_{1,1}$ is given by \eqref{eq:b_def} with $\alpha_{1,0}(h) =50h$, $\gamma_1 = \gamma_2 = \gamma_3 = 100$, and $\alpha_{1}(h) = \alpha_{2}(h) = \alpha_{3}(h) = h$.
Both methods are implemented at 50 Hz.

\Cref{fig:feas_ex_traj} shows the closed-loop trajectories for both methods with $x_0=[\,-3\quad -2\quad 1\quad \frac{\pi}{4}\,]^\rmT$ and the goal location $q_\rmd = [\,3\quad 3\,]^\rmT$, and \Cref{fig:feas_ex_u,fig:feas_ex_h} show the time histories of $h_j$ and $u$.
\Cref{fig:feas_ex_traj,fig:feas_ex_h} demonstrates that the multiple-HOCBF method does not satisfy safety; the robot collides with the obstacle at $t = 1.24$ s.
This failure occurs because the state and input constraints in the multiple-HOCBF method are completely independent.
Thus, the robot is able to enter a subset of the state space, where collision becomes unavoidable. 
As shown in \Cref{fig:feas_ex_u}, the controls saturate prior to the collision but there is insufficient control authority to avoid collision. 
In contrast, the soft-minimum R-CBF method uses the control dynamics and soft-minimum R-CBF to integrate all constraints and guarantee safety. 
As shown in \Cref{fig:feas_ex_traj,fig:feas_ex_u,fig:feas_ex_h}, the trajectory with the soft-minimum R-CBF method satisfies the safety and input constraints, and the robot reaches its destination.

\begin{figure}[ht!]
\center{\includegraphics[width=0.49\textwidth,clip=true,trim= 0.in 0.in 0.in 0.in] {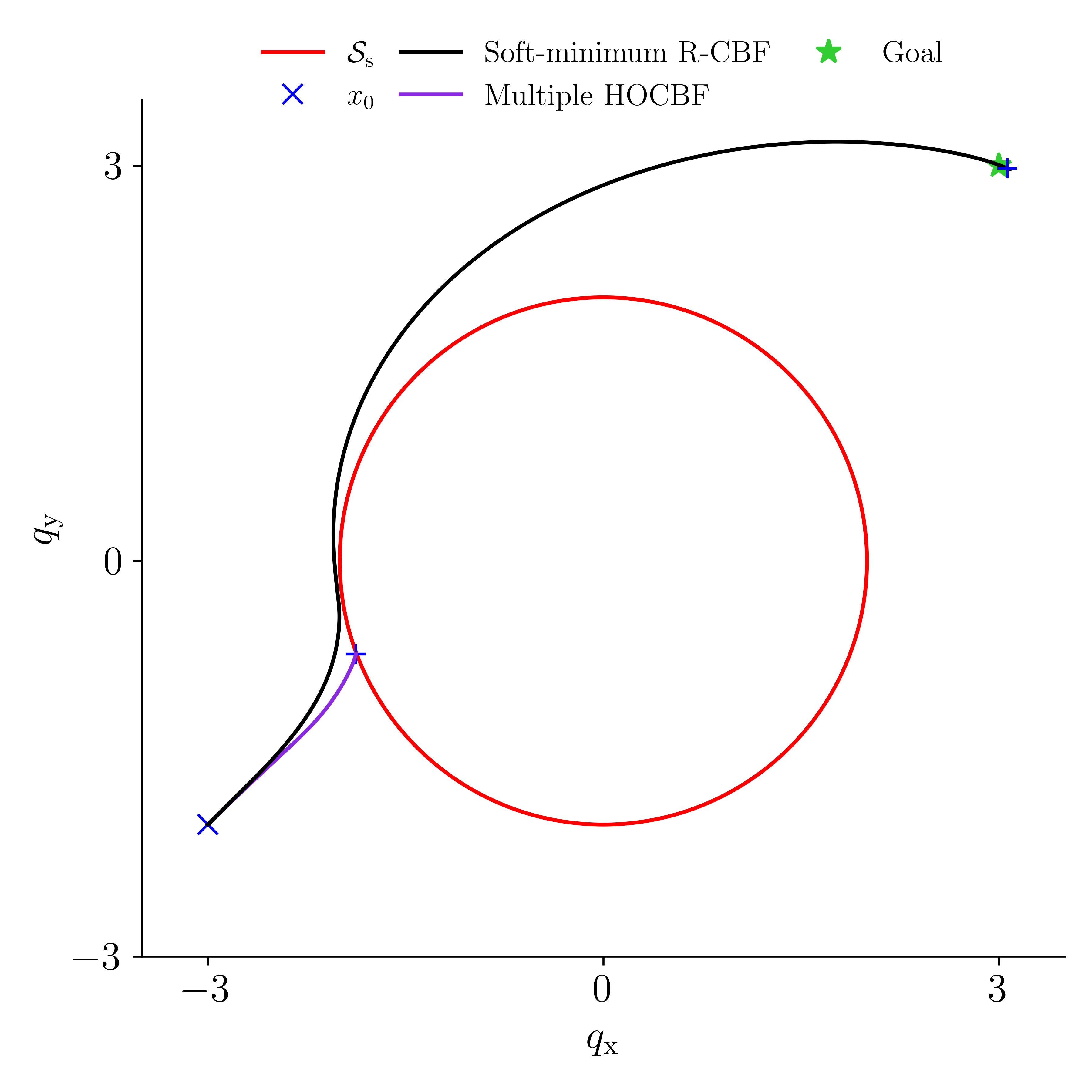}}
\caption{Closed-loop trajectories for soft-minimum R-CBF and multiple-HOCBF methods} \label{fig:feas_ex_traj}
\end{figure}

\begin{figure}[ht!]
\center{\includegraphics[width=0.49\textwidth,clip=true,trim= 0.in 0.in 0.in 0.in] {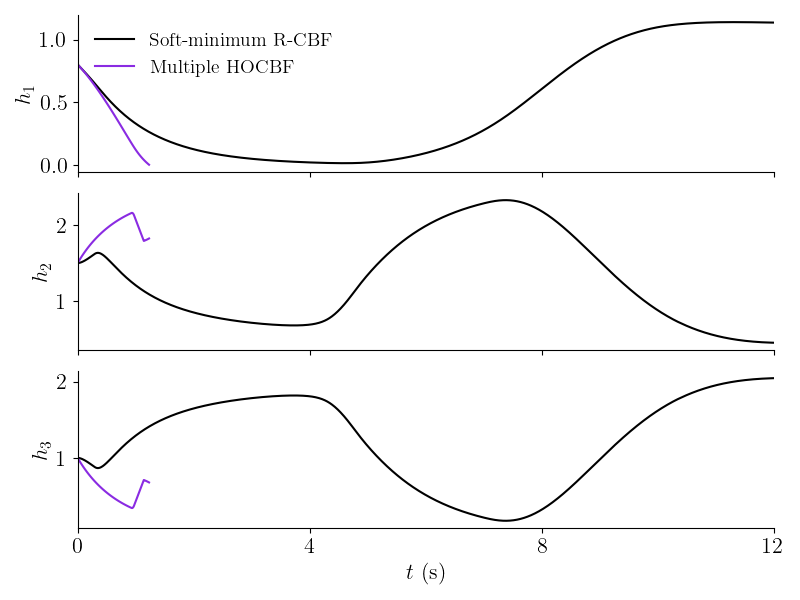}}
\caption{$h_1$, $h_2$, nd $h_3$ for soft-minimum R-CBF and multiple-HOCBF methods} \label{fig:feas_ex_h}
\end{figure}

\begin{figure}[ht!]
\center{\includegraphics[width=0.49\textwidth,clip=true,trim= 0.in 0.in 0.in 0.in] {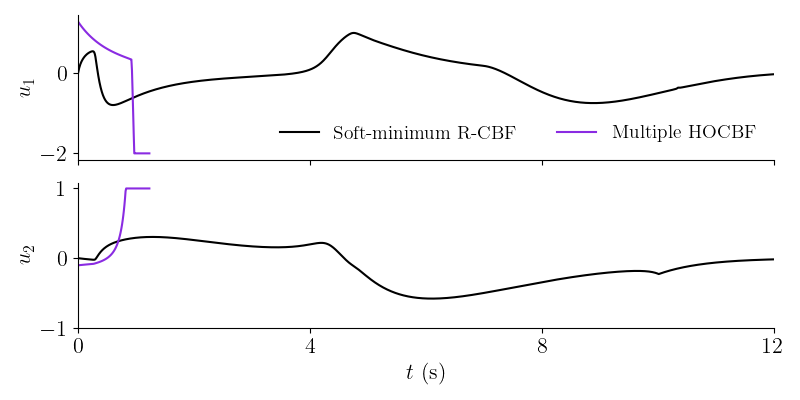}}
\caption{$u$ for soft-minimum R-CBF and multiple-HOCBF methods} \label{fig:feas_ex_u}
\end{figure}

In this simple example, it is possible to prevent collision with the multiple-HOCBF method by selecting more conservative values for $\gamma_1$, $\gamma_2$, $\gamma_3$, $\alpha_{1}$, $\alpha_{2}$, and $\alpha_{3}$.
However, this is problem specific, and it may not be possible to select conservative enough tuning in more complicated scenarios. 
Moreover, it is difficult to determine if a specific tuning is conservative enough to result in safe trajectories. 
In contrast, \Cref{cor:input_const} shows that the soft-minimum R-CBF method is safe for any choice of parameters. 
\exampletriangle
\end{example}

The next example revisits \Cref{example:nonholonomic_no_input_const} but includes input constraints as well as safety constraints.

\begin{example}\label{ex:input_const}\rm
We revisit the nonholonomic ground robot from~\Cref{example:nonholonomic_no_input_const} and include not only safety constraints but also input constraints. 
The safe set $\SSS_\rms$, desired control $u_\rmd$, and cost~\eqref{eq:quad_cost} are the same as in~\Cref{example:nonholonomic_no_input_const}.
Thus, $h_1,\ldots,h_9$ are given by \Cref{eq:h_obs,eq:h_wall,eq:h_speed.b}; $u_\rmd$ is given by \Cref{eq:u_d_ex,eq:u_d_1_ex,eq:u_d_2_ex,eq:psi}; and we consider the cost~\eqref{eq:quad_cost}, where $Q(x)= I_2$ and $c(x) = -u_\rmd(x)$, which is the minimum-intervention problem.

We also consider input constraints. 
Specifically, the control must remain in the admissibile set $\SU$ is given by~\eqref{eq:U_def}, where 
\begin{align*}
    \phi_1(u) &= 4 - u_1, \qquad
    \phi_2(u) = u_1 + 4,\\
    \phi_3(u) &= 1 - u_2,\qquad
    \phi_4(u) = u_2 + 1,
\end{align*}
which implies $\SU = \{ u \in \BBR^2 \colon | u_1 | \le 4 \mbox{ and } |u_2| \le 1 \}$ and $\hat \ell = 13$.

We use the low-pass LTI control dynamics from \Cref{ex:compare}, and~\eqref{eq:hat d} implies that $\hat d_1 = \cdots =\hat d_7=3$, $\hat d_8 = \hat d_9 = 2$, and $\hat d_{10} = \cdots = \hat d_{13} =1$.
We implement the control~\Cref{eq:b_def.2,eq:h_def.2,eq:dynamics_control.a,eq:dynamics_control.b,eq:u_d_hat_def,eq:u_d_def,eq:Omega.2,eq:u_sol_ic,eq:lambda_sol_ic} with $\rho = 10$, $\gamma = 200$, $\hat \alpha_{1,0}(h) = \ldots= \hat \alpha_{7,0}(h) = h$, $\hat \alpha_{1,1}(h) = \ldots= \hat \alpha_{6,1}(h) = 2.5h$, $\hat \alpha_{7,1}(h) = h$, $\hat \alpha_{8,0}(h) = \hat \alpha_{9,0}(h) = 10h$, $\alpha(h) = h$, and $\sigma_0 = 0.5$. 
For sample-data implementation, the control is updated at $50~\rm{Hz}$.

\Cref{fig:unicycle_hocbf_map_input_constraint} 2,500 closed-loop trajectories for $x_0=[\,-1\quad -8.5\quad 0\quad \frac{\pi}{2}\,]^\rmT$, $x_{\rmc 0}=[\,0\quad0\,]^\rmT$,
and randomly distributed goal locations. 
For each case, all safety and input constraints are satisfied along the closed-loop trajectory.
As discussed in \Cref{example:nonholonomic_no_input_const}, increasing the gains $k_1,k_2,k_3$ in the desired control \Cref{eq:u_d_ex,eq:u_d_1_ex,eq:u_d_2_ex,eq:psi} tends to enlarge the set of reachable goal locations.
Similar results where all state and input constraints are satisfied can be obtained with different choices of $\gamma >0$, $\alpha_{j,1},\ldots,\alpha_{j,d_j-2}$, and $\alpha$.
These user-selected values impact the aggressiveness of the control.

\begin{figure}[ht!]
\center{\includegraphics[width=0.49\textwidth,clip=true,trim= 0.in 0.in 0.0in 0.0in] {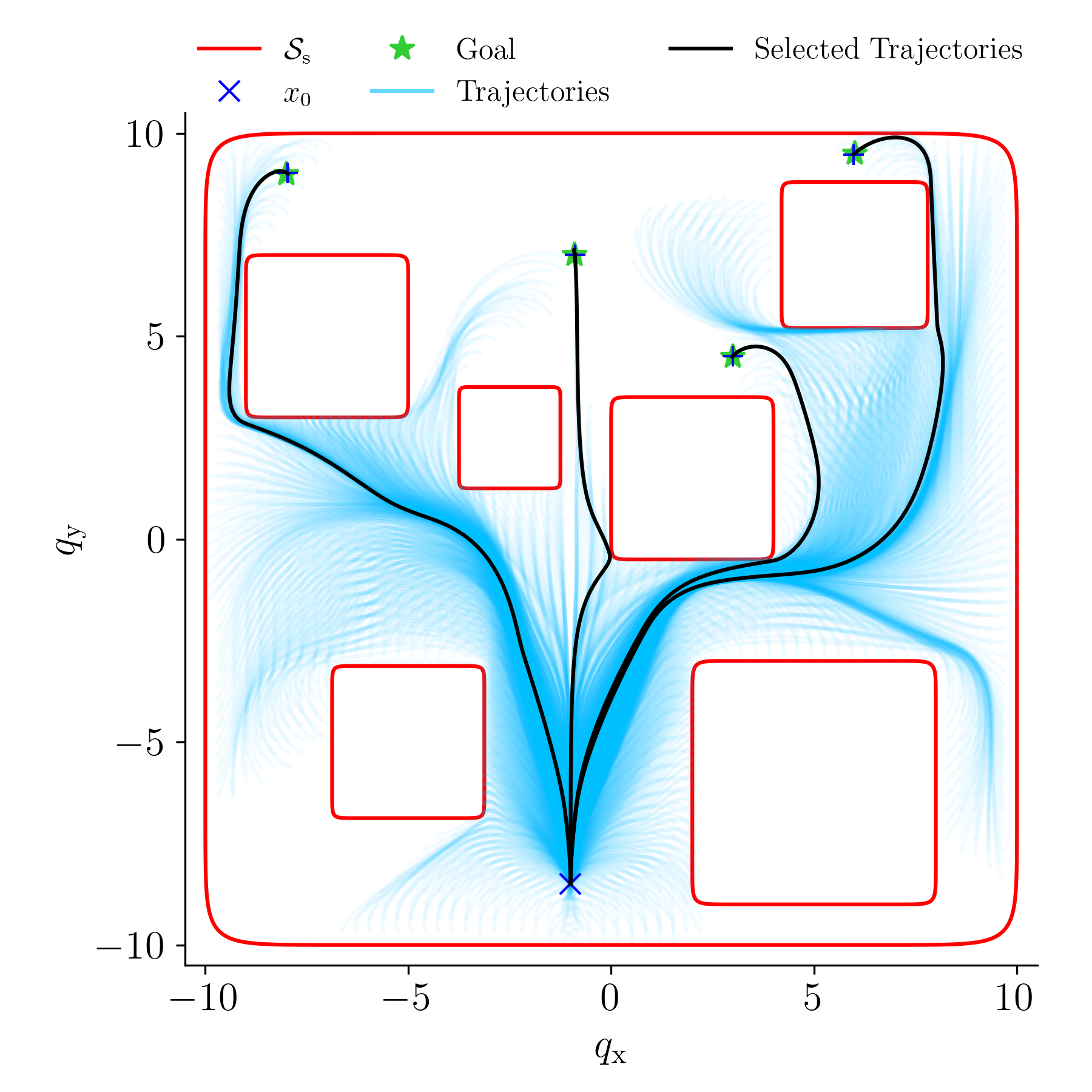}}
\caption{Safe set $\SSS_\rms$, and 2,500 closed-loop trajectories using the control~\Cref{eq:b_def.2,eq:h_def.2,eq:dynamics_control.a,eq:dynamics_control.b,eq:u_d_hat_def,eq:u_d_def,eq:Omega.2,eq:u_sol_ic,eq:lambda_sol_ic}.}\label{fig:unicycle_hocbf_map_input_constraint}
\end{figure}

\Cref{fig:unicycle_hocbf_map_input_constraint} also highlights the closed-loop trajectories for 
4 selected goal locations: $q_{\rmd}=[\,3\quad 4.5\,]^\rmT$, $q_{\rmd}=[\,-8\quad 9\,]^\rmT$, $q_{\rmd}=[\,6\quad 9.5\,]^\rmT$, and $q_{\rmd}=[\,-1\quad 7\,]^\rmT$.
%The robot position converges to the goal location while satisfying safety and input constraints.
Figures~\ref{fig:unicycle_hocbf_states_input_constraint} and~\ref{fig:unicycle_hocbf_h_input_constraint} show the trajectories of the relevant signals for the case where $q_{\rmd}=[\,3 \quad 4.5\,]^\rmT$.
Figure~\ref{fig:unicycle_hocbf_h_input_constraint}
shows that $\hat h$, $\min \hat b_{j,i}$, and $\min \hat h_{j}$ are positive for all time, which implies  that $x$ remains in $\SSS_\rms$ and $u$ remains in $\SU$.
\exampletriangle
\end{example}

\begin{figure}[ht!]
\center{\includegraphics[width=0.49\textwidth,clip=true,trim= 0.in 0.in 0in 0.in] {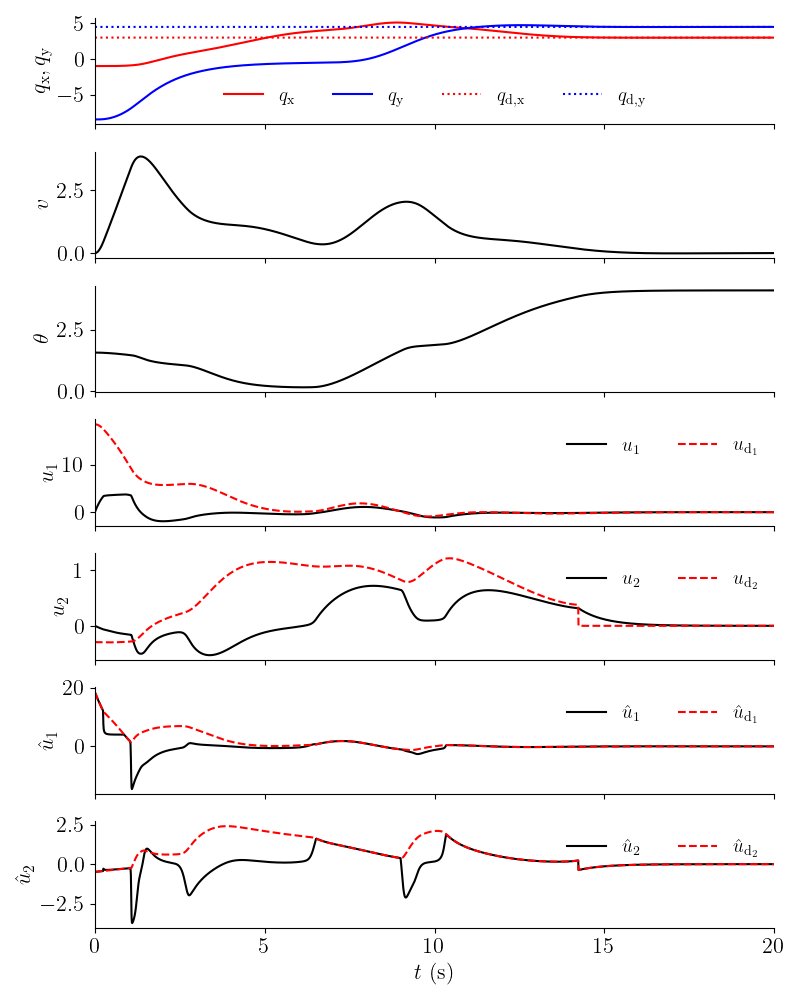}}
\caption{$q_\rmx$, $q_\rmy$, $v$, $\theta$, $u$, $u_\rmd$, $\hat u =    [ \, \hat u_1 \quad
    \hat u_2 \,]^\rmT$, and $\hat u_\rmd =    [ \, \hat u_{\rmd_1} \quad
    \hat u_{\rmd_2} \,]^\rmT$ for $q_{\rmd}=[\,3 \quad 4.5\,]^\rmT$.}\label{fig:unicycle_hocbf_states_input_constraint}
\end{figure}

\begin{figure}[ht!]
\center{\includegraphics[width=0.49\textwidth,clip=true,trim= 0.in 0.in 0.in 0.in] {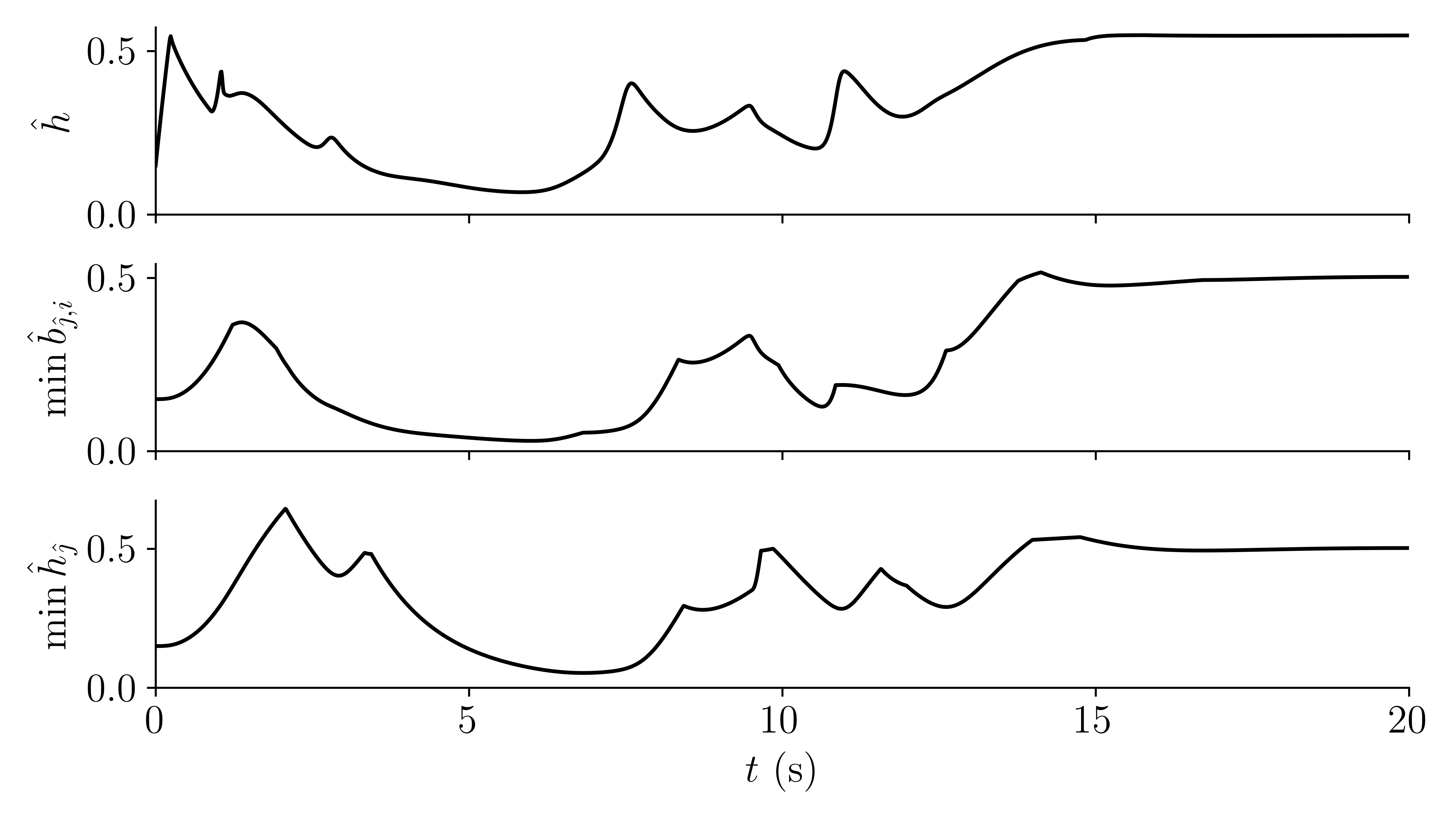}}
\caption{$\hat h$, $\min \hat b_{\hat \jmath,i}$, and $\min \hat h_{\hat \jmath}$ for $q_{\rmd}=[\,3 \quad 4.5\,]^\rmT$.} \label{fig:unicycle_hocbf_h_input_constraint}
\end{figure}

\section{Concluding Remarks}

This article presents several contributions. 
\Cref{section:composite CBF} presents and analyzes a method for constructing a composite soft-minimum R-CBF \eqref{eq:h_def} from multiple CBFs, which can have different relative degrees.
\Cref{prop:cont_fwd_S_C} is the main result of \Cref{section:composite CBF}, and it shows that the soft-minimum R-CBF describes a control forward invariant set $\SSS$, which is a subset of the safe set $\SSS_\rms$.
Next, \Cref{section:control} uses the composite soft-minimum R-CBF \eqref{eq:h_def} in a constrained quadratic optimization to construct a closed-form optimal control that guarantees safety. 
\Cref{thm:smocbf} is the main result of \Cref{section:control}, and this theorem shows that the closed-form optimal control guarantees safety.

\Cref{section:input constraints} presents the main result of this article, which is a closed-form optimal control that not only guarantees safety but also respects input constraints.
The key elements in the development of this control include the introduction of the control dynamics~\Cref{eq:dynamics_control.a,eq:dynamics_control.b} to transform the input constraint into state constraints; the use of the soft-minimum R-CBF to compose safety and input constraints, which have different relative degrees; and the introduction of a desired surrogate control \eqref{eq:u_d_hat_def} and associate surrogate cost \eqref{eq:surrogate cost}.
\Cref{cor:input_const} is the main result, which shows that the closed-form optimal control~\Cref{eq:b_def.2,eq:h_def.2,eq:dynamics_control.a,eq:dynamics_control.b,eq:u_d_hat_def,eq:u_d_def,eq:Omega.2,eq:u_sol_ic,eq:lambda_sol_ic} guarantees both safety- and input-constraint satisfaction.

One limitation of the method in this article is the assumption that $L_{\hat g} \hat h$ is nonzero on $\hat \SB$, which is a subset of the boundary of the zero-superlevel set of the soft-minimum R-CBF $\hat h$. 
In many instances, this condition can be verified numerical.
In addition, \Cref{prop:Lgh} provides a sufficient condition such that this assumption is satisfied.
However, determining easy to verify necessary and sufficient conditions is open for future work.
Moreover, an analysis of the restrictiveness of this assumption as compared to the typical CBF assumption \ref{cond:hocbf.a} is an open question.

The control for safety with input constraints depends on the choice of control dynamics \cref{eq:dynamics_control.a,eq:dynamics_control.b}. 
Safety is guaranteed for any control dynamics that satisfy \ref{cond:hocbf_x_c.a} and \ref{cond:hocbf_x_c.d}; however, the specific choice can impact performance. 
The examples in this article demonstrate that low-pass LTI control dynamics can be effective. 
It is reasonable to design the control dynamics as LTI and low pass or band pass based on the specific control objective (i.e., $u_\rmd$). 
However, other designs of the control dynamics \cref{eq:dynamics_control.a,eq:dynamics_control.b} may yield superior performance with respect to the cost $J$. 
Analyzing the design of \cref{eq:dynamics_control.a,eq:dynamics_control.b} based on the performance objective is an open question. 
Additionally, it may be possible to design \cref{eq:dynamics_control.a,eq:dynamics_control.b} in order to positively impact satisfaction of the assumption that $L_{\hat g} \hat h$ is nonzero on $\hat \SB$. 
This too is open for future work.

The approach to safety and input constraints can be directly extended to address input rate constraints (and constraints on higher-order time derivatives of the input). 
To accomplish this, the control dynamics~\Cref{eq:dynamics_control.a,eq:dynamics_control.b} are designed such that its relative degree $d_\rmc$ is greater than the positive integer $r$, where $\frac{\rmd^{r}}{\rmd t^{r}} u$ is the highest-order time derivative of the control that has a constraint. 
In this case, constraints on $u, \frac{\rmd}{\rmd t} u, \ldots, \frac{\rmd^{r}}{\rmd t^{r}} u$ are transformed into constraints on the controller state $x_\rmc$ using the approach in \Cref{section:input constraints}.

In this work, the soft minimum is used as a smooth approximation of the minimum. 
If $\rho>0$ is small, then the soft minimum is a conservative approximation of the minimum. 
In this case, $\hat \SH$ is a conservative approximation of $\cap_{j=1}^{\hat \ell} \hat \SC_{j,\hat d_j-1}$.
Thus, it is desirable to select $\rho>0$ large.
However, there are practical limits to the magnitude of $\rho$.
If $\rho >0$ is large, then $\| \hat h^\prime(x) \|_2$ is large at points where the minimum function is not differentiable, which can result in the surrogate control $\hat u$ having time derivatives with large magnitudes.  
Thus, selecting $\rho$ is a trade-off between the conservativeness of $\hat h$ and the size of $\| \hat h^\prime(x) \|_2$.

In this work, the log-sum-exponential soft minimum \eqref{eq:softmin} is used to compose multiple state constraints and multiple input constraints into a single constraint. 
In other words, the zero-superlevel set of the soft minimum \eqref{eq:softmin} is an approximation (subset) of the intersection of the zero-superlevel sets of the arguments of the soft minimum.
See \Cref{prop:S}.
Thus, the log-sum-exponential soft minimum can be used to approximate the intersection of zero-superlevel sets. 
Similarly, the log-sum-exponential soft maximum can be used to approximate the union of zero-superlevel sets. 
See \cite{rabiee2023softmax} for more details.

\appendices

\section{Proof of \Cref{prop:softmin_bound,prop:cascade_constraints,prop:cascade,proposition:error}} \label{app:input_constraint_propositions}

\begin{proof}[Proof of Proposition~\ref{prop:softmin_bound}]

Define $z_{\rm min} \triangleq \min \{z_1, z_2, \ldots, z_N\}$. 
Since the exponential is nonnegative and strictly increasing, it follows that 
\begin{equation*}
    e^{-\rho z_{\rm min}} \le \sum_{i=1}^N e^{-\rho z_i} \le N e^{-\rho z_{\rm min}},
\end{equation*}
and taking the logarithm yields
\begin{equation*}
    -\rho z_{\rm min} \le \log \sum_{i=1}^N e^{-\rho z_i} \le \log{N} - \rho z_{\rm min}.
\end{equation*}
Then, dividing by $-\rho$ yields the result.
\end{proof}

\begin{proof}[Proof of \Cref{prop:cascade_constraints}]

Let $t_1 \in I(\hat x_0)$. 
Since $\hat x(t_1) \in \hat \SSS_\rms$, it follows from \eqref{eq:hat Ss} that $x(t_1) \in \SSS_\rms$ and $x_\rmc(t_1) \in \SSS_\rmc$. 
Since $x_\rmc(t_1) \in \SSS_\rmc$, it follows from~\cref{eq:dynamics_control.b,eq:S_c_def} that $u(t_1) \in \SU$.
\end{proof}

\begin{proof}[Proof of \Cref{prop:cascade}]

Let $a\in\{1,2,\ldots, \ell\}$, and it follows from~\ref{cond:hocbf.a} and \ref{cond:hocbf_x_c.a} that~\ref{cond:app.a} in Appendix~\ref{app:appendix_cascade} is satisfied for $x_1$, $x_2$, $f_1$, $g_1$, $f_2$, $g_2$, $\SX_1$, $\SX_2$, $\xi_1$, $\xi_2$, $r_1$, $r_2$ equal to $x$, $x_\rmc$, $f$, $g$, $f_\rmc$, $g_\rmc$, $\BBR^n$, $\BBR^{n_\rmc}$, $h_a$, $h_\rmc$, $d_a$, $d_\rmc$, respectively. 
Thus, it follows from \Cref{thm:appendix} in Appendix~\ref{app:appendix_cascade} that $L_{\hat g} \hat h_a(\hat x) = \cdots =L_{\hat g} L_{\hat f}^{d_a+d_\rmc-2} \hat h_a(\hat x) = 0$ and $L_{\hat g} L_{\hat f}^{d_a+d_\rmc-1} \hat h_a(\hat x) = [ L_{g}L_{f}^{d_a-1}h_a(x)] L_{g_\rmc}L_{f_\rmc}^{d_\rmc - 1} h_\rmc(x_\rmc)$. 
Since, in addition,~\ref{cond:hocbf.a} implies that for all $x \in \SSS_\rms$, $L_{g}L_{f}^{d_a-1}h_a(x) \ne 0$ and \ref{cond:hocbf_x_c.a} implies that for all $x \in \BBR^{n_\rmc}$, $L_{g_\rmc}L_{f_\rmc}^{d_\rmc - 1} h_\rmc(x_\rmc)$ is nonsingular, it follows that for all $\hat x \in \hat \SSS_\rms$, $L_{\hat g} L_{\hat f}^{d_a+d_\rmc-1} \hat h_a(\hat x) \ne 0$, which confirms the result for $\hat \jmath \in\{1,2,\ldots, \ell\}$.

Let $b\in\{\ell+1,\ell+2,\ldots, \hat \ell\}$ and let $i \in \{0, 1, \ldots \zeta-1\}$.
Next, it follows from~\cref{eq:dynamics_aug.2} that
    $L_{\hat f}^{i} \hat h_b (\hat x) = L_{\hat f}^{i} \phi_{b- \ell} (h_\rmc(x_\rmc))
    = L_{\hat f}^{i-1} L_{f_\rmc} \phi_{b- \ell} (h_\rmc(x_\rmc)) = \cdots = L^{i}_{f_\rmc}  \phi_{b- \ell} (h_\rmc(x_\rmc))$, 
which combined with~\eqref{eq:dynamics_aug.3} implies that 
\begin{equation*}
L_{\hat g} L_{\hat f}^{i} \hat h_b(\hat x) = L_{\hat g} L^{i}_{f_\rmc} \phi_{b- \ell} (h_\rmc(x_\rmc))
=L_{g_\rmc} L^{i}_{f_\rmc} \phi_{b- \ell} (h_\rmc(x_\rmc)).
\end{equation*}
Thus, \ref{cond:hocbf_x_c.d} implies that for all $\hat x \in \BBR^{\hat n}$, $L_{\hat g} \hat h_b(\hat x) = \cdots = L_{\hat g} L_{\hat f}^{\zeta-2} \hat h_b(\hat x) = 0$ and for all $\hat x \in \hat \SSS_\rms$, $L_{\hat g} L_{\hat f}^{\zeta-1} \hat h_b(\hat x) \ne 0$, which combined with \eqref{eq:hat d} confirms the result for $\hat \jmath \in\{\ell+1,\ell+2,\ldots, \hat \ell\}$.
\end{proof}

\begin{proof}[Proof of \Cref{proposition:error}]

To prove~\ref{proposition:error.a}, it follows from \ref{cond:hocbf_x_c.a} that the $d_\rmc$th time derivative of $u$ along \cref{eq:dynamics_control.a,eq:dynamics_control.b} is
$u^{(d_\rmc)} = L_{f_\rmc}^{d_\rmc} h_\rmc(x_\rmc) + L_{g_\rmc} L_{f_\rmc}^{d_\rmc - 1}h_\rmc(x_\rmc)\hat u(\hat x)$.
Since, in addition, $\hat u = \hat u_\rmd$ and $\sigma_{d_\rmc}=1$, substituting \eqref{eq:u_d_hat_def} yields
\begin{equation}\label{eq:ud_deriv}
\frac{\rmd^{d_\rmc}}{\rmd t^{d_\rmc}} u = L_{\hat f}^{d_\rmc} u_\rmd(\hat x) +  \sum_{i = 0}^{d_\rmc-1} \sigma_i \Big(L_{\hat f}^{i} u_\rmd(\hat x)-L_{f_\rmc}^i h_\rmc(x_\rmc)\Big).
\end{equation}
Next,~\ref{cond:hocbf.a} and \ref{cond:hocbf_x_c.a} imply that~\ref{cond:app.a} in Appendix~\ref{app:appendix_cascade} is satisfied for $x_1$, $x_2$, $f_1$, $g_1$, $f_2$, $g_2$, $\SX_1$, $\SX_2$, $\xi_1$, $\xi_2$, $r_1$, $r_2$ equal to $x$, $x_\rmc$, $f$, $g$, $f_\rmc$, $g_\rmc$, $\BBR^n$, $\BBR^{n_\rmc}$, $h_{\hat \jmath}$, $h_\rmc$, $d_{\hat \jmath}$, $d_\rmc$, respectively.
Thus, it follows from~\Cref{lem:app_func_x1} in Appendix~\ref{app:appendix_cascade} with $\hat \nu = u_\rmd$ that $L_{\hat g} u_\rmd(\hat x) = L_{\hat g} L_{\hat f} u_\rmd(\hat x) = \cdots = L_{\hat g} L_{\hat f}^{d_\rmc -1 } u_\rmd(\hat x) = 0$.
Thus, for $i \in \{0, 1, \ldots, d_\rmc \}$, the $i$th time derivative of~\eqref{eq:u_d_def} along~\cref{eq:dynamics_aug.1,eq:dynamics_aug.2,eq:dynamics_aug.3} is
\begin{equation}\label{eq:ud_hat_deriv}
\frac{\rmd^i}{\rmd t^i} u_\rmd(\hat x) = L_{\hat f}^i u_\rmd(\hat x).  
\end{equation}
Similarly, for $i \in \{0, 1, \ldots, d_\rmc -1\}$, it follows from \ref{cond:hocbf_x_c.a} that the $i$th time derivative of~\eqref{eq:dynamics_control.b} along~\cref{eq:dynamics_control.a} is
\begin{equation}\label{eq:u_hat_deriv}
  \frac{\rmd^i}{\rmd t^i} u = L_{f_\rmc}^i h_\rmc(x_\rmc).
\end{equation}
Finally, substituting~\cref{eq:ud_hat_deriv,eq:u_hat_deriv} into~\eqref{eq:ud_deriv} yields~\eqref{eq:sum_ue}, which confirms~\ref{proposition:error.a}.

To prove~\ref{proposition:error.b}, since all roots of $\sigma$ are in the open left-hand complex plane, it follows from \eqref{eq:sum_ue} that for all $\hat x_0 \in \BBR^{\hat n}$, $\lim_{t \to \infty} \left [ u(t) - u_\rmd(\hat x(t)) \right ] =0$ exponentially, which confirms~\ref{proposition:error.b}.

To prove~\ref{proposition:error.c}, define $u_\rme \triangleq u - u_\rmd(\hat x)$.
Thus, $u=u_\rme+u_\rmd(\hat x)$, and it follows from \Cref{eq:quad_cost,eq:u_d_def} that
\begin{align*}
J(x, u) 
&= \frac{1}{2}(u_\rme +u_\rmd(\hat x))^\rmT Q(x) (u_\rme +u_\rmd(\hat x)) \nn \\
   &\qquad + c(x)^\rmT (u_\rme +u_\rmd(\hat x))\\
    &= \frac{1}{2}u_\rme^\rmT Q(x) u_\rme + u_\rmd(\hat x)^\rmT Q(x) u_\rme + c(x)^\rmT u_\rme \\
    &\qquad + \frac{1}{2} u_\rmd(\hat x)^\rmT Q(x) u_\rmd(\hat x) + c(x)^\rmT u_\rmd(\hat x)\\
    &= \frac{1}{2}u_\rme^\rmT Q(x) u_\rme + J(x,u_\rmd(\hat x))
\end{align*}
which implies that $J(x,u) - J(x,u_\rmd(\hat x)) =\frac{1}{2} u_\rme^\rmT Q(x) u_\rme$.
Since, in addition, $\lim_{t \to \infty} u_\rme(t) = 0$ and $Q$ is bounded, it follows that $\lim_{t \to \infty} [ J(x(t),u(t)) - J(x(t),u_\rmd(\hat x(t))) ] =0$, which confirms~\ref{proposition:error.c}.

To prove~\ref{proposition:error.d}, since for $i\in\{0, 1, \ldots, d_\rmc-1\}$, $L_{f_\rmc}^i h_\rmc(x_{\rmc 0}) = L_{\hat f}^i u_\rmd(\hat x_0)$, it follows from~\cref{eq:ud_hat_deriv,eq:u_hat_deriv} that for $i\in\{0, 1, \ldots, d_\rmc-1\}$, $  \frac{\rmd^i}{\rmd t^i} u(t) |_{t=0} = \frac{\rmd^i}{\rmd t^i}  u_\rmd(\hat x(t)) |_{t=0}$, which combined with \eqref{eq:sum_ue}, implies that $u(t) \equiv u_\rmd(\hat x(t))$.
\end{proof}

\section{Relative Degree of a Nonlinear Cascade} \label{app:appendix_cascade}

This appendix examines the relative degree of a cascade of nonlinear systems.
The results in this appendix are needed for \Cref{prop:cascade}
and \Cref{proposition:error} in \Cref{section:input constraints}. 
Consider
\begin{align}
    \dot x_1(t) &= f_1(x_1(t)) + g_1(x_1(t))u_1(t), \label{eq:appendix_dyn1}\\
    \dot x_2(t) &= f_2(x_2(t)) + g_2(x_2(t))u_2(t), \label{eq:appendix_dyn2}
\end{align}
where for $i\in \{ 1,2 \}$, $x_i(t) \in \BBR^{n_i}$ is the state, $x_i(0) = x_{i0} \in \BBR^{n_i}$ is the initial condition, and $u_i(t) \in \BBR^{m_i}$ is the input.

Let $\xi_1:\BBR^{n_1} \to \BBR^{l_1}$ and $\xi_2: \BBR^{n_2} \to \BBR^{m_1}$.
We make the following assumption: 
\begin{enumA}[resume]
    \item \label{cond:app.a}
    For $i\in \{ 1,2 \}$, there exists $\SX_i \subseteq \BBR^{n_i}$ and a positive integer $r_i$ such that for all $x_i \in \SX_i$, $L_{g_i} \xi_i(x_i) = L_{g_i} L_{f_i} \xi_i(x_i) = \cdots = L_{g_i} L_{f_i}^{r_i-2} \xi_i(x_i)= 0$.
    %and $L_{g_i} L_{f_i}^{r_i-1}\xi_i(x_i) \neq 0$. 
\end{enumA}
Assumption~\ref{cond:app.a} implies that if $a_i \in \SX_i$ and $L_{g_i} L_{f_i}^{r_i-1}\xi_i(a_i) \ne 0$, then $\xi_i$ has relative degree $r_i$ with respect to $\dot x_i = f_i(x_i) + g_i(x_i) u_i$ at $a_i$.

Next, we consider the cascade of \eqref{eq:appendix_dyn1} and \eqref{eq:appendix_dyn2}, where $u_1 = \xi_2(x_2)$, which is given by 
\begin{align}
    \dot{\hat x} &= \hat f(\hat x) + \hat g(\hat x) u_2,\label{eq:cascase_appendix.a}\\
    y &= \hat h(\hat x),\label{eq:cascase_appendix.b}
\end{align}
where 
\begin{gather}
\hat x \triangleq \begin{bmatrix} x_1\\ x_2\end{bmatrix},
\qquad 
\hat f(\hat x) \triangleq \begin{bmatrix}
            f_1(x_1) + g_1(x_1)\xi_2(x_2)\\
            f_2(x_2)
        \end{bmatrix}, \label{eq:cascade_defs.a}\\ 
\hat g(\hat x) \triangleq \begin{bmatrix}
            0\\
            g_2(x_2)
        \end{bmatrix}, \label{eq:cascade_defs.b}
\qquad 
\hat h(\hat x) \triangleq \xi_1(x_1).
\end{gather}

The following preliminary results are needed.

\begin{lemma}\label{lem:Lgh_upto_d1}
\rm{
Consider~\cref{eq:cascase_appendix.a,eq:cascase_appendix.b,eq:cascade_defs.a,eq:cascade_defs.b,eq:appendix_dyn1,eq:appendix_dyn2}, where~\ref{cond:app.a} is satisfied.
For all $\hat x \in \SX_1 \times \SX_2$, the following statements hold:
\begin{enumalph}
\item \label{app.lem.1.1}
Let $j \in \{0,1,\ldots, r_1-1\}$. 
Then, $L_{\hat f}^j \hat h(\hat x) = L_{f_1}^j \xi_1(x_1)$. 

\item \label{app.lem.1.2}
Let $j \in \{0,1,\ldots, r_1-2\}$. 
Then, $L_{\hat g} L_{\hat f}^j \hat h(\hat x) = 0$.
\end{enumalph}
}
\end{lemma}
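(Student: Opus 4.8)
The plan is to exploit the block structure of $\hat f$, $\hat g$, and $\hat h$ in~\cref{eq:cascade_defs.a,eq:cascade_defs.b}. The crucial observation is that $\hat h(\hat x) = \xi_1(x_1)$ depends on $x_1$ alone, whereas $\hat g$ has a zero first block. I would first record how $L_{\hat f}$ and $L_{\hat g}$ act on functions of $x_1$ only, and then push both claims through by induction.

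First I would establish two identities. For any continuously differentiable $\eta \colon \BBR^{n_1} \to \BBR$, regarded as a function on $\BBR^{n_1+n_2}$ through $\hat x \mapsto \eta(x_1)$, its Jacobian is $\eta^\prime(\hat x) = \begin{bmatrix} \eta^\prime(x_1) & 0 \end{bmatrix}$, so \eqref{eq:cascade_defs.a} gives
\[
L_{\hat f}\eta(\hat x) = L_{f_1}\eta(x_1) + \bigl[L_{g_1}\eta(x_1)\bigr]\xi_2(x_2),
\]
and \eqref{eq:cascade_defs.b} gives $L_{\hat g}\eta(\hat x) = 0$ because the first block of $\hat g$ vanishes. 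These two facts are the engine of the whole argument.

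To prove~\ref{app.lem.1.1}, I would induct on $j$. The base case $j = 0$ is immediate since $L_{\hat f}^0 \hat h(\hat x) = \xi_1(x_1) = L_{f_1}^0 \xi_1(x_1)$. For the inductive step, suppose $L_{\hat f}^j \hat h(\hat x) = L_{f_1}^j \xi_1(x_1)$ for some $j \le r_1 - 2$. As this is a function of $x_1$ alone, the first identity with $\eta = L_{f_1}^j \xi_1$ yields
\[
L_{\hat f}^{j+1}\hat h(\hat x) = L_{f_1}^{j+1}\xi_1(x_1) + \bigl[L_{g_1}L_{f_1}^j \xi_1(x_1)\bigr]\xi_2(x_2),
\]
and \ref{cond:app.a} forces $L_{g_1}L_{f_1}^j\xi_1(x_1) = 0$ for $x_1 \in \SX_1$ and $j \le r_1 - 2$, killing the second term and advancing the claim to $j+1 \le r_1 - 1$.

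Finally,~\ref{app.lem.1.2} follows at once: for $j \le r_1 - 2$, part~\ref{app.lem.1.1} gives $L_{\hat f}^j \hat h(\hat x) = L_{f_1}^j \xi_1(x_1)$, a function of $x_1$ only, so the second identity yields $L_{\hat g}L_{\hat f}^j\hat h(\hat x) = 0$. I do not anticipate a genuine obstacle here; the one point requiring care is the index bookkeeping, namely that the induction in~\ref{app.lem.1.1} may only advance while $L_{g_1}L_{f_1}^j\xi_1$ vanishes (i.e.\ for $j \le r_1 - 2$), which is precisely why~\ref{app.lem.1.1} reaches only $j = r_1 - 1$ and~\ref{app.lem.1.2} is asserted only up to $j = r_1 - 2$.
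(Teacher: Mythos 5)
Your proposal is correct and follows essentially the same route as the paper's proof: induction on $j$ exploiting the block structure of $\hat f$ in \eqref{eq:cascade_defs.a} together with \ref{cond:app.a} to kill the $[L_{g_1}L_{f_1}^{j}\xi_1(x_1)]\xi_2(x_2)$ term for part \ref{app.lem.1.1}, and the vanishing first block of $\hat g$ in \eqref{eq:cascade_defs.b} to obtain part \ref{app.lem.1.2} as an immediate consequence. The only cosmetic difference is that you state the two structural identities for functions of $x_1$ alone up front (and write them for scalar-valued $\eta$, though they hold verbatim for vector-valued maps under the paper's Lie-derivative convention), whereas the paper performs the same computation inline.
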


\begin{proof}

To prove \ref{app.lem.1.1}, we use induction on $j$. 
First, note that $L_{\hat f}^0 \hat h(\hat x) = \hat h(\hat x) = \xi_1(x_1) = L_{f_1}^0 \xi_1(x_1)$, which implies that \ref{app.lem.1.1} holds for $j=0$. 
Next, assume that \ref{app.lem.1.1} holds for $j=a\in \{0,1,\ldots,r_1-2\}$. 
Thus, 
\begin{equation*}
L_{\hat f}^{a+1} \hat h(\hat x) = L_{\hat f} L_{\hat f}^{a} \hat h(\hat x)= L_{\hat f} L_{f_1}^{a} \xi_1(x_1),    
\end{equation*}
which combined with \eqref{eq:cascade_defs.a} and \ref{cond:app.a} yields
\begin{align*}
L_{\hat f}^{a+1} \hat h(\hat x) 
&= L_{f_1}^{a+1} \xi_1(x_1) + [ L_{g_1}L_{f_1}^{a} \xi_1(x_1) ]\xi_2(x_2)\\ 
&=L_{f_1}^{a+1} \xi_1(x_1),
\end{align*}
which confirms \ref{app.lem.1.1}.

To prove \ref{app.lem.1.2}, let $b \in \{0,1,\ldots,r_1-2\}$, and it follows from \ref{app.lem.1.1} that $L_{\hat g} L_{\hat f}^{b} \hat h(\hat x) = L_{\hat g} L_{f_1}^{b} \xi_1(x_1) = 0$.
\end{proof}

Let $\nu: \BBR^{n_1} \to \BBR^{\ell_1}$ be continuously differentiable, and let $\hat \nu:\BBR^{n_1 + n_2} \to \BBR^{\ell_1}$ be defined by $\hat \nu(\hat x) \triangleq \nu(x_1)$.

\begin{lemma}\label{lem:Lf_struct}
\rm{
Let $j$ be a positive integer. 
Then, there exists $F_j:\BBR^{n_1\times m^{j-1}} \to \BBR^{\ell_1}$ such that for all $\hat x \in \SX_1 \times \SX_2$,
\begin{align}\label{lem:rec_Lf}
L_{\hat f}^j\hat \nu(\hat x) &= F_j \left (x_1, \xi_2(x_2), L_{f_2} \xi_2(x_2),\ldots, L_{f_2}^{j-2} \xi_2(x_2) \right )\nn\\
&\qquad +  \left [ L_{g_1}\nu(x_1) \right ] L_{f_2}^{j-1} \xi_2(x_2).  
\end{align}
}
\end{lemma}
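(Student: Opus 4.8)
The plan is to establish \eqref{lem:rec_Lf} by induction on $j$, exploiting the block-triangular structure of the cascade field $\hat f$ in \eqref{eq:cascade_defs.a}. The recurring tool is the decomposition, valid for any continuously differentiable $\Phi$,
\begin{equation*}
L_{\hat f} \Phi(\hat x) = \pderiv{\Phi(\hat x)}{x_1} \left [ f_1(x_1) + g_1(x_1) \xi_2(x_2) \right ] + \pderiv{\Phi(\hat x)}{x_2} f_2(x_2),
\end{equation*}
which I regard as an $x_1$-part (Lie differentiation along $f_1 + g_1\xi_2$) plus an $x_2$-part (Lie differentiation along $f_2$). The bookkeeping quantity throughout is the \emph{order} $i$ of each factor $L_{f_2}^i \xi_2(x_2)$ that a term carries. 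For the base case $j = 1$, since $\hat\nu(\hat x) = \nu(x_1)$ does not depend on $x_2$, the decomposition gives $L_{\hat f}\hat\nu(\hat x) = L_{f_1}\nu(x_1) + [L_{g_1}\nu(x_1)]\xi_2(x_2)$, which is \eqref{lem:rec_Lf} with $F_1(x_1) \triangleq L_{f_1}\nu(x_1)$ (no $\xi_2$-derivative arguments appear) and leading factor $L_{f_2}^0\xi_2(x_2) = \xi_2(x_2)$.

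For the inductive step I would assume \eqref{lem:rec_Lf} for $j = k$ and apply $L_{\hat f}$ to both sides, tracking orders term by term. The $x_2$-part raises the order of each factor through $L_{f_2}[L_{f_2}^i\xi_2(x_2)] = L_{f_2}^{i+1}\xi_2(x_2)$, whereas the $x_1$-part leaves all such orders unchanged, since the factors $L_{f_2}^i\xi_2(x_2)$ are functions of $x_2$ alone and $x_2$ enters $\hat f$ only through $\xi_2(x_2)$, i.e.\ at order $0$. Consequently, the leading term $[L_{g_1}\nu(x_1)]L_{f_2}^{k-1}\xi_2(x_2)$ contributes, through its $x_2$-part, precisely $[L_{g_1}\nu(x_1)]L_{f_2}^{k}\xi_2(x_2)$, which is the required leading term for $j = k+1$; its $x_1$-part, together with $L_{\hat f}$ applied to the $F_k(\cdot)$ term, involves only $x_1$ and factors of order at most $k-1$. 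I would then collect every contribution except the leading term into a single function $F_{k+1}\left ( x_1, \xi_2(x_2),\ldots, L_{f_2}^{k-1}\xi_2(x_2) \right )$, which yields \eqref{lem:rec_Lf} for $j = k+1$ and closes the induction.

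The hard part will be the order bookkeeping in the inductive step, which has two facets. First, I must verify that the order-$k$ factor $L_{f_2}^{k}\xi_2$ is produced \emph{only} by the $x_2$-differentiation of the leading term, so that it appears with exactly the coefficient $L_{g_1}\nu(x_1)$; this uses that $F_k$ carries factors of order at most $k-2$, so that $x_2$-differentiation of $F_k$ reaches only order $k-1$. Second, I must confirm that the $x_2$-dependence of every absorbed contribution factors through $\xi_2(x_2),\ldots,L_{f_2}^{k-1}\xi_2(x_2)$, so that these contributions genuinely assemble into a function $F_{k+1}$ of those quantities treated as independent arguments. Both facets rest on the same structural fact, namely that the $x_1$-part of $L_{\hat f}$ never raises the order of a $\xi_2$-derivative, which follows directly from the block-triangular form of $\hat f$ in \eqref{eq:cascade_defs.a}.
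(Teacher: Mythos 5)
Your proposal is correct and takes essentially the same route as the paper's proof: induction on $j$ with the identical base case, the same decomposition of $L_{\hat f}$ into an $x_1$-part and an $x_2$-part, and the same key observation that only the $x_2$-differentiation of the leading term $\left [ L_{g_1}\nu(x_1) \right ] L_{f_2}^{k-1}\xi_2(x_2)$ produces the order-$k$ factor, everything else being absorbed into $F_{k+1}$. The paper merely writes out the $x_1$-differentiation of the matrix factor $L_{g_1}\nu(x_1)$ explicitly via a Kronecker-product expression, a detail your order-bookkeeping argument handles implicitly.
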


\begin{proof}

We use induction on $j$. 
First,~\eqref{eq:cascade_defs.a} implies that $L_{\hat f} \hat \nu (\hat x) = L_{\hat f} \nu(x_1) = L_{f_1} \nu(x_1) + [ L_{g_1} \nu(x_1) ] \xi_2(x_2) = F_1(x_1) + [ L_{g_1} \nu(x_1) ] \xi_2(x_2)$, where $F_1(x_1) = L_{f_1} \nu(x_1)$, which confirms \eqref{lem:rec_Lf} for $j = 1$.

Next, assume that \eqref{lem:rec_Lf} holds for $j = a  \in \{1, 2, \ldots\}$.
Thus, 
\begin{align}
    L_{\hat f}^{a+1}\hat \nu(\hat x) &= L_{\hat f} L_{\hat f}^{a}\hat \nu(\hat x) \nn\\
    &= L_{\hat f} F_{a} + L_{\hat f} \left[L_{g_1}\nu(x_1) L_{f_2}^{a-1} \xi_2(x_2)\right], \label{lem:Lf_struct_1}
\end{align}
where the arguments of $F_{a}$ are omitted.
Next, note that it follows from \eqref{eq:cascade_defs.a} that 
\begin{align}
    L_{\hat f} F_{a} &= \pderiv{F_{a}}{x_1} \left [f_1(x_1) + g_1(x_1)\xi_2(x_2)\right]\nn\\
    &\qquad + \sum_{k=0}^{a-2} \pderiv{F_a}{L_{f_2}^k \xi_2} \pderiv{L_{f_2}^k \xi_2(x_2)}{x_2} f_2(x_2)\nn\\
    &= L_{f_1} F_{a} + \left [ L_{g_1} F_{a} \right ]\xi_2(x_2)+ \sum_{k=0}^{a-2} \pderiv{F_j}{L_{f_2}^k \xi_2} L_{f_2}^{k+1}\xi(x_2).\label{lem:Lf_struct_2}
\end{align}
and
\begin{align}\label{lem:Lf_struct_3}
L_{\hat f}\left[L_{g_1}\nu(x_1) L_{f_2}^{a-1} \xi_2(x_2)\right] 
&= \left ( I_{\ell_1} \otimes  L_{f_2}^{a -1}\xi_2(x_2) \right )^\rmT G(x_1) \nn \\
& \qquad \times \left ( f_1(x_1) + g_1(x_1) \xi_2(x_2) \right )\nn\\
&\qquad+ L_{g_1} \nu(x_1) L_{f_2}^{a} \xi_2(x_2),
\end{align}
where $\otimes$ is the Kronecker product, 
\begin{equation*}
    G(x_1) \triangleq \begin{bmatrix}
        \pderiv{}{x_1}  \left[L_{g_1}\nu(x_1)\right]_{(1)} ^\rmT\\
        \vdots\\
        \pderiv{}{x_1} \left[L_{g_1}\nu(x_1)\right]_{(\ell_1)}^\rmT\\
    \end{bmatrix},
\end{equation*}
and $\left[L_{g_1}\nu(x_1)\right]_{(i)}$ is the $i$th row of $L_{g_1}\nu(x_1)$.
Thus, substituting~\cref{lem:Lf_struct_2,lem:Lf_struct_3} into \eqref{lem:Lf_struct_1} yields
\begin{equation*}
L_{\hat f}^{a+1} \hat \nu(\hat x) = F_{a+1}+  L_{g_1}\nu(x_1) L_{f_2}^{a} \xi_2(x_2),  
\end{equation*}
where
\begin{align*}
F_{a+1} &= L_{f_1} F_{a} + \left [ L_{g_1} F_{a} \right ]\xi_2(x_2)+ \sum_{k=0}^{a-2} \pderiv{F_j}{L_{f_2}^k \xi_2} L_{f_2}^{k+1}\xi(x_2)\\
&\qquad +\left ( I_{\ell_1} \otimes  L_{f_2}^{a -1}\xi_2(x_2) \right )^\rmT G(x_1)\\
& \qquad \times \left ( f_1(x_1) + g_1(x_1) \xi_2(x_2) \right ),
\end{align*}
which confirms \eqref{lem:rec_Lf} for $j=a+1$.
\end{proof}

\begin{lemma} \label{lem:app_func_x1}
\rm{
Consider~\cref{eq:appendix_dyn1,eq:appendix_dyn2,eq:cascase_appendix.a,eq:cascase_appendix.b,eq:cascade_defs.a,eq:cascade_defs.b}, where~\ref{cond:app.a} is satisfied. For all $\hat x\in \SX_1 \times \SX_2$, the following statements hold:
\begin{enumalph}

\item \label{app_func_x1.1}
For $j\in\{0,1,\ldots,r_2 -1\}$, $L_{\hat g} L_{\hat f}^{j} \hat \nu(\hat x) = 0$. 

\item \label{app_func_x1.2}
$L_{\hat g} L_{\hat f}^{r_2} \hat \nu (\hat x)= \left [ L_{g_1} \nu(x_1) \right ] L_{g_2} L_{f_2}^{r_2 -1} \xi_2(x_2)$.
\end{enumalph}
}
\end{lemma}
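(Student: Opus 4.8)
The plan is to prove both parts at once by substituting the structured expression for $L_{\hat f}^j \hat\nu$ from \Cref{lem:Lf_struct} into $L_{\hat g}$ and tracking which terms survive. The single observation I would establish first is that, because $\hat g(\hat x) = \left[\begin{smallmatrix} 0 \\ g_2(x_2)\end{smallmatrix}\right]$ has a vanishing first block, the operator $L_{\hat g}$ annihilates all dependence on $x_1$ and reduces to a Lie derivative along $g_2$ acting on the $x_2$-dependence only; that is, $L_{\hat g}\eta(\hat x) = \frac{\partial \eta}{\partial x_2} g_2(x_2)$ for any smooth $\eta$. In particular the case $j=0$ of part~\ref{app_func_x1.1} is immediate, since $L_{\hat g}\hat\nu(\hat x) = L_{\hat g}\nu(x_1) = 0$ as $\nu$ depends on $x_1$ only.

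For $j \ge 1$ I would invoke \Cref{lem:Lf_struct} to write $L_{\hat f}^j\hat\nu(\hat x) = F_j\left(x_1, \xi_2(x_2),\ldots,L_{f_2}^{j-2}\xi_2(x_2)\right) + \left[L_{g_1}\nu(x_1)\right] L_{f_2}^{j-1}\xi_2(x_2)$ and then apply $L_{\hat g}$. The dependence of $F_j$ on $x_2$ enters only through the composite functions $L_{f_2}^k\xi_2(x_2)$ with $k \in \{0,\ldots,j-2\}$, so the chain rule yields $L_{g_2}F_j = \sum_{k=0}^{j-2} \frac{\partial F_j}{\partial (L_{f_2}^k\xi_2)}\, L_{g_2}L_{f_2}^k\xi_2(x_2)$, while the second term contributes $\left[L_{g_1}\nu(x_1)\right] L_{g_2}L_{f_2}^{j-1}\xi_2(x_2)$ because the factor $L_{g_1}\nu(x_1)$ is independent of $x_2$.

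The proof then reduces to index bookkeeping against \ref{cond:app.a}, which guarantees $L_{g_2}L_{f_2}^k\xi_2(x_2)=0$ for all $k\le r_2-2$ on $\SX_2$. For part~\ref{app_func_x1.1} with $1\le j\le r_2-1$, every index appearing in $L_{g_2}F_j$ satisfies $k\le j-2\le r_2-3\le r_2-2$, and the second term has index $j-1\le r_2-2$, so all contributions vanish and $L_{\hat g}L_{\hat f}^j\hat\nu(\hat x)=0$. For part~\ref{app_func_x1.2} with $j=r_2$, the $F_{r_2}$ contribution still vanishes since its indices remain $k\le r_2-2$, but the second term now carries index $r_2-1$, for which $L_{g_2}L_{f_2}^{r_2-1}\xi_2$ need not be zero; what survives is exactly $\left[L_{g_1}\nu(x_1)\right] L_{g_2}L_{f_2}^{r_2-1}\xi_2(x_2)$, as claimed.

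I expect the only delicate step to be justifying the chain-rule expansion of $L_{g_2}F_j$ --- specifically, confirming that the dependence of $F_j$ on $x_2$ is genuinely confined to the listed arguments $\xi_2(x_2),\ldots,L_{f_2}^{j-2}\xi_2(x_2)$, so that no stray $x_2$-derivative escapes the annihilation. This is precisely the content already encoded in the functional form delivered by \Cref{lem:Lf_struct}, so once that form is in hand everything else is arithmetic on indices against \ref{cond:app.a}.
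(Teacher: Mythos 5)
Your proposal is correct and follows essentially the same route as the paper's own proof: both substitute the structural decomposition of $L_{\hat f}^j \hat \nu$ from \Cref{lem:Lf_struct} into $L_{\hat g}$, use the zero first block of $\hat g$ so that only $x_2$-dependence survives, and then invoke \ref{cond:app.a} to annihilate every term with index at most $r_2-2$, leaving exactly $\left[ L_{g_1}\nu(x_1)\right] L_{g_2}L_{f_2}^{r_2-1}\xi_2(x_2)$ in the $j=r_2$ case. Your explicit chain-rule expansion of $L_{\hat g}F_j$ and separate handling of $j=0$ (which \Cref{lem:Lf_struct}, stated only for positive $j$, does not cover) merely spell out details the paper leaves implicit.
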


\begin{proof}

It follows from Lemma \ref{lem:Lf_struct} that for all positive integers $a$, 
\begin{align}\label{eq:app_func_x1.int1}
L_{\hat g} L_{\hat f}^a \hat{\nu}(\hat x) &= L_{\hat g} F_a(x_1, L_{f_2}^0 \xi_2(x_2), \ldots, L_{f_2}^{a-2} \xi_2(x_2)) \nn\\
&\qquad + L_{\hat g}\left( \left [ L_{g_1}\nu(x_1) \right ] L_{f_2}^{a-1} \xi_2(x_2)\right).
\end{align}

To prove \ref{app_func_x1.1}, let $j \in \{0,1,\ldots,r_2-1\}$. 
Note that~\eqref{eq:cascade_defs.b} and~\ref{cond:app.a} imply that $L_{\hat g} F_j(x_1, \xi_2(x_2), \ldots, L_{f_2}^{j-2} \xi_2(x_2)) =0$ and $L_{\hat g}(  [ L_{g_1}\nu(x_1) ] L_{f_2}^{j-1} \xi_2(x_2)) =0$, which together with \eqref{eq:app_func_x1.int1} confirms \ref{app_func_x1.1}.

To prove \ref{app_func_x1.2}, it follows from~\eqref{eq:cascade_defs.b} and~\ref{cond:app.a} that $L_{\hat g} F_{r_2}(x_1, \xi_2(x_2), \ldots, L_{f_2}^{r_2-2} \xi_2(x_2)) =0$ and $L_{\hat g}( [ L_{g_1}\nu(x_1) ] L_{f_2}^{r_2-1} \xi_2(x_2)) =[ L_{g_1}\nu(x_1) ] L_{g_2} L_{f_2}^{r_2-1} \xi_2(x_2)$, which together with \eqref{eq:app_func_x1.int1} confirms \ref{app_func_x1.1}.
\end{proof}

The next theorem is the main result on the relative degree of a cascade of nonlinear systems.
The result shows that if $\xi_i$ has relative degree $r_i$ with respect to $\dot x_i = f_i(x_i) + g_i(x_i) u_i$ at $a_i \in \SX_i$, then the relative degree of the cascade~\cref{eq:cascase_appendix.a,eq:cascase_appendix.b,eq:cascade_defs.a,eq:cascade_defs.b} is greater than or equal to $r_1+r_2$.
Furthermore, the relative degree of the cascade~\cref{eq:cascase_appendix.a,eq:cascase_appendix.b,eq:cascade_defs.a,eq:cascade_defs.b} is $r_1+r_2$ if and only if $[L_{g_1}L_{f_1}^{r_1-1}h_1(a_1)]L_{g_2}L_{f_2}^{r_2 - 1}h_2(a_2)$ is nonzero.

\begin{theorem}\label{thm:appendix}
\rm{
Consider~\cref{eq:cascase_appendix.a,eq:cascase_appendix.b,eq:cascade_defs.a,eq:cascade_defs.b,eq:appendix_dyn1,eq:appendix_dyn2}, where~\ref{cond:app.a} is satisfied. 
Then, for all $\hat x \in \SX_1 \times \SX_2$, the following statements hold:
\begin{enumalph}
    \item \label{thm:appendix.a}
    For all $j \in \{0, 1, \ldots, r_1 +r_2-2\}$, $L_{\hat g} L_{\hat f}^{j} \hat h(\hat x) = 0$.

    \item \label{thm:appendix.b}
    $L_{\hat g} L_{\hat f}^{r_1+r_2-1} \hat h(\hat x) = [ L_{g_1}L_{f_1}^{r_1-1}\xi_1(x_1)] L_{g_2}L_{f_2}^{r_2 - 1}\xi_2(x_2)$. 
\end{enumalph}
}
\end{theorem}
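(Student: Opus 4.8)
The plan is to assemble the three preliminary lemmas via a single change of variable that reduces everything to the behavior of a function of $x_1$ alone. The key observation is that \ref{app.lem.1.1} of \Cref{lem:Lgh_upto_d1}, taken at $j = r_1-1$, gives $L_{\hat f}^{r_1-1}\hat h(\hat x) = L_{f_1}^{r_1-1}\xi_1(x_1)$, which depends on $x_1$ only. I would therefore define $\nu(x_1) \triangleq L_{f_1}^{r_1-1}\xi_1(x_1)$, so that its lift $\hat\nu(\hat x) \triangleq \nu(x_1)$ satisfies $\hat\nu = L_{\hat f}^{r_1-1}\hat h$, and then feed this particular $\nu$ into \Cref{lem:app_func_x1}. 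In effect, once the Lie derivatives of $\hat h$ have climbed to order $r_1-1$, the result is $x_1$-dependent only, and \Cref{lem:app_func_x1} governs how the remaining derivatives behave.

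For part \ref{thm:appendix.a}, I would split the index range $\{0,1,\ldots,r_1+r_2-2\}$ at the crossover $j = r_1-1$. For $j \in \{0,1,\ldots,r_1-2\}$, the conclusion $L_{\hat g} L_{\hat f}^j \hat h(\hat x) = 0$ is exactly \ref{app.lem.1.2} of \Cref{lem:Lgh_upto_d1}. For $j \in \{r_1-1,\ldots,r_1+r_2-2\}$, I would write $j = (r_1-1)+k$ with $k \in \{0,1,\ldots,r_2-1\}$ and compute $L_{\hat g} L_{\hat f}^j \hat h(\hat x) = L_{\hat g} L_{\hat f}^k \big(L_{\hat f}^{r_1-1}\hat h\big)(\hat x) = L_{\hat g} L_{\hat f}^k \hat\nu(\hat x) = 0$, where the last equality is \ref{app_func_x1.1} of \Cref{lem:app_func_x1}. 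Since the two index ranges partition $\{0,\ldots,r_1+r_2-2\}$ with the boundary point $j=r_1-1$ falling in the second range, this exhausts all cases.

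For part \ref{thm:appendix.b}, I would take $j = r_1+r_2-1 = (r_1-1)+r_2$ and again peel off the first $r_1-1$ Lie derivatives through $\hat\nu = L_{\hat f}^{r_1-1}\hat h$, obtaining $L_{\hat g} L_{\hat f}^{r_1+r_2-1}\hat h(\hat x) = L_{\hat g} L_{\hat f}^{r_2}\hat\nu(\hat x)$. Then \ref{app_func_x1.2} of \Cref{lem:app_func_x1} gives $L_{\hat g} L_{\hat f}^{r_2}\hat\nu(\hat x) = [L_{g_1}\nu(x_1)]\, L_{g_2} L_{f_2}^{r_2-1}\xi_2(x_2)$, and substituting $\nu(x_1) = L_{f_1}^{r_1-1}\xi_1(x_1)$, so that $L_{g_1}\nu(x_1) = L_{g_1} L_{f_1}^{r_1-1}\xi_1(x_1)$, yields the claimed factored formula.

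The bulk of the technical content already lives in the three lemmas, so the theorem proof is essentially bookkeeping: the main thing to get right is the index arithmetic and ensuring the crossover $j=r_1-1$ is treated consistently by both \Cref{lem:Lgh_upto_d1} and the substitution (the degenerate case $r_1=1$, in which the first range is empty and $\nu=\xi_1$, should be checked to behave correctly). I would also note at the outset that $\nu = L_{f_1}^{r_1-1}\xi_1$ inherits the smoothness assumed throughout the paper, which is what licenses the application of \Cref{lem:app_func_x1} to it.
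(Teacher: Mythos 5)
Your proposal is correct and follows essentially the same route as the paper's own proof: the same definition $\nu(x_1) \triangleq L_{f_1}^{r_1-1}\xi_1(x_1)$ with lift $\hat\nu(\hat x) = \nu(x_1)$, the same split of the index range at $j = r_1-1$ (handling $j \le r_1-2$ by \Cref{lem:Lgh_upto_d1} and $j = (r_1-1)+k$, $k \in \{0,\ldots,r_2-1\}$, by \ref{app_func_x1.1} of \Cref{lem:app_func_x1}), and the same use of \ref{app_func_x1.2} of \Cref{lem:app_func_x1} followed by substitution of $\nu$ for part \ref{thm:appendix.b}. The only difference is cosmetic bookkeeping (the paper writes $b = a - r_1 + 1$ where you write $j = (r_1-1)+k$), so there is nothing to add.
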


\begin{proof}

Define $\nu(x_1) \triangleq L_{f_1}^{r_1-1}\xi_1(x_1)$ and $\hat \nu(\hat x) \triangleq \nu(x_1)$.

To prove \ref{thm:appendix.a}, it follows from Lemma \ref{lem:Lgh_upto_d1} that for all $j \in \{0,\ldots, r_1-2\}$, $L_{\hat g} L_{\hat f}^{j} \hat h(\hat x) = 0$. 

Next, let $a \in \{r_1-1,r_1,\ldots,r_1 + r_2-2\}$ and define $b \triangleq a - r_1 +1$. 
Thus, $L_{\hat g}L_{\hat f}^a \hat h(\hat x) = L_{\hat g}L_{\hat f}^b L_{\hat f}^{r_1-1} \hat h(\hat x) = L_{\hat g} L_{\hat f} ^b \hat \nu(\hat x)$.
Since, in addition, $b \in \{0,\ldots,r_2-1 \}$, it follows from Lemma \ref{lem:app_func_x1} that $L_{\hat g} L_{\hat f}^a \hat h(\hat x) = 0$, which implies that for all $j \in \{r_1-1,r_1,\ldots,r_1 + r_2-2\}$, $L_{\hat g} L_{\hat f}^{j} \hat h(\hat x) = 0$.

To prove \ref{thm:appendix.b}, note that $L_{\hat g} L_{\hat f}^{r_1+r_2-1} \hat h(\hat x) = L_{\hat g} L_{\hat f}^{r_2} L_{\hat f}^{r_1-1} \hat h(\hat x) = L_{\hat g} L_{\hat f}^{r_2} \hat \nu(\hat x)$.
Thus,~\Cref{lem:app_func_x1} implies that $L_{\hat g} L_{\hat f}^{r_1+r_2-1} \hat h(\hat x) = [L_{g_1} \nu(x_1)] L_{g_2} L_{f_2}^{r_2 -1} \xi_2(x_2) = [L_{g_1} L_{f_1}^{r_1-1}\xi_1(x_1)] L_{g_2} L_{f_2}^{r_2 -1} \xi_2(x_2)$. 
\end{proof}

 \bibliographystyle{IEEEtran} 
 \bibliography{softmin_hocbf}
 
\end{document}